\tikzset{->-/.style={decoration={
  markings,
  mark=at position #1 with {\arrow{>}}},postaction={decorate}}}
\tikzset{->-/.default=0.5}
\pgfplotsset{compat=1.10}
\newtheorem{thm}{Theorem}[section]
\newtheorem{prop}[thm]{Proposition}
\newtheorem{lem}[thm]{Lemma}
\newtheorem{cor}[thm]{Corollary}
\theoremstyle{definition}
\newtheorem{dfn}[thm]{Definition}
\theoremstyle{remark}
\newtheorem{rem}[thm]{Remark}
\newtheorem{example}[thm]{Example}
\newcommand{\C}{\mathbb{C}}
\newcommand{\R}{\mathbb{R}}
\newcommand{\T}{\mathbb{T}}
\newcommand{\Z}{\mathbb{Z}}
\newcommand{\CC}{\mathbb{C}}
\newcommand{\HH}{\mathbb{H}}
\newcommand{\NN}{\mathbb{N}}
\newcommand{\RR}{\mathbb{R}}
\newcommand{\TT}{\mathbb{T}}
\newcommand{\ZZ}{\mathbb{Z}}
\newcommand{\G}{\mathcal{G}}
\newcommand{\U}{\mathfrak{U}}
\newcommand{\V}{\mathfrak{V}}
\newcommand{\W}{\mathfrak{W}}
\newcommand{\pt}{\mathrm{pt}}
\newcommand{\vect}[1]{\boldsymbol{#1}}
\title{`Real' gerbes and Dirac cones of topological insulators}
\author[1]{Kiyonori Gomi\thanks{kgomi@math.titech.ac.jp}}
\author[2]{Guo Chuan Thiang\thanks{guochuanthiang@bicmr.pku.edu.cn, ORCID 0000-0003-0268-0065}}
\affil[1]{Department of Mathematics, Tokyo Institute of Technology}
\affil[2]{Beijing International Center for Mathematical Research, Peking University}
\date{\today}
\begin{document}

\maketitle

\begin{abstract}
A time-reversal invariant topological insulator occupying a Euclidean half-space determines a `Quaternionic' self-adjoint Fredholm family. We show that the discrete spectrum data for such a family is geometrically encoded in a non-trivial `Real' gerbe. The gerbe invariant, rather than a na\"{i}ve counting of Dirac points, precisely captures how edge states completely fill up the bulk spectral gap in a topologically protected manner.
\end{abstract}

\setcounter{tocdepth}{2}
\tableofcontents


\section{Introduction}
Traditionally, (bundle) gerbes are ``higher'' geometric objects which realise the third integral cohomology of a space $X$ \cite{Hi,Murray,MS}, much like complex line bundles realise the second cohomology via the Chern class. They have physical applications in quantum field theory and string theory \cite{CJM,CMM,Gaw-R}, and more recently, in topological condensed matter physics \cite{C-T}. In the latter, a so-called ``Fermi gerbe'' is constructed from the self-adjoint Fredholm family describing a semimetal or insulator occupying a Euclidean half-space. The discrete spectra of the family correspond to edge states, and are encoded in the Fermi gerbe. Remarkably, the gerbe invariant (Dixmier--Douady class) protects the essential spectral gap-filling property of the edge states, leading to stability of the Fermi surface, generalising the role played by spectral flows in \cite{Thiang-sf}. Intuitively, discrete spectra may connect the positive and negative essential spectra of a self-adjoint Fredholm family in a robust way which goes beyond the ordinary spectral flow along loops in the family --- the Fermi gerbe encodes such a ``higher'' spectral flow. However, non-trivial examples only occur in rather high dimensions, which may be difficult to simulate in actual experiments. 

In the same way, a `Real' version of gerbes can realise the equivariantly twisted third cohomology $H_\pm^3(X)$ (see Appendix \ref{appendix:Borel.Cech} for the meaning of this cohomology). Various mathematical definitions are available, typically motivated by orientifold constructions in string theory \cite{GSW,SSW,HMSV1,HMSV2}. In contrast to ordinary gerbes, non-trivial examples of `Real' gerbes exist in low dimensions (see \S\ref{sec:Real.gerbe.examples}).

In a completely independent development in physics, 2D and 3D time-reversal invariant/symmetric (TRS) insulators (also called ``Class AII'' insulators) were theoretically predicted to be classified in the bulk by a subtle $\ZZ_2$ topological invariant \cite{KM,FKM}. Mathematically, this invariant can be thought of as a `Quaternionic' $K$-theory class \cite{FM} or characteristic class \cite{D-G-AII,D-G-cohom}, as reviewed in \S\ref{sec:FKMM.invariant}. The topological magneto-electric effect was proposed as an observable feature of 3D TRS topological insulators \cite{QHZ}, but this has yet to be observed directly. Instead, experimental successes have been found by probing the surface physics. For instance, the ``smoking gun'' mod-2 ``Dirac cone edge states'' which fill up the insulating bulk's spectral gap (see Fig.\ \ref{fig:phase.function3} for a sketch) were observed in \cite{Hsieh}. For 3D TRS topological insulators which have a single Dirac cone of surface states, Landau quantization of these unusual 2D surface states was even observed when a TRS-breaking magnetic field was applied \cite{Cheng, Hanaguri}. We refer the reader to \cite{LP} for a detailed and up-to-date discussion on experimental investigations of TRS topological insulators, as well as a $K$-theoretic perspective of the surface physics. Later on, topological semimetals were discovered, and as they are typically time-reversal invariant, the formulation of their edge ``Fermi arc'' invariants also requires the `Quaternionic' characteristic class theory \cite{TSG}. From both theoretical and experimental viewpoints, it is therefore important to understand carefully the topological invariants of TRS physical systems \emph{in the presence of a boundary}.  

The main aim of this paper is to bridge two seemingly unrelated developments --- `Real' gerbes and Dirac cone edge states of topological insulators --- with concrete examples, computations, and applications. We show that time-reversal invariant topological insulators occupying a Euclidean half-space are described by a $\ZZ_2$-equivariant (or `Quaternionic') self-adjoint Fredholm family (Prop.\ \ref{prop:H.equivariance}), whose discrete spectrum (or ``edge states'') data assembles into a `Real' Fermi gerbe (\S\ref{sec:Real.Fermi}). Our notion of `Real' gerbe, detailed in \S\ref{sec:real.gerbes} (see Remark \ref{rem:Gao.Hori} for some relations to existing definitions in the literature), is particularly well-adapted for this application. The `Real' gerbe invariant is computed to be non-trivial, implying spectral gaplessness of the topological insulator on a half-space. This constitutes an explicit statement and proof of the bulk-boundary correspondence in the time-reversal invariant setting (Theorem \ref{thm:BBC}); see Remark \ref{rem:BBC.literature} for a discussion of existing literature on this and some gaps therein. In physical terms, the edge states of the half-space topological insulator inherit the bulk topological insulator's abstract non-trivial $\ZZ_2$-invariants. Consequently, the half-space topological insulator enjoys ``topological protection'' of its spectral gaplessness.

While such a ``bulk-edge correspondence'' is often taken for granted in physical practice, our mathematical formulation in terms of gerbes is new and of independent interest. Specifically, we provide a $K$-theory boundary invariant for topological insulators (Definition \ref{defn:boundary.K.class}), which maps via Eq.\ \eqref{eqn:KR.to.DD} to the `Real' Fermi gerbe invariant (an element of $H^3_\pm(X)/H^3_\pm(\pt)$). The gerbe invariant provides, for the first time, a precise topological invariant which properly ``counts'' how the edge states of time-reversal invariant topological insulators fill up the bulk spectral gap. 
Here, \emph{we stress that a na\"{i}ve mod-2 counting of Dirac cone vertices (or ``Dirac points'') generally fails}, as illustrated in Fig.\ \ref{fig:phase.function2} and Fig.\ \ref{fig:phase.function}, and discussed in \S\ref{sec:counting.Dirac}. 
Furthermore, the `Real' Fermi gerbe construction is a mathematically general one, whose applicability is not restricted to the Class AII topological insulator setup (in which the self-adjoint Fredholm family is parametrized by a torus). Next, for the crucial transfer maps from bulk to boundary topological invariants, we will provide some explicit Fredholm formulations of \emph{real} $K$-theory connecting maps, which are not readily available in the literature. Upon doing so, we learn the sense in which the `Real' Fermi gerbe invariant measures ``higher'', or ``secondary'', spectral flow for an equivariant self-adjoint Fredholm family. 

\medskip

\noindent
{\bf Outline.} 
In \S\ref{sec:bulk.invariant} and \S\ref{sec:gap.filling}, we provide an operator $K$-theory (and, equivalently, topological $KR$) proof of the time-reversal invariant bulk-boundary correspondence. To understand the $K$-theoretic boundary invariant, and therefore the bulk-boundary correspondence, in a more descriptive way, we need a notion of `Real' gerbes, which are defined carefully in \S\ref{sec:real.gerbes}, and illustrated with examples in \S\ref{sec:Real.gerbe.examples}. The passage from $KR$-theory to `Real' gerbes and the relation to ``higher'' spectral flow is explained in \S\ref{sec:Real.gerbe.spectral.flow}, and explicit computations for 2D and 3D topological insulators are provided in \S\ref{sec:non.triviality}.

\section{Time-reversal invariant topological insulator: $K$-theory}\label{sec:bulk.invariant}
Let us first make precise the (bulk) ``$\ZZ_2$-topology'' of time-reversal invariant topological insulators, first found by ad-hoc means by physicists Fu--Kane--Mele \cite{KM,FKM}. We provide the operator algebraic approach in some detail, because the time-reversal invariant \emph{bulk-edge correspondence} requires the use of index morphisms at the level of operator $K$-theory of real $C^*$-algebras. ``Topological'' approaches, which may be more familiar to the reader, are recalled later, in \S\ref{sec:KR.formulation} and \S\ref{sec:FKMM.invariant}.

\medskip

\subsection{Tight-binding Hamiltonians and time-reversal}\label{sec:tight.binding}
To begin with, the \emph{tight-binding} Hilbert space is $V\otimes\ell^2(\ZZ^d)$,
where $V$ is a finite-dimensional \emph{quaternionic} Hilbert space. That is, $V$ is an even-dimensional complex Hilbert space equipped with an antiunitary operator $\Theta$ such that $\Theta^2=-1$. Such an operator $\Theta$ is called a \emph{quaternionic structure} on $V$. Note that multiplication by $i,\Theta$ and $i\Theta$ are mutually anticommuting real-linear operations squaring to $-1$, so that $V$ has a right quaternionic ``scalar multiplication''.

A \emph{quaternionic basis} for $V$ is a set of vectors $\{\mathsf{e}_i\}_{i=1,\ldots,n}$ such that the set $\{\mathsf{e}_1,\Theta\mathsf{e}_1,\ldots,\mathsf{e}_n,\Theta\mathsf{e}_n\}$ is an orthonormal basis (over $\CC$) for $V$. In such a basis, $\Theta$ has the standard form,
\begin{equation}
\Theta=
\left(
\begin{array}{rr|c|rr}
0 & -1 & & & \\
1 & 0 & & & \\
\hline
 & & \ddots & & \\
\hline
 & & & 0 & -1 \\
 & & & 1 & 0
\end{array}
\right)
\circ \kappa=:J\circ\kappa,\label{eqn:standard.quaternionic}
\end{equation}
where $\kappa$ denotes complex conjugation in that basis.
Sometimes, the rearranged basis $\{\mathsf{e}_1,\ldots,\mathsf{e}_n,\Theta\mathsf{e}_1,\ldots,\Theta\mathsf{e}_n\}$ is used instead, and $\Theta$ has the form
\begin{equation*}
\Theta=\begin{pmatrix} 0 & -\mathbbm{1}_n \\
\mathbbm{1}_n & 0 \end{pmatrix}\circ \kappa.
\end{equation*}
We will use the same symbol $\Theta$ for the extension $\Theta\otimes\kappa$ acting on the tight-binding Hilbert space $V\otimes\ell^2(\ZZ^d)$. 

The physical meaning of $\Theta$ is \emph{fermionic time-reversal}. Also, $\ell^2(\ZZ^d)$ is the regular representation space for an abelian lattice $\ZZ^d$ of (Euclidean space) translations. The generating unitary translation operators on $\ell^2(\ZZ^d)$ are denoted $T_i, i=1,\ldots,d$.

\emph{We are interested in Hamiltonians $H=H^*\in \mathcal{B}(V\otimes\ell^2(\ZZ^d))$ which commute with the translations $T_i$, as well as the time-reversal operator $\Theta$.} The first condition means that $H$ belongs to the commutant of the $*$-algebra generated by $\mathbbm{1}_{2n}\otimes T_i$, which is $M_{2n}(\CC)\otimes W^*(\ZZ^d)$, where $W^*(\ZZ^d)$ denotes the group von Neumann algebra. It is physically sensible to restrict to finite-range Hamiltonians, or at least those that can be norm-approximated by finite-range operators --- these might be called \emph{local} Hamiltonians, and \emph{we will assume this throughout}. 

The reduced group $C^*$-algebra $C^*_r(\ZZ^d)$ is the $C^*$-algebra generated by the translations $T_i$ acting in the regular representation on $\ell^2(\ZZ^d)$. The desired local Hamiltonians are approximated by (matrices  of) of polynomials in the $T_i, T_j^*$, so they belong to the $C^*$-algebra $M_{2n}(\CC)\otimes C^*_r(\ZZ^d)$. The second requirement, that $H$ commutes with $\Theta$, means that $H$ belongs to the distinguished real $C^*$-subalgebra, $M_n(\HH)\otimes_\RR C^*_{r,\RR}(\ZZ^d)\subset M_{2n}(\CC)\otimes C^*_r(\ZZ^d)$ which commutes with $\Theta$, as explained in Example \ref{ex:translation.quaternion.algebra} below.

\subsection{$C^*$-algebras of quaternionic-linear operators}
\begin{dfn}\label{dfn:real.structure}
A \emph{real structure} $\flat$ on a complex $C^*$-algebra $B$ is an \emph{antilinear} automorphism (preserves multiplication and $*$) which squares to the identity.
\end{dfn}
The $\flat$-fixed subset of $B$,
\begin{equation*}
B^\flat:=\{a\in B\,:\, a^\flat=a\},
\end{equation*}
is a \emph{real} $C^*$-algebra. 

In reverse, the complexification of a real $C^*$-algebra $A$ is a complex $C^*$-algebra $A^\CC:=\CC\otimes_\RR A$ with canonical real structure given by complex conjugation $\kappa$ on the $\CC$ factor. Here, the adjoint on $A^\CC$ is that on $A$ extended complex-antilinearly. Note that complexifying $B^\flat$ recovers $B$.

\medskip

\begin{rem}
A real structure $\flat$ on a complex $C^*$-algebra gives rise to a \emph{linear} \emph{antiautomorphism} ($*$-preserving but multiplication reversing) $\iota=\flat\circ*$ which squares to the identity. In terms of $\iota$, the real subalgebra is specified by the condition $a^\iota=a^*$
(this convention is adopted in \cite{BL}). Given $\iota$, one obtains a real structure in the sense of Definition \ref{dfn:real.structure} by taking  
$\flat=\iota\circ *$. 
\end{rem}

There are equivalences of categories between (i) real $C^*$-algebras, (ii) complex $C^*$-algebras with real structure $\flat$, and (iii) complex $C^*$-algebras with linear antiautomorphism $\iota$. The reality condition can be written as
\begin{equation*}
a=(a^\iota)^*=a^\flat.
\end{equation*}

\medskip

\begin{example}[Commutative algebras]
For $X$ a locally compact Hausdorff space, let $C_0(X)$ be the commutative complex $C^*$-algebra of complex-valued functions on $X$ vanishing at infinity. The $*$-operation is just pointwise complex conjugation of functions, $\kappa$. An \emph{involution} on $X$ is an order-2 homeomorphism $\iota:X\rightarrow X$. Via pullback, an involution induces a linear (anti)automorphism on $C(X)$, also denoted $\iota$,
\begin{equation*}
f^\iota:=\iota(f)(x)=f(\iota(x)).
\end{equation*}
The real structure on $C(X)$ associated to $\iota$ is $\flat=\iota\circ\kappa$, i.e.,
\begin{equation*}
f^\flat(x)=\overline{f(\iota(x))}.
\end{equation*}
We write $C(X,\iota)$ for the commutative complex $C^*$-algebra $C(X)$ equipped with the real structure $\flat=\iota\circ\kappa$ associated to the involution $\iota$.

Generally speaking, there is an antiequivalence between the category of (non)unital commutative real $C^*$-algebras, and the category of (locally) compact, Hausdorff spaces with involution. This is the real version of the Gelfand--Naimark theorem, due to Arens--Kaplansky (\cite{AK}, Theorem 9.1).
\end{example}

\medskip

\begin{example}[Abelian group algebras]\label{ex:group.algebra.Z}
The complex (reduced) group $C^*$-algebra of $\ZZ$, denoted $C^*_r(\ZZ)$, is isomorphic via Fourier transform to $C(\TT)$.  Let $\iota_{\rm flip}:z\mapsto \overline{z}$ be the ``flip'' involution on $\TT$ regarded as the unit complex numbers. Then $C(\TT,\iota_{\rm flip})$ is a complex $C^*$-algebra with real structure $\iota_{\rm flip}\circ\kappa$. The real subalgebra $C(\TT)^{\iota_{\rm flip}\circ\kappa}$ is identified via Fourier transform with $C^*_{r,\RR}(\ZZ)$, the \emph{real} (reduced) group $C^*$-algebra of $\ZZ$. That is, Fourier transform converts pointwise complex-conjugation $\kappa$ on functions $\ZZ\rightarrow\CC$, into $\iota_{\rm flip}\circ\kappa$ on functions $\TT\rightarrow \CC$. 

Notice that $\TT$ is the (complex) character space, or Pontryagin dual, of $\ZZ$ (for $z\in \TT$, take the character $n\mapsto z^n$), and that $\iota_{\rm flip}$ is the operation of complex-conjugating characters. The two $\iota_{\rm flip}$-fixed characters, $\pm 1$, are precisely the real characters of $\ZZ$. Similarly, the character space for $\ZZ^d$ is $\TT^d$ (called the \emph{Brillouin torus} in physics), and the induced involution is just $\iota_{\rm flip}:(z_1,\ldots z_d)\mapsto (\overline{z_1},\ldots,\overline{z_d})$ (reusing the notation), which has $2^d$ fixed points. We have $C^*_{r,\RR}(\ZZ^d)\otimes_\RR \CC\cong C(\TT^d,\iota_{\rm flip})$ as complex $C^*$-algebras with real structure.
\end{example}

\medskip

\begin{example}
For a non-commutative example, take $M_2(\CC)$ with $*$-operation being the Hermitian adjoint. Consider the linear antiautomorphism $\mathrm{t}$ given by the transpose map. The associated real structure is $\mathrm{t}\circ *=\kappa$, i.e.\ entry-wise complex conjugation. The real $C^*$-algebra fixed under $\kappa$ is of course just $M_2(\RR)$.
\end{example}

\medskip

\begin{example}[Quaternions]\label{ex:quaternion}
Let $\Theta=\begin{pmatrix} 0 & -1 \\ 1 & 0 \end{pmatrix}\circ\kappa$ be the standard quaternionic structure on $\CC^2$, which defines a real structure on $M_2(\CC)$ by conjugation,
\begin{equation*}
\flat={\rm Ad}_\Theta
:\begin{pmatrix}a & b \\ c & d\end{pmatrix}\mapsto \begin{pmatrix}\bar{d} & -\bar{c} \\ -\bar{b} & \bar{a}\end{pmatrix}.
\end{equation*}
Thus the corresponding real subalgebra of $M_2(\CC)$ comprises the ${\rm Ad}_\Theta$-invariant operators, i.e.\ the \emph{quaternionic-linear} operators. Explicitly, they are the matrices of the form
\begin{equation*}
q=\begin{pmatrix}
p_0+ip_1 & -p_2-ip_3 \\ p_2-ip_3 & p_0-ip_1
\end{pmatrix},\qquad p_0, p_1, p_2, p_3\in\RR,
\end{equation*}
which we recognise as the $2\times 2$ complex matrix representation of the quaternion algebra $\HH$. 
Complexifying the real $C^*$-algebra $\HH$ recovers $\HH\otimes_\RR\CC\cong M_2(\CC)$.
Note that the standard $*$-operation on $M_2(\CC)$ (conjugate-transpose) restricts on $\HH$ to the usual quaternion conjugation $q\mapsto \bar{q}\leftrightarrow (p_0,-p_1,-p_2,-p_3)$. 
For the associated linear antiautomorphism, we adopt the ``sharp-notation'' (taken from \cite{BL}),
\begin{equation}
\sharp=\flat\circ*:\begin{pmatrix}a & b \\ c & d\end{pmatrix}\mapsto
\begin{pmatrix}d & -b \\ -c & a\end{pmatrix}.\label{eqn:sharp}
\end{equation}

\end{example}

\medskip

\begin{example}[Quaternion matrices]\label{ex:quaternion.higher}
The matrix algebra version of Example \ref{ex:quaternion} proceeds as follows.
Let $\Theta$ be the standard quaternionic structure on $\CC^{2n}$, Eq.\ \eqref{eqn:standard.quaternionic}. 
Conjugation by $\Theta$ provides a real structure on $M_{2n}(\CC)$, with real subalgebra comprising the quaternionic-linear operators, i.e.\ $M_n(\HH)$. It is convenient to write $M_{2n}(\CC)\cong M_n(\CC)\otimes M_2(\CC)$, and
\begin{equation*}
\Theta=\left(\mathbbm{1}_n\otimes\begin{pmatrix} 0 & -1 \\ 1 & 0 \end{pmatrix}\right)\circ\kappa.
\end{equation*}
The linear antiautomorphism associated to ${\rm Ad}_\Theta$ is then
\begin{equation}
{\rm Ad}_\Theta\circ*=\mathrm{t}\otimes\sharp\label{eqn:sharp.matrix}
\end{equation}
where $\mathrm{t}$ is the transpose operation on the $M_n(\CC)$ factor, and $\sharp$ is Eq.\ \eqref{eqn:sharp}. \emph{As a convenient shorthand, we will abbreviate $\mathrm{t}\otimes\sharp$ to $\sharp$ when we pass to quaternion matrix algebras as above. }

\end{example}

\begin{rem}
The sharp operation on $M_{2n}(\CC)$ can be restricted to an involution on the unitary subgroup ${\rm U}(2n)\subset M_{2n}(\CC)$, or on the special unitary subgroup  ${\rm SU}(2n)$,
\begin{equation}
\sharp:u\mapsto \Theta u^*\Theta^{-1}.\label{eqn:sharp.unitary.formula}
\end{equation}
Recall that $\Theta=J\circ\kappa$ where $J$ is the special unitary matrix $\begin{pmatrix} 0 & -\mathbbm{1}_n \\
\mathbbm{1}_n & 0 \end{pmatrix}$. Multiplication by $J$ is a homeomorphism on ${\rm (S)U}(2n)$. Under this homeomorphism, it is straightforward to verify that $\sharp$ transforms into the ``minus-transpose'' operation,
\begin{equation}\label{eqn:sharp.equivalent}
J\circ\sharp\circ J^{-1}:u\mapsto -u^{\rm t}.
\end{equation}
So Eq.\ \eqref{eqn:sharp.unitary.formula} and Eq.\ \eqref{eqn:sharp.equivalent} define equivalent involutions on ${\rm (S)U}(2n)$, in the sense that the resulting involutive spaces are equivariantly homeomorphic. We can carry either involution to the direct limits
\begin{equation*}
{\rm U}(\infty)=\varinjlim {\rm U}(2n),\qquad {\rm SU}(\infty)=\varinjlim {\rm SU}(2n).
\end{equation*}
\end{rem}

\begin{example}[Quaternionifying real $C^*$-algebras]\label{ex:matrix.algebra}
Further generalising Example \ref{ex:quaternion.higher}, let $B=\CC\otimes_\RR A$ be a complex $C^*$-algebra with real structure $\kappa_B$ and real subalgebra $A$.

Write $A^\HH$ for $\HH\otimes_\RR A$. Then $M_n(A^\HH)$ complexifies into
\begin{align*}
\CC\otimes_\RR M_n(A^\HH)&\cong \CC\otimes_\RR (M_n(\HH)\otimes_\RR A)\\
&\cong (\CC\otimes_\RR M_n(\HH))\otimes(\CC\otimes_\RR A)\\
&\cong M_{2n}(\CC)\otimes A^\CC \cong M_{2n}(B).
\end{align*}
In reverse, taking the real structure on $M_{2n}(B)\cong M_{2n}(\CC)\otimes B$ to be ${\rm Ad}_\Theta\otimes \kappa_B$, we recover $A^\HH$ as the real subalgebra.

If the associated linear antiautomorphism on $B$ is denoted $\iota_B=\kappa_B\circ *$, then on $M_{2n}(B)\cong M_{2n}(\CC)\otimes B$, the associated linear antiautomorphism is $\sharp\otimes\iota_B$. 
\end{example}

\medskip

\begin{example}\label{ex:translation.quaternion.algebra}
Elaborating on Example \ref{ex:matrix.algebra}, take $B=C^*_r(\ZZ^d)$, and $A=C^*_{r,\RR}(\ZZ^d)$. That is, we want to quaternionify the abelian Example \ref{ex:group.algebra.Z}. In this case, $M_n((C^*_{r,\RR}(\ZZ^d))^\HH)$ is the subalgebra of $M_{2n}(C^*_r(\ZZ^d))$ commuting with the quaternionic structure $\Theta$ on the tight-binding Hilbert space $\CC^{2n}\otimes\ell^2(\ZZ^d)$.

Suppose we work in the Fourier transformed Hilbert space, $\CC^{2n}\otimes\ell^2(\ZZ^d)\overset{\rm Fourier}{\longrightarrow}\CC^{2n}\otimes L^2(\TT^d)$ instead. The standard quaternionic structure $\Theta=\Theta\otimes\kappa$ on $\CC^{2n}\otimes\ell^2(\ZZ^d)$ Fourier transforms into $\widehat{\Theta}:=\Theta\otimes(\iota_{\rm flip}\circ\kappa)$ acting on $\CC^{2n}\otimes L^2(\TT^d)$. Similarly, the complex $C^*$-algebra $M_{2n}(C^*_r(\ZZ^d))$ transforms into the algebra $M_{2n}(C(\TT^d))$ acting by multiplication. 
Recall that the induced (anti)automorphism on $C(\TT^d)$ is $\iota_{\rm flip}$, then on the matrix algebra of functions $M_{2n}(C(\TT^d))$, the antiautomorphism is $\sharp\otimes\iota_{\rm flip}$. Recalling Eq.\ \eqref{eqn:sharp.matrix}, the real subalgebra of $M_{2n}(C(\TT^d))$ which satisfies $f^{\sharp\otimes\iota_{\rm flip}}=f^*$ (where $(f^*)_{ij}=\overline{f_{ji}}$), is equivalently defined by $f^{(\sharp\otimes \iota_{\rm flip})\circ*}=({\rm Ad}_\Theta\otimes(\iota_{\rm flip}\circ\kappa))(f)=f$. It comprises those operators in $M_{2n}(C(\TT^d))$ which commute with the Fourier-transformed quaternionic structure $\widehat{\Theta}$.
\end{example}

\subsection{Bulk topological invariant}
Let $H=H^*$ be a local, translation-invariant Hamiltonian acting on some tight-binding Hilbert space $V\otimes \ell^2(\ZZ^d)$, which is time-reversal invariant, i.e.\ commutes with $\Theta$. For brevity, we will refer to such a Hamiltonian simply as a ``$\Theta$-invariant Hamiltonian''.

\subsubsection{Operator $K$-theory invariant of Fermi projection}
By the discussion in \S\ref{sec:tight.binding} and Example \ref{ex:translation.quaternion.algebra}, this means that
\begin{equation*}
H=H^*\in M_n((C^*_{r,\RR}(\ZZ^d))^\HH)\subset M_{2n}(C^*_r(\ZZ^d)).
\end{equation*}
Suppose further that $H$ has a \emph{spectral gap} around $\lambda\in\RR$, which we can set to $\lambda=0$ by a suitable overall energy shift $H\mapsto H-\lambda$. Formally, this means 
\begin{equation}
\exists \epsilon>0\;\;:\;\;
{\rm Spec}(H)\cap (-\infty,-\epsilon]\neq \emptyset \neq {\rm Spec}(H)\cap [\epsilon,\infty),\quad {\rm Spec}(H)\cap (-\epsilon,\epsilon)=\emptyset.\label{eqn:spectral.gap.definition}
\end{equation}
With the appropriate energy shift, we can arrange for $(-\epsilon,\epsilon)$ to be the entire spectral gap around $0$.

The \emph{Fermi spectral projection} $P_{\rm Fermi}$ onto the negative energy states of $H$ can now be written as $f(H)$ for some suitable continuous function $f:\RR\rightarrow\RR$. Simply choose $f(x)=1$ for $x\leq-\epsilon$, $f(x)=0$ for $x\geq \epsilon$, and continuously (even smoothly) interpolate over the spectral gap $(-\epsilon,\epsilon)$.

By standard properties of the continuous functional calculus, $P_{\rm Fermi}$ commutes with $\Theta$, and we have $P_{\rm Fermi}\in M_n((C^*_{r,\RR}(\ZZ^d))^\HH)$, thus defining a $K$-theory class
\begin{equation*}
[P_{\rm Fermi}]\in KO_0(C^*_{r,\RR}(\ZZ^d))^\HH).
\end{equation*}
The right side can be computed using the real Baum--Connes conjecture \cite{BK} (a ``$K$-theoretic Fourier transform'') and Poincar\'{e} duality, together with a stable splitting of the classifying space for $\ZZ^d$ (which is a $d$-torus $T^d=\RR^d/\ZZ^d$),
\begin{align*}
KO_0(C^*_{r,\RR}(\ZZ^d))^\HH)\cong KO_4(C^*_{r,\RR}(\ZZ^d))\cong KO_4(T^d)&\cong KO^{d-4}(T^d)\\
&\cong \widetilde{KO}^{d-4}(\bigvee_{j=0}^d (S^{j})^{\vee \binom{d}{j}})\\
&\cong \bigoplus_{j=0}^{d} (KO^{d-4-j})^{\oplus \binom{d}{j}}
\end{align*}
where $KO^{-n}$ denotes the $n$-th $KO$-group of a point.

For $d=1, 2, 3$, this gives
\begin{align}
KO_0(C^*_{r,\RR}(\ZZ)^\HH)&\cong \underbrace{KO^{-3}}_{0}\oplus \underbrace{KO^{-4}}_{\ZZ},\nonumber\\
KO_0(C^*_{r,\RR}(\ZZ^2)^\HH)
&\cong \underbrace{KO^{-2}}_{\ZZ_2}\oplus \underbrace{(KO^{-3})^2}_{0}\oplus \underbrace{KO^{-4}}_{\ZZ}, \nonumber\\
KO_0(C^*_{r,\RR}(\ZZ^3)^\HH)
&\cong \underbrace{KO^{-1}}_{{\rm ``strong"}\;\ZZ_2}\oplus \underbrace{(KO^{-2})^3}_{{\rm ``weak"}\;(\ZZ_2)^3}\oplus \underbrace{(KO^{-3})^3}_{0}\oplus\underbrace{KO^{-4}}_{\ZZ}.\label{eqn:KO.torus}
\end{align}
In the above, the first factors of $\ZZ_2$ are the ``strong'' $\ZZ_2$-invariants. For $d=3$, the extra three $\ZZ_2$ factors are the ``weak'' $\ZZ_2$-invariants \cite{Kitaev}, coming from the three standard ways in which $C^*_{r,\RR}(\ZZ^2)^\HH$ is included inside $C^*_{r,\RR}(\ZZ^3)^\HH$. Formally, this computation has the following meaning:

\begin{lem}
In dimension $d\geq 2$, there exist stably non-homotopic translation-invariant and $\Theta$-invariant Fermi projections, distinguished by mod-2 invariants.
\end{lem}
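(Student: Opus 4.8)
\emph{Sketch of the intended argument.}
The plan is to read the statement off the $K$-theory computation culminating in \eqref{eqn:KO.torus}: once one grants that identification, the only remaining work is to realise the relevant $KO_0$-classes by genuine Fermi projections of gapped $\Theta$-invariant local Hamiltonians, and to invoke the fact that the $KO_0$-class is a complete invariant for stable homotopy of projections. First I would isolate a mod-$2$ summand. In the stable splitting $KO_0(C^*_{r,\RR}(\ZZ^d)^\HH)\cong\bigoplus_{j=0}^{d}(KO^{d-4-j})^{\oplus\binom{d}{j}}$, the index $j=d-2$ is admissible precisely because $d\geq 2$, and it contributes a summand $KO^{-2}\cong\ZZ_2$ with multiplicity $\binom{d}{2}\geq 1$ (for $d=2$ this is the unique ``strong'' $\ZZ_2$; for $d\geq 3$ it is among the ``weak'' ones displayed under \eqref{eqn:KO.torus}). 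Fix such a summand, let $\nu\colon KO_0(C^*_{r,\RR}(\ZZ^d)^\HH)\to\ZZ_2$ be the projection onto it, and choose a class $x$ with $\nu(x)=1$.

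Next I would pass from $K$-theory classes to flat Hamiltonians. By the standard picture of operator $K$-theory for unital real $C^*$-algebras, every class has the form $[Q]-[\mathbbm{1}_m]$ for some projection $Q$ in a matrix algebra over $C^*_{r,\RR}(\ZZ^d)^\HH$; applying this to $x$ and to $0$ and padding by zeros to a common even matrix size $N'$, we obtain projections $P_1,P_0\in M_{N'}(C^*_{r,\RR}(\ZZ^d)^\HH)$ with $[P_1]-[P_0]=x$, and after one further stabilisation we may assume $0\neq P_i\neq\mathbbm{1}$. By Example \ref{ex:matrix.algebra} with $B=C^*_r(\ZZ^d)$ --- equivalently, the $n=N'$ case of Example \ref{ex:translation.quaternion.algebra} --- the algebra $M_{N'}(C^*_{r,\RR}(\ZZ^d)^\HH)$ is exactly the $C^*$-algebra of quaternionic-linear, translation-invariant operators on the tight-binding space $\CC^{2N'}\otimes\ell^2(\ZZ^d)$, and it is the norm-closure of the finite-range operators it contains. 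Hence $H_i:=2P_i-\mathbbm{1}$ is a local, self-adjoint, $\Theta$-invariant Hamiltonian with $\mathrm{Spec}(H_i)\subseteq\{\pm 1\}$ and both eigenvalues occurring, so \eqref{eqn:spectral.gap.definition} holds with $\epsilon=\tfrac12$ and the associated Fermi projection is exactly $P_i$.

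Finally I would conclude by stability. Over a real $C^*$-algebra, two projections in matrix algebras become homotopic through projections after a suitable stabilisation (by some $\mathbbm{1}_k\oplus 0_\ell$) if and only if they define the same class in $KO_0$ --- this is precisely the passage from the monoid of stable Murray--von Neumann classes to its Grothendieck group. Since $\nu(x)=1$ forces $[P_1]\neq[P_0]$, the Fermi projections $P_0,P_1$ are stably non-homotopic and are separated by the mod-$2$ invariant $\nu$, which proves the lemma. The only genuinely delicate point --- the ``main obstacle'' --- is the bookkeeping in this last step: confirming that ``stably non-homotopic'' is literally equivalent to ``distinct $KO_0$-class'' in the $KO$ (real $C^*$-algebraic) setting, and that the chosen classes can be de-stabilised to honest projections \emph{inside} the quaternionic matrix algebra $M_{N'}(C^*_{r,\RR}(\ZZ^d)^\HH)$ rather than merely as formal differences, so that the flat Hamiltonians $H_i$ sit honestly in the tight-binding Class AII framework; everything else is a direct quotation of \eqref{eqn:KO.torus}.
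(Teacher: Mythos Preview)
Your proposal is correct and follows exactly the paper's approach: the lemma is stated there without separate proof, as the sentence ``Formally, this computation has the following meaning'' immediately preceding it indicates---it is meant to be read off directly from the $KO$-computation \eqref{eqn:KO.torus}, and you have simply filled in the standard $K$-theoretic bookkeeping the paper leaves implicit. One minor slip: with $H_i = 2P_i - \mathbbm{1}$ the Fermi projection (onto \emph{negative} energies) is $\mathbbm{1}-P_i$, not $P_i$; either take $H_i = \mathbbm{1} - 2P_i$ instead, or observe that $[P_1]\neq[P_0]$ iff $[\mathbbm{1}-P_1]\neq[\mathbbm{1}-P_0]$, so the conclusion is unaffected.
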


\subsubsection{$KR$-theory formulation: Brillouin zone picture}\label{sec:KR.formulation}
Recall from Example \ref{ex:translation.quaternion.algebra}, that in the Fourier transformed picture, the standard quaternionic strucure on $\CC^{2n}\otimes L^2(\TT^d)$ has the form $\widehat{\Theta}=\Theta\otimes(\iota_{\rm flip}\circ\kappa)$. If we think of this Hilbert space as the $L^2$-sections of the trivial Hermitian vector bundle $\TT^d\times \CC^{2n}$, then $\widehat{\Theta}$ makes $\TT^d\times \CC^{2n}$ into a trivial ``Quaternionic'' vector bundle over $(\TT^d,\iota_{\rm flip})$, in the sense of \cite{Dupont,D-G-cohom}. Namely, $\widehat{\Theta}$ is an antiunitary lift of $\iota_{\rm flip}$ to a bundle map squaring to $-1$.

Consider the Fourier transform of the gapped Hamiltonian $H$. This is a continuous Hermitian matrix assignment
\begin{equation*}
\TT^d\ni\vect{z}\mapsto H^{\rm 0D}(\vect{z})\in M_{2n}(\CC),
\end{equation*}
where we have written $\vect{z}:=(z_1,\ldots,z_d)$. We may think of $H^{\rm 0D}$ as an endomorphism of the Hermitian vector bundle $\TT^d\times\CC^{2n}$. Similarly, the Fermi projection $P_{\rm Fermi}=f(H)$ Fourier transforms to the continuous projection matrix family
\begin{equation*}
P_{\rm Fermi}^{\rm 0D}:\vect{z}\mapsto \frac{1-{\rm sgn}(H^{\rm 0D}(\vect{z}))}{2}.
\end{equation*}
Here, ${\rm sgn}$ is the sign function, which is continuous on the spectrum of $H^{\rm 0D}(\vect{z})$ due to the spectral gap at $0$. 

By commutativity with $\widehat{\Theta}$, both $H^{\rm 0D}$ and $P_{\rm Fermi}^{\rm 0D}$ are endomorphisms in the ``Quaternionic'' sense. The range bundle of $P_{\rm Fermi}^{\rm 0D}$ (sometimes called the \emph{Bloch bundle}), equipped with the restricted $\widehat{\Theta}$, is a ``Quaternionic'' subbundle over $(\TT^d,\iota_{\rm flip})$. This means that it is a Hermitian vector bundle with an antiunitary lift $\widehat{\Theta}$ of the base involution $\iota_{\rm flip}$. Unlike the ambient trivial bundle, the Bloch subbundle may not be trivial in the `Quaternionic' category. Indeed, it defines a class in Dupont's `Quaternionic' $K$-theory \cite{Dupont}, $KQ^{0}(\TT^d,\iota_{\rm flip})$.

To calculate $KQ^{0}(\TT^d,\iota_{\rm flip})$, we can use the equivariant stable splitting of $(\TT^d,\iota_{\rm flip})$ into a wedge sum of involutive $j$-spheres $S^{1,j}$ (the notation $S^{1,j}$ means the unit sphere in $\RR^{1+j}$ in which the last $j$-coordinates are flipped):
\begin{equation*}
KQ^{0}(\TT^d,\iota_{\rm flip})\cong KR^{-4}(\TT^d,\iota_{\rm flip})\cong \widetilde{KR}^{-4}(\bigvee_{j=0}^d (S^{1,j})^{\vee \binom{d}{j}})\cong \bigoplus_{j=0}^{d} (KO^{-4+j})^{\oplus \binom{d}{j}}.
\end{equation*}
For $d=1, 2, 3$, this gives
\begin{align}
KQ^0(\TT,\iota_{\rm flip})&\cong \underbrace{KO^{-4}}_{\ZZ}\oplus \underbrace{KO^{-3}}_{0},\nonumber\\
KQ^{0}(\TT^2,\iota_{\rm flip})
&\cong  \underbrace{KO^{-4}}_{\ZZ}\oplus \underbrace{(KO^{-3})^2}_{0}\oplus\underbrace{KO^{-2}}_{\ZZ_2}, \nonumber\\
KQ^{0}(\TT^3,\iota_{\rm flip})
&\cong   \underbrace{KO^{-4}}_{\ZZ}\oplus \underbrace{(KO^{-3})^3}_{0}\oplus \underbrace{(KO^{-2})^3}_{{\rm ``weak"}\;(\ZZ_2)^3} \oplus \underbrace{KO^{-1}}_{{\rm ``strong"}\;\ZZ_2}.\label{eqn:KR.torus}
\end{align}
This is a ``T-dual'' computation to that in \eqref{eqn:KO.torus}, as explained in \cite{MT-higher}.


\section{Spectral gap-filling for half-space topological insulators}\label{sec:gap.filling}
The $\ZZ_2$-invariants of the (Fermi projection of) time-reversal invariant Hamiltonians, are somewhat abstract labels. The deeper (and physically observable) meaning is to be found in the unusual boundary-localized states which fill up the bulk spectral gap, whenever the Hamiltonian is restricted to a half-space. In the absence of the $\Theta$-symmetry requirement, this ``topologically protected'' \emph{gap-filling phenomenon} was investigated in detail in \cite{PSB}, and further in \cite{T-edge} for general geometrically imperfect half-spaces.

Formally, the tight-binding Hilbert space for a Euclidean half-space is the subspace $\jmath:V\otimes \ell^2(\ZZ^{d-1}\times\NN)\hookrightarrow V\otimes \ell^2(\ZZ^d)$, with $\jmath^*$ the projection. Note that $\jmath^*\circ\jmath={\rm id}_{V\otimes\ell^2(\ZZ^{d-1}\times\NN)}$. Let us suppress the internal Hilbert space $V$ for the moment.

Given a bounded operator $S$ on $\ell^2(\ZZ^d)$, its truncation to $\ell^2(\ZZ^{d-1}\times\NN)$ is $\widetilde{S}:=\jmath^*\circ S\circ\jmath$. This is $*$-preserving on the bounded operators, $\widetilde{S^*}=(\widetilde{S})^*$, but may not be multiplicative. For instance, the unitary translation $T_d$ truncates into an isometry $\widetilde{T}_d$ which is not unitary, $\widetilde{T_d}\widetilde{T_d}^*\neq 1$, although the remaining truncated translations, $\widetilde{T_i}, i=1,\ldots,d-1$, are still unitary.

Let $C^*_r(\ZZ^{d-1}\times\NN)$ denote the unital $C^*$-subalgebra of $\mathcal{B}(\ell^2(\ZZ^{d-1}\times\NN))$ generated by $\widetilde{T}_i, i=1,\ldots, d$. This is also called the semigroup $C^*$-algebra for $\ZZ^{d-1}\times\NN$. The $*$-linear map defined on generators by $\widetilde{T}_i\mapsto T_i$ gives a $*$-homomorphism
\begin{equation*}
q:C^*_r(\ZZ^{d-1}\times\NN)\rightarrow C^*_r(\ZZ^d).
\end{equation*}
Observe that $C^*_r(\ZZ^{d-1}\times\NN)\cong C^*_r(\ZZ^{d-1})\otimes C^*_r(\NN)$. The $C^*_r(\NN)$ factor is just the classical Toeplitz $C^*$-algebra, which fits into the short exact sequence
\begin{equation*}
0\longrightarrow \mathcal{K}\longrightarrow C^*_r(\NN)\overset{q}{\longrightarrow} C^*_r(\ZZ)\longrightarrow 0,
\end{equation*}
with $\mathcal{K}$ the compact operators on $\ell^2(\NN)$. Tensoring back $C^*_r(\ZZ^{d-1})$ gives the short exact sequence
\begin{equation*}
0\longrightarrow C^*_r(\ZZ^{d-1})\otimes\mathcal{K}\longrightarrow C^*_r(\ZZ^{d-1}\times\NN)\overset{q}{\longrightarrow} C^*_r(\ZZ^d)\longrightarrow 0.
\end{equation*}
Taking matrix algebras, we also have a short exact sequence
\begin{equation}
0\longrightarrow M_{2n}(C^*_r(\ZZ^{d-1})\otimes\mathcal{K})\longrightarrow M_{2n}(C^*_r(\ZZ^{d-1}\times\NN))\overset{q}{\longrightarrow} M_{2n}(C^*_r(\ZZ^d))\longrightarrow 0,\label{eqn:complex.bulk.boundary.algebras}
\end{equation}
so we can include the internal $V\cong \CC^{2n}$ in this discussion.

A local, translation invariant Hamiltonian $H\in M_{2n}(C^*_r(\ZZ^d))$ is, by definition, the norm-limit of some linear combination of powers of $T_i, T_j^*$. Correspondingly, its half-space version, the truncation $\widetilde{H}=\jmath^*\circ H\circ\jmath$, is approximated by the same combination of $\widetilde{T}_i, \widetilde{T}_j^*$, and lies in $M_{2n}(C^*_r(\ZZ^{d-1}\times\NN))$.

\subsection{$K$-theory connecting map and gap-filling argument}
\label{sec:gap-filling.invariant}
The spectrum of $\widetilde{H}$ as a bounded operator on $V\otimes \ell^2(\ZZ^{d-1}\times\NN)$ is also its spectrum in the $C^*$-algebra $M_{2n}(C^*_r(\ZZ^{d-1}\times\NN))$ (this is known as \emph{spectral permanence}); likewise for $H$. For any $\mu\in\RR$, if $\widetilde{H}-\mu$ is invertible, then so is $H-\mu$. Thus the spectrum of $\widetilde{H}$ contains the spectrum of $H$. Now suppose $H$ is gapped in the sense Eq.\ \eqref{eqn:spectral.gap.definition}, thus some open interval $(-\epsilon,\epsilon)$ is in the complement of ${\rm Spec}(H)$.

\noindent
{\bf Fundamental question:} Does $\widetilde{H}$ retain any gap inside the bulk spectral gap $(-\epsilon,\epsilon)$ of $H$?

\begin{dfn}\label{dfn:gap.filling.property}
With $H$ and $\widetilde{H}$ as above, we say that $\widetilde{H}$ has the \emph{spectral gap-filling property}, or simply \emph{gap-filling property}, if it has no spectral gaps in $(-\epsilon,\epsilon)$. 
\end{dfn}

Suppose $\lambda\not\in {\rm Spec}(\widetilde{H})$ for some $\lambda\in(-\epsilon,\epsilon)$. Thus $(\lambda-\epsilon^\prime,\lambda+\epsilon^\prime)\subset (-\epsilon,\epsilon)$ is a spectral gap of $\widetilde{H}$ for some $\epsilon^\prime$. In the expression $P_{\rm Fermi}=f(H)$, we could have chosen the continuous $f$ such that it interpolates between the value 1 and 0 only within this sub-gap $(\lambda-\epsilon^\prime,\lambda+\epsilon^\prime)$. Thus $f$ only takes the values $0,1$ on ${\rm Spec}(\widetilde{H})$, meaning that $\widetilde{P}_{\rm Fermi}=f(\widetilde{H})\in M_{2n}(C^*_r(\ZZ^{d-1}\times\NN))$ remains a \emph{projection}. Since 
\begin{equation*}
q(\widetilde{P}_{\rm Fermi})=q(f(\widetilde{H}))=f(q(\widetilde{H}))=f(H)=P_{\rm Fermi},
\end{equation*}
the bulk Fermi projection $P_{\rm Fermi}$ admits a lift to a \emph{projection} $\widetilde{P}_{\rm Fermi}$ living inside \mbox{$M_{2n}(C^*_r(\ZZ^{d-1}\times\NN))$}.

Conversely, suppose we are able to find an obstruction to $P_{\rm Fermi}$ possessing  a \emph{projection} lift in $M_{2n}(C^*_r(\ZZ^{d-1}\times\NN))$. \emph{Then $\widetilde{H}$ cannot have any spectral gap whatsoever} inside $(-\epsilon,\epsilon)$!

In the absence of $\Theta$, the complex $K$-theory connecting map (which is also called an \emph{exponential map}),
\begin{equation*}
\partial:K_0(C^*_r(\ZZ^d))\ni [P_{\rm Fermi}]\mapsto [{\rm exp}(-2\pi i \widetilde{P}_{\rm Fermi})]\in K_1(C^*_r(\ZZ^{d-1})),
\end{equation*}
provides an obstruction to the existence of any \emph{projection} lift of $P_{\rm Fermi}$, as explained in detail in \cite{T-edge, PSB}. Thus it suffices to evaluate the non-triviality of the connecting map on $[P_{\rm Fermi}]$, to deduce the gap-filling property of $\widetilde{H}$.

\subsubsection*{Gap-filling argument in $\Theta$-invariant case}
The quaternionic structure $\Theta$ may be restricted to $\CC^{2n}\otimes \ell^2(\ZZ^{d-1}\times\NN)$. Suppose $H$ is $\Theta$-invariant, then $\widetilde{H}$ is also $\Theta$-invariant. If we write the ($\Theta$-invariant) bulk Fermi projection as $P_{\rm Fermi}=f(H)$, then $\widetilde{P}_{\rm Fermi}=f(\widetilde{H})$ is likewise a $\Theta$-invariant lift of $P_{\rm Fermi}$ in the real subalgebra \mbox{$M_n(C^*_{r,\RR}(\ZZ^{d-1}\times\NN)^\HH)\subset M_{2n}(C^*_{r}(\ZZ^{d-1}\times\NN))$}. 

Generally, $\widetilde{P}_{\rm Fermi}$ does not remain a projection, and the question now is whether there exists any projection lift of $P_{\rm Fermi}$ inside \mbox{$M_n(C^*_{r,\RR}(\ZZ^{d-1}\times\NN))^\HH$} at all. We will explain how a certain connecting morphism in $KO$-theory provides an obstruction. This requires some preliminaries on formulations of the $KO_i$ functors, following \cite{BL}.

\subsection{$KO_i$ functors and connecting maps}
\begin{dfn}
[$KO_0$ functor]
For a real unital $C^*$-algebra $A$, the group $KO_0(A)$ is just the Grothendieck group of the monoid of projections in $M_\infty(A)$ under direct sum. 
\end{dfn}
Here, $M_\infty(A)$ is the inductive limit of $M_n(A)$ as $n\rightarrow\infty$, with structure maps being ``upper-left corner embedding''. As in the complex case, for $KO_0$ of non-unital algebras, one passes to the unitisation $A^+$, and takes the kernel of the induced map under $A^+\rightarrow A^+/A\cong\RR$.

\begin{rem}
In \cite{BL}, each of the $KO_i, i\in\ZZ_8$, functors is formulated in a ``unitary convention''. In particular, the unitary definition of $KO_0$ (see Definition 5.1 of \cite{BL}) is as follows. Take self-adjoint unitaries in $\cup_{n\in\NN} M_{2n}(A)$ and mod out by homotopy, and direct summing of ``trivial unitaries'' of the form $\begin{pmatrix}1_m & 0 \\ 0 & -1_m\end{pmatrix}$. The inverse of $[u]$ in this definition of $KO_0(A)$ is $[-u]$.

The relation between the projection and unitary formulations of $KO_0$ is as follows. If there is a representative projection $p=p^2=p^*\in M_n(A)$, take the self-adjoint unitary 
\begin{equation}
u=\begin{pmatrix} 1_n-2p & 0 \\ 0 & -1_n \end{pmatrix}\in  M_{2n}(A).\label{eqn:projection.unitary}
\end{equation}
For instance, the zero projection corresponds to $\begin{pmatrix}1_n & 0 \\ 0 & -1_n\end{pmatrix}$.

In reverse, if $u$ is a self-adjoint unitary (i.e.\ a grading operator) in $M_{2n}(A)$, then $\frac{1}{2}(1_{2n}-u)$ is a projection in $M_n(A)$. So associate to $u$ the virtual projection $\frac{1}{2}(1_{2n}-u)\ominus 1_n$. This will recover the projection definition of $KO_0$, see Theorem 5.6 of \cite{BL} for details.
\end{rem}

\begin{rem}\label{rem:unitary.vs.proj}
We can equivalently work with projections/unitaries in the complexification $A^\CC$, at the expense of an additional reality requirement that $p^\flat=p$ or $u^\flat=u$. Since the $p$ or $u$ is self-adjoint, this additional requirement is equivalently $p^\iota=p$ or $u^\iota=u$; see Definition 5.9 of \cite{BL}.
\end{rem}

\begin{dfn}[$KO_4$ functor]
The $KO_4$ functor on real $C^*$-algebras may be defined as $KO_4(A)=KO_0(A^\HH)$. 
\end{dfn}

\begin{dfn}[$KO_3$ functor, Definition 6.17 of \cite{BL}]\label{defn:KO3}
Let $A$ be a unital real $C^*$-algebra, whose complexification $B=A^\CC$ has real structure denoted $\kappa_B$. The associated linear involutive antiautimorphism $\iota_B=\kappa_B\circ*$ is simply denoted $\iota$. Let $U_{2n}^{(3)}(B)$ be the subset of unitaries in $M_{2n}(B)$ satisfying
\begin{equation}
u^{\sharp\otimes\iota}=u,\label{eqn:KO3.condition}
\end{equation}
where the involutive antiautomorphism $\sharp\otimes\iota$ is given as in Example \ref{ex:matrix.algebra}.
Define
\begin{equation*}
U_\infty^{(3)}(B)=\lim_{\longrightarrow} U_{2n}^{(3)}(B),
\end{equation*}
where the embedding $U_{2n}^{(3)}(B)\hookrightarrow U_{2(n+m)}^{(3)}(B)$ is by direct summing with the identity $\mathbbm{1}_{2m}$ of $M_{2m}(B)$. Then $KO_3(A)$ is defined to be the set of path-components of $U_\infty^{(3)}(B)$, with composition law $[u]+[v]=\left[\begin{pmatrix} u & 0 \\ 0 & v \end{pmatrix}\right]$.
\end{dfn}
If $A$ is non-unital with unitization $A^+$ (and correspondingly for the complexifications), then $KO_3(A)$ is defined as the kernel of the map $KO_3(A^+)\rightarrow KO_3(\RR)$ induced by the map extracting the scalar part (Definition 6.20 of \cite{BL}).

As discussed in Example \ref{ex:matrix.algebra}, the condition $u^{\sharp\otimes\iota}=u$ in Eq.\ \eqref{eqn:KO3.condition} above is just the condition of being fixed under ${\rm Ad}_\Theta\otimes\kappa_B$, where $\Theta$ is the canonical quaternionic structure on $\CC^{2n}$ on which the $M_{2n}(\CC)$ factor acts.

\medskip

\begin{example}\label{ex:KR3.classifying.map}[Commutative case]
According to Definition \ref{defn:KO3}, a representative of a class in $KO_3(C(X,\iota))$ is a unitary $u\in U_{2n}(C(X))$ such that $u^{\sharp\circ\iota}=u$. Think of $u$ as a continuous map $X\rightarrow {\rm U}(2n)$. Then by construction,
\begin{equation*}
u^{\sharp\circ\iota}(x)=(u(\iota(x)))^\sharp,\qquad{\rm in\;\;} {\rm U}(2n).
\end{equation*}
Thus the condition $u^{\sharp\circ\iota}=u$ is just the statement that $u$ is a $\ZZ_2$-equivariant map from $(X,\iota)$ to $({\rm U}(2n),\sharp)$. In other words, we may equivalently formulate $KO_3(C(X,\iota))$ as the equivariant homotopy classes of maps
\begin{equation*}
X\rightarrow {\rm U}(\infty):=\varinjlim{\rm U}(2n).
\end{equation*}
In $KR$-theory language, we may also write
\begin{equation}
KO_3(C(X,\iota))\cong KR^{-3}(X,\iota)=[X,{\rm U}(\infty)]_{\ZZ_2}.\label{eqn:KR3.classifying.map}
\end{equation}
\end{example}

\subsubsection*{$\delta_4:KO_4\rightarrow KO_3$ connecting map}
Let 
\begin{equation}
0\rightarrow I\rightarrow \tilde{A}\rightarrow A\rightarrow 0\label{eqn:SES.real}
\end{equation}
be an exact sequence of real $C^*$-algebras, with $\tilde{A}, A$ unital. The connecting map $\delta_4:KO_4\rightarrow KO_3$ can be explicitly expressed as an exponential map, following Definition 8.3 of \cite{BL}.
For $KO_4(A)$, we are taking the $KO_0$ functor (unitary, complexified definition, see Remark \ref{rem:unitary.vs.proj}) applied to $A^\HH\equiv\HH\otimes_\RR A$. Thus the representatives are self-adjoint unitaries in
\begin{equation*}
U_{2n}((A^\HH)^\CC)\subset M_{2n}((\HH\otimes_\RR A)^\CC)\cong M_{2n}(M_2(A^\CC))\cong M_{4n}(A^\CC),
\end{equation*}
satisfying $u^{\sharp\otimes\iota}=u$. (Here $\iota$ is the canonical involutive antiautomorphism on $A^\CC$, given by $\kappa\circ*$.) Then define
\begin{equation}
\delta_4:KO_4(A)\rightarrow KO_3(I),\qquad [u]\mapsto [-{\rm exp}(\pi i \tilde{u})],\label{eqn:exponential.map}
\end{equation}
where $\tilde{u}\in M_{4n}((\tilde{A}^\HH)^\CC)$ is any (self-adjoint\footnote{The self-adjointness of the lift seems to have been left implicit in \cite{BL}.}) lift of $u$ satisfying $\tilde{u}^{\sharp\otimes\iota}=\tilde{u}$ and $-1\leq \tilde{u}\leq 1$. Here, one checks that the unitary $-{\rm exp}(\pi i \tilde{u})$ does represent a $KO_3(I)$ element, as in the proof of \cite{BL} Lemma 8.4, as follows. Under the map $M_{2n}(\tilde{A}^\CC)\rightarrow M_{2n}(A^\CC)$, the element $-{\rm exp}(\pi i \tilde{u})$ is mapped to $-{\rm exp}(\pi i u)=1_{4n}$ since the self-adjoint unitary $u$ has only $\pm 1$ eigenvalues. By exactness of \eqref{eqn:SES.real}, $-{\rm exp}(\pi i \tilde{u})$ can be regarded as an element of $(M_{2n}(I^\CC))^+$. The condition Eq.\ \eqref{eqn:KO3.condition} follows from $-{\rm exp}(\pi i \tilde{u})^{\sharp\otimes\iota}=-{\rm exp}(\pi i \tilde{u})^{\flat\circ*}=-{\rm exp}(-\pi i \tilde{u})^\flat=-{\rm exp}(\pi i \tilde{u})$.

\medskip

Let us rewrite Eq.\ \eqref{eqn:exponential.map} in the projection point of view for $KO_4$. The self-adjoint unitary $u\in M_{4n}(A^\CC)$ corresponds via Eq.\ \eqref{eqn:projection.unitary} to some projection $p\in M_{2n}(A^\CC)$ satisfying $p^{\sharp\otimes\iota}=p$ ($=p^\flat$ since $p^*=p$). The same formula, Eq.\ \eqref{eqn:projection.unitary}, for passing between unitaries and projections, will give a self-adjoint lift $\tilde{p}$ in $M_{2n}(\tilde{A}^\CC)$ of the projection $p$, satisfying $\tilde{p}^\flat=\tilde{p}$. Note that $\tilde{p}$ will not generally remain a projection. We may compute
\begin{align}
\delta_4[p]&=[-{\rm exp}(\pi i\,{\rm diag}(1_{2n}-2\tilde{p}, -1_{2n}))]\nonumber\\
&=[{\rm diag}(-{\rm exp}(-2\pi i\tilde{p}){\rm exp}(\pi i 1_{2n}),1_{2n})]\nonumber\\
&=[-{\rm exp}(-2\pi i\tilde{p}){\rm exp}(\pi i 1_{2n})]\nonumber\\
&=[{\rm exp}(-2\pi i\tilde{p})],\label{eqn:exponential.map.q}
\end{align}
which is just the exponential map applied to the lift $\tilde{p}$ of $p$ in $\tilde{A}$.

\subsection{Proof of time-reversal invariant bulk-boundary correspondence}\label{sec:BBC}
The $\Theta$-invariant part of the short exact sequence Eq.\ \eqref{eqn:complex.bulk.boundary.algebras} is the sequence of real $C^*$-algebras,
\begin{equation}
0\longrightarrow M_n((C^*_{r,\RR}(\ZZ^{d-1})\otimes\mathcal{K_\RR})^\HH)\longrightarrow M_n(C^*_{r,\RR}(\ZZ^{d-1}\times\NN)^\HH)\overset{q}{\longrightarrow} M_n(C^*_{r,\RR}(\ZZ^d)^\HH)\longrightarrow 0.\label{eqn:q.SES}
\end{equation}
Here $\mathcal{K}_\RR$ denotes the compact operators on the real Hilbert space $\ell^2(\NN;\RR)$. Let us study the exponential connecting map, Eq.\ \eqref{eqn:exponential.map.q}, for the sequence Eq.\ \eqref{eqn:q.SES}.

We had, for a spectrally gapped $\Theta$-invariant Hamiltonian $H$, that $P_{\rm Fermi}\in M_n(C^*_r(\ZZ^d)^\HH)$, thus defining a class
\begin{equation*}
[P_{\rm Fermi}]\in KO_0(C^*_{r,\RR}(\ZZ^d)^\HH)\cong KO_4(C^*_{r,\RR}(\ZZ^d))\cong\begin{cases}\ZZ_2\oplus\ZZ,\qquad\qquad\quad\;\; d=2,\\
\ZZ_2\oplus(\ZZ_2)^3\oplus\ZZ,\qquad d=3.\end{cases}
\end{equation*}
Applying the connecting map $\delta_4$ gives
\begin{equation*}
\delta_4:[P_{\rm Fermi}]\mapsto [{\rm exp}(-2\pi i \widetilde{P}_{\rm Fermi})]\in KO_3(C^*_{r,\RR}(\ZZ^{d-1}))\cong\begin{cases}
\ZZ_2,\qquad\qquad\;\; d=2,\\
\ZZ_2\oplus(\ZZ_2)^2,\quad d=3.
\end{cases}
\end{equation*}

\medskip

\subsubsection*{Computation of $\delta_4$} For computing $\delta_4:KO_4(C^*_{r,\RR}(\ZZ^{d}))\rightarrow KO_3(C^*_{r,\RR}(\ZZ^{d-1}))$, we observe that the short exact sequence of interest,
\begin{equation}
0\longrightarrow C^*_{r,\RR}(\ZZ^{d-1})\otimes\mathcal{K_\RR}\longrightarrow C^*_{r,\RR}(\ZZ^{d-1}\times\NN)\overset{q}{\longrightarrow} C^*_{r,\RR}(\ZZ^d)\longrightarrow 0,\label{eqn:basis.real.SES}
\end{equation}
is simply a Toeplitz-type extension. In more detail, it is the standard real Toeplitz extension,
\begin{equation}
0\longrightarrow \mathcal{K}_\RR\longrightarrow C^*_{r,\RR}(\NN)\longrightarrow C^*_{r,\RR}(\ZZ)\longrightarrow 0,\label{eqn:real.Toeplitz}
\end{equation}
tensored with $C^*_{r,\RR}(\ZZ^{d-1})$. Equivalently, for ${\rm id}:\ZZ\rightarrow {\rm Aut}(C^*_{r,\RR}(\ZZ^{d-1}))$ the trivial action, we have $C^*_{r,\RR}(\ZZ^d)\cong  C^*_{r,\RR}(\ZZ^{d-1})\rtimes_{\rm id}\ZZ$ as the associated crossed product algebra. Then \eqref{eqn:basis.real.SES} is the associated Toeplitz-type extension in the sense of \cite{PV}.

The point now is that the long exact sequence in $KO$-theory for \eqref{eqn:basis.real.SES} simplifies to the Pimsner--Voiculescu (PV) sequence (the real PV-sequence is explained in \cite{Ros-PV}, Theorem 2.7),
\begin{equation*}
\cdots KO_4(C^*_{r,\RR}(\ZZ^{d-1}))\overset{1-{\rm id}_*}{\longrightarrow}KO_4(C^*_{r,\RR}(\ZZ^{d-1}))\overset{j_*}{\rightarrow} KO_4(C^*_{r,\RR}(\ZZ^{d}))\overset{\delta_4}{\rightarrow} KO_3(C^*_{r,\RR}(\ZZ^{d-1}))\overset{1-{\rm id}_*}{\longrightarrow}\cdots
\end{equation*}
where $j$ denotes the inclusion of $C^*_{r,\RR}(\ZZ^{d-1})$ into the crossed product algebra \mbox{$C^*_{r,\RR}(\ZZ^{d-1})\rtimes_{\rm id}\ZZ\cong C^*_{r,\RR}(\ZZ^{d})$}. The trivial (auto)morphism ${\rm id}$ induces the identity map in $K$-theory, thus $1-{\rm id}_*=0$, and the PV sequence simplifies into short exact sequences such as
\begin{equation}
0\rightarrow  KO_4(C^*_{r,\RR}(\ZZ^{d-1}))\overset{j_*}{\rightarrow} KO_4(C^*_{r,\RR}(\ZZ^{d}))\overset{\delta_4}{\rightarrow} KO_3(C^*_{r,\RR}(\ZZ^{d-1}))\rightarrow 0.\label{eqn:PV.simplify}
\end{equation}

\subsubsection*{Connecting map as a Gysin map}
There is a geometric way to understand $\delta_4$ as a Gysin map in $KR$-theory. For convenience, we write $\tilde{S}^1$ for the circle with flip involution, sometimes called a `Real' circle. Let $(X,\iota)$ be the compact involutive space which is Gelfand-dual to a real unital commutative $C^*$-algebra $A$. Then the trivial crossed product $A\rtimes_{\rm id}\ZZ$ is another real unital commutative $C^*$-algebra corresponding to $(X\times \tilde{S}^1,\iota\times\iota_{\rm flip})$. The inclusion $j:A\rightarrow A\rtimes_{\rm id}\ZZ$ corresponds to the projection $\pi:X\times\tilde{S}^1\rightarrow X$ defining $X\times\tilde{S}^1$ as a (trivial) `Real' circle bundle over $X$. Associated to $\pi$ is the `Real' Gysin sequence in $KR$-theory \cite{Atiyah-KR}, which may be derived from the Thom isomorphism\footnote{Let $D^{1,1}$ be the unit disc in the plane with involution $(x,y)\mapsto (x,-y)$, whose boundary $S^{1,1}$ is $\tilde{S}^1$. In the $KR$-theory exact sequence for the pair $(X\times D^{1,1}, X\times S^{1,1})$, substitute the Thom isomorphism $KR^{-n}(X\times D^{1,1}, X\times S^{1,1})\cong KR^{-n}(X)$ to obtain the Gysin sequence.}
\begin{equation*}
\cdots \overset{\pi_*}{\rightarrow} KR^{-4}(X)\overset{\cup\chi}{\rightarrow} KR^{-4}(X)\overset{\pi^*}{\rightarrow} KR^{-4}(X\times\tilde{S}^1)\overset{\pi_*}{\rightarrow} KR^{-3}(X)\overset{\cup\chi}{\rightarrow}\cdots.
\end{equation*}
Here $\chi$ is the reduced $KR^0$ class of the associated `Real' line bundle, which is trivial in this case. So the Gysin sequence simplifies to
\begin{equation}
0\rightarrow KR^{-4}(X)\overset{\pi^*}{\rightarrow} KR^{-4}(X\times\tilde{S}^1)\overset{\pi_*}{\rightarrow} KR^{-3}(X)\rightarrow 0.\label{eqn:gysin.simplify}
\end{equation}
For $X=\TT^{d-1}$, Eq.\ \eqref{eqn:gysin.simplify} is the geometric version of Eq.\ \eqref{eqn:PV.simplify}.

\medskip

\medskip

For explicit description of the connecting map $\delta_4$, we fix some generators for the $K$-theory groups involved. First, there is a natural splitting $KR^{-4}(\TT^2)\cong \widetilde{KR}^{-4}(\TT^2)\oplus KR^{-4}(\pt)=\ZZ_2\oplus\ZZ$; accordingly we write $(\nu; n)\in KR^{-4}(\TT^2)$ with $n\in\ZZ$, and $\nu\in\ZZ_2$ the mod-2 invariant. 

For $\widetilde{KR}^{-4}(\TT^3)\cong(\ZZ_2)^4$, the three ``weak'' generators are pulled back from the generator of $\widetilde{KR}^{-4}(\TT^2)\cong\ZZ_2$ along the maps $\pi_i$ projecting out the $i$-th coordinate, $i=1,2,3$. There remains a ``strong'' generator (a possible choice is described in Remark \ref{rem:weak.strong}, or Eq.\ \eqref{eqn:KR.torus}). Accordingly, we write $(\nu_0;\nu_1,\nu_2,\nu_3;n)\in \ZZ_2\oplus(\ZZ_2)^3\oplus\ZZ\cong KR^{-4}(\TT^3)$, and call $\nu_0$, $\nu_i$ the strong (mod-2) invariant and $i$-th weak invariant, respectively. In both the $d=2$ and $d=3$ cases, the integer invariant $n$ is the ``quaternionic rank'' invariant.

To translate the above into the operator formulation, let $j_i$ be the inclusion $C^*_{r,\RR}(\ZZ^2)\rightarrow C^*_{r,\RR}(\ZZ^3)$ induced from the sublattice $\ZZ^2$ orthogonal to the $i$-th generator of $\ZZ^3$. Then $\nu_i$ lies in the subgroup $(j_i)_*\widetilde{KO}_4(C^*_{r,\RR}(\ZZ^2))\subset \widetilde{KO}_4(C^*_{r,\RR}(\ZZ^3))$.

\begin{thm}\label{thm:BBC}
Let $P_{\rm Fermi}$ be the Fermi projection for a $\Theta$-invariant gapped, local Hamiltonian $H$ in dimension $d=2,3$. For $d=2$, suppose the mod-2 invariant of $P_{\rm Fermi}$ is non-trivial. For $d=3$, suppose any of the strong invariant $\nu_0$ or the weak invariants $\nu_1$, $\nu_2$ of $P_{\rm Fermi}$ is/are non-trivial. Then the half-space Hamiltonian $\widetilde{H}$ necessarily has the spectral gap-filling property in the sense of Definition \ref{dfn:gap.filling.property}.
\end{thm}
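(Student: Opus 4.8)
The plan is to reduce the claimed gap-filling statement to a purely $K$-theoretic non-triviality of the connecting map $\delta_4$ applied to $[P_{\rm Fermi}]$, using the logic already set up in \S\ref{sec:gap-filling.invariant}. The key observation there is: if $\widetilde{H}$ had a spectral gap at some $\lambda\in(-\epsilon,\epsilon)$, then $\widetilde{P}_{\rm Fermi}=f(\widetilde{H})$ (for $f$ chosen to interpolate only within that sub-gap) would be an honest $\Theta$-invariant \emph{projection} lift of $P_{\rm Fermi}$ in $M_n(C^*_{r,\RR}(\ZZ^{d-1}\times\NN)^\HH)$. But $\delta_4[P_{\rm Fermi}]=[\mathrm{exp}(-2\pi i\widetilde{P}_{\rm Fermi})]$ (Eq.\ \eqref{eqn:exponential.map.q}); if $\widetilde{P}_{\rm Fermi}$ is a projection, then $\mathrm{exp}(-2\pi i\widetilde{P}_{\rm Fermi})=1$, so $\delta_4[P_{\rm Fermi}]=0$. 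Contrapositively, if I can show $\delta_4[P_{\rm Fermi}]\neq 0$ in $KO_3(C^*_{r,\RR}(\ZZ^{d-1}))$ under the stated hypotheses, then no such sub-gap exists, i.e.\ $\widetilde{H}$ has the gap-filling property. So the entire content reduces to: \emph{the PV connecting map $\delta_4:KO_4(C^*_{r,\RR}(\ZZ^d))\to KO_3(C^*_{r,\RR}(\ZZ^{d-1}))$ is non-zero on the relevant mod-2 generators}.

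For the $K$-theory computation I would use the simplified PV sequence \eqref{eqn:PV.simplify}: since ${\rm id}_*$ is the identity, $\delta_4$ is \emph{surjective} with kernel exactly $j_*KO_4(C^*_{r,\RR}(\ZZ^{d-1}))$. So it suffices to show that the relevant generators of $KO_4(C^*_{r,\RR}(\ZZ^d))$ do \emph{not} lie in the image of $j_*$. For $d=2$: $j_*:KO_4(C^*_{r,\RR}(\ZZ))\cong\ZZ\to KO_4(C^*_{r,\RR}(\ZZ^2))\cong\ZZ_2\oplus\ZZ$ hits only the free $\ZZ$ (quaternionic-rank) summand, so anything with non-trivial mod-2 invariant $\nu$ maps to the generator of $KO_3(C^*_{r,\RR}(\ZZ))\cong\ZZ_2$; hence $\nu\neq 0\Rightarrow\delta_4[P_{\rm Fermi}]\neq 0$. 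For $d=3$: I would pick the splitting $KO_4(C^*_{r,\RR}(\ZZ^2))\cong\ZZ_2\oplus\ZZ$ compatible with the two inclusions $j_1,j_2$ realizing $\ZZ^2\hookrightarrow\ZZ^3$, and use the geometric Gysin description \eqref{eqn:gysin.simplify} to identify $\delta_4$ with the Gysin pushforward $\pi_*$ along $\TT^3=\TT^2\times\tilde{S}^1\to\TT^2$ (for the distinguished circle factor). Under $\pi_*$, the strong generator of $\widetilde{KR}^{-4}(\TT^3)$ and the two weak generators $\nu_1,\nu_2$ pulled back from the $\TT^2$-factors transverse to the truncated direction map onto the strong generator and the two surviving weak generators of $KO_3(C^*_{r,\RR}(\ZZ^2))\cong\ZZ_2\oplus(\ZZ_2)^2$, respectively, while $j_*$ accounts for the kernel (the free summand plus the third weak generator $\nu_3$ pulled back along $\pi$ itself). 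This is exactly the naturality/compatibility of PV sequences under the sublattice inclusions.

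The concrete bookkeeping I would carry out: match the operator-algebraic splitting of $KO_4(C^*_{r,\RR}(\ZZ^d))$ in \eqref{eqn:KO.torus} with the topological splitting of $KR^{-4}(\TT^d,\iota_{\rm flip})$ in \eqref{eqn:KR.torus} via the real Baum--Connes/T-duality identifications, then check on each wedge summand $S^{1,j}$ that the Gysin map for the disc/sphere pair behaves as the suspension-type connecting map $\widetilde{KR}^{-4}(S^{1,j})\to\widetilde{KR}^{-3}(S^{1,j-1})$ — equivalently, that collapsing the $\tilde S^1$-factor realizes the appropriate $KO^{-4+j}\to KO^{-4+(j-1)}$ map (which on the relevant $j$ is an isomorphism $\ZZ_2\xrightarrow{\sim}\ZZ_2$). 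The one genuinely delicate point — and the main obstacle — is verifying that the abstract exponential connecting map $\delta_4$ of \cite{BL}, as written in Eq.\ \eqref{eqn:exponential.map.q}, really does agree with the geometric Gysin map $\pi_*$ of \eqref{eqn:gysin.simplify} under the dictionary between $KO$-theory of real $C^*$-algebras and $KR$-theory of involutive spaces; this is a compatibility of boundary maps in two different exact sequences (the real Toeplitz/PV sequence versus the $KR$-Gysin/Thom sequence) and needs the index-theoretic identification of the Toeplitz extension's connecting map with the Gysin pushforward — a real-coefficient analogue of the familiar complex statement. Once that identification is in place, the rest is the elementary linear algebra of which generators survive $j_*$, and the theorem follows by the contrapositive argument above.
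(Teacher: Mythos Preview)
Your proposal is correct and follows essentially the same route as the paper: reduce to $\delta_4[P_{\rm Fermi}]\neq 0$ via the exponential-map/gap-filling contrapositive, then use the simplified PV sequence \eqref{eqn:PV.simplify} to identify $\ker\delta_4=(j_d)_*KO_4(C^*_{r,\RR}(\ZZ^{d-1}))$ and check that the relevant mod-2 generators lie outside this image. One remark: the ``main obstacle'' you flag---identifying the operator-algebraic $\delta_4$ with the geometric Gysin map $\pi_*$---is not actually needed for the argument, since the kernel computation comes directly from the PV sequence for the real Toeplitz extension; the paper invokes the Gysin description only as an equivalent restatement, not as an independent input.
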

\begin{proof}
Since the connecting map $\delta_4$ is an exponential map (Eq.\ \eqref{eqn:exponential.map.q}), as explained in \S\ref{sec:gap-filling.invariant}, gap-filling will be guaranteed whenever $\delta_4[P_{\rm Fermi}]\neq 0$. From Eq.\ \eqref{eqn:PV.simplify}, ${\rm ker}\,\delta_4=(j_d)_*(KO_4(C^*_{r,\RR}(\ZZ^{d-1})))$. Equivalently, from Eq.\ \eqref{eqn:gysin.simplify}, ${\rm ker}\,((\pi_d)_*)=\pi_d^*(KR^{-4}(\TT^{d-1}))$. So for $d=2$, $(\nu;n)\not\in {\rm ker}\,\delta_4$ whenever $\nu\neq 0$; similarly, for $d=3$, $(\nu_0;\nu_1,\nu_2,\nu_3;n)\not\in {\rm ker}\,\delta_4$ whenever $\nu_0,\nu_1,\nu_2\neq 0$. 
\end{proof}

\begin{rem}
A ``bulk-boundary correspondence'' is typically of the form ``non-trivial bulk invariant $\Rightarrow$ non-trivial boundary invariant''. Later, we will see that the half-space Hamiltonian $\widetilde{H}$ has a ``boundary invariant'' in $KO_3(C^*_{r,\RR}(\ZZ^{d-1}))$, see Definition \ref{defn:boundary.K.class}. We show that $\delta_4[P_{\rm Fermi}]=[\widetilde{H}]$ (Corollary \ref{cor:BBC}), whence Theorem \ref{thm:BBC} may be rephrased in the above ``bulk-boundary correspondence'' form.
\end{rem}

\begin{rem}
Theorem \ref{thm:BBC} in its $K$-theoretic form may appear rather abstract, but we will make it more descriptive by passing to cohomology. For the bulk mod-2 invariants of $P_{\rm Fermi}$, they localize to sign invariants at the fixed points of $\TT^d$ as explained in \S\ref{sec:FKMM.invariant}. For the boundary invariants $\delta_4[P_{\rm Fermi}]$ (or $[\widetilde{H}]$) in $KO_3(C^*_{r,\RR}(\ZZ^{d-1}))\cong KR^{-3}(\TT^{d-1})$, these will later be understood in cohomology as the `Real' gerbe invariants associated to $\widetilde{H}$.
\end{rem}

\begin{rem}[Related literature on Class AII bulk-edge correspondence]\label{rem:BBC.literature}
Bulk-edge correspondence for a certain subclass of 2D time-reversal invariant Hamiltonians defined by second-order difference equations, was proved in \cite{GP} using methods which address the spectrum more directly. $K$-theoretic methods, generalising \cite{PSB} to the Class AII case in general dimensions, were adopted in \cite{T-class,MT-higher,Kellendonk,BKR,GSB,AMZ}, and recalled here briefly. In \cite{T-class}, operator $K$-theory techniques for studying Class AII bulk topological insulators (and also general symmetry classes) were introduced, while in \cite{MT-higher}, non-trivial connecting maps to the boundary were proposed and analysed with the aid of T-duality. However, it was left implicit that these maps coincided with the Toeplitz extensions' connecting map $\delta_4$. In \cite{AMZ}, working with formulations of $K$-theory due to van Daele and Kasparov, the physical meaning of a non-trivial $K$-theory connecting map was given in terms of gaplessness of the boundary theory, although the connecting map was not explicitly computed. In \cite{BKR}, the bivariant Kasparov theory provided $\ZZ_2$-valued pairings of the ``strong'' $K$-theory classes with certain $K$-homology classes, which are preserved under the Toeplitz connecting maps from bulk invariants to abstract boundary invariants. However, the explicit Kasparov pairing computation (which is generally very hard), and the physical meaning of the boundary invariants, were not provided. Other (non-trivial) $\ZZ_2$-valued pairings for the bulk theory were studied in \cite{Kellendonk, GSB}, but a duality theory for passing these on to the boundary is not (yet) available. 

We also mention \cite{Gaw}, which, in $d=2,3$, regards $u_p=1-P_{\rm Fermi}$ as an equivariant field $\TT^d\rightarrow {\rm U}(2n)$ with involution $u\mapsto\Theta u \Theta^{-1}$ on the latter (note: this involution happens to coincide with our Eq.\ \eqref{eqn:sharp.unitary.formula} for $u_p$ self-adjoint). By considering an equivariant extension of the basic gerbe over ${\rm U}(2n)$ (with connection), he constructs, in $d=2$, a notion of ``square root of gerbe holonomy'' along $u_p$, as well as a related index for $d=3$. These indices were shown to be $\ZZ_2$-valued, and were identified with the Fu--Kane--Mele formulation of the strong \emph{bulk} $\ZZ_2$-invariant for topological insulators (see Eq.\ \eqref{eqn:FKM.POS}). In \S\ref{sec:real.gerbes}, we study the rather different notion of `Real' gerbes, with the aim of constructing a gerbe out of spectral data of $\widetilde{H}$, living over the \emph{boundary} Brillouin zone $\TT^{d-1}$.
\end{rem}

\subsection{Sign invariants of `Quaternionic' Bloch bundles}\label{sec:FKMM.invariant}

\subsubsection{FKMM invariant}
Let $(X,\iota)$ be an involutive space with the structure of a $\ZZ_2$-CW-complex. 
We write $H^n_\pm(X)$ for the equivariant cohomology with local coefficients $H^n_{\Z_2}(X;\Z(1))$, where $\Z(1)$ is the local system $X\times\ZZ$ made equivariant by the action $(x,l)\mapsto (\iota(x),-l)$. A \v{C}ech formulation of this cohomology is briefly discussed in \S\ref{sec:Real.gerbe.class}.

In \cite{D-G-AII, D-G-cohom}, a characteristic class theory was developed for `Quaternionic' bundles over $(X,\iota)$ --- the so-called \emph{FKMM invariant} $\kappa$, which we briefly recall. 

\begin{itemize}
\item
Let $W\subset X$ be an invariant subspace. Consider a pair $(L,s)$ comprising a `Real' line bundle\footnote{This means a complex line bundle with an antiunitary lift of $\iota$ squaring to $+1$.} $L\rightarrow X$ and a nowhere vanishing equivariant section $s\in\Gamma(W, L|_W)$. Two pairs $(L,s)$ and $(L', s')$ are isomorphic if there is an isomorphism $f:L\rightarrow L'$ of `Real' line bundles such that $f|_W\circ s=s'$. Then the isomorphism classes of pairs $(L,s)$ are classified by the relative cohomology $H^2_\pm(X,W)$. (If $W=\emptyset$, this is the `Real' Chern class of $L$ introduced in \cite{Kahn}.)
\item 
Let $\pi:E\rightarrow X$ be a `Quaternionic' vector bundle of even rank (over $\CC$). If the fixed-point set $X^\iota$ is non-empty, then over $X^\iota$, there uniquely exists a nowhere vanishing (normalized equivariant) section $\sigma_E\in\Gamma(X^\iota,{\rm det}\, E|_{X^\iota})$ of the `Real' determinant line bundle. 
\item By definition, the FKMM invariant $\kappa(E)\in H^2_\pm(X,X^\iota)$ is the isomorphism class of the pair $({\rm det}\, E, \sigma_E)$.
\end{itemize}
Let ${\rm Vect}^{2m}_Q(X)$ denote the set of isomorphism classes of `Quaternionic' vector bundles over $X$ of (complex) rank $2m$. For each even rank $2m$, the FKMM invariant gives a map,
\begin{equation*}
\kappa: {\rm Vect}^{2m}_Q(X)\rightarrow H^2_\pm(X, X^\iota).
\end{equation*}
Alternatively, we can think of the FKMM invariant as a homomorphism
\begin{equation*}
\kappa:\widetilde{KQ}^0(X)=\widetilde{KR}^{-4}(X)\rightarrow H^2_\pm(X, X^\iota).
\end{equation*}

\subsubsection{Sign formulae for FKMM invariant for low dimensional tori}\label{sec:FKMM.sign} 
Generally speaking, for a given dimension $d$ of $X$, there is a \emph{stable rank} $2\sigma$ such that every `Quaternionic' bundle of rank $2m>2\sigma$ over $X$ splits into the direct sum of a rank $2\sigma$ `Quaternionic' bundle and a trivial rank $2(m-\sigma)$ `Quaternionic' bundle. 

We are interested in $X=\TT^d, d=2,3$ with the flip involution in each circle factor. Then $X^\iota$ is the finite set $\{\pm 1\}^d$ with $2^d$ points. In these cases, the stable rank condition is already attained for $\sigma=1$ (Theorem 4.2 of \cite{D-G-cohom}), and $\kappa$ is actually an \emph{isomorphism} of groups,
\begin{equation*}
\kappa:\widetilde{KQ}^0(X)=\widetilde{KR}^{-4}(X)\overset{\cong}{\longrightarrow} H^2_\pm(X, X^\iota).
\end{equation*}
Furthermore, one may verify that $H^2_\pm(X)=0$ for $X=\TT^d$ (Prop A.2 of \cite{D-G-AII}), so the relative cohomology sequence gives
\begin{equation*}
H^2_\pm(X, X^\iota)\cong H^1_\pm(X^\iota)/j^*H^1_\pm(X)
\end{equation*}
where $j^*$ is the restriction from $X$ to $X^\iota$ (we will often leave $j^*$ implicit in what follows). Generally speaking, we have $H^1_\pm(X)\cong [X,{\rm U}(1)]_{\ZZ_2}$ where ${\rm U}(1)$ is given the complex conjugation involution (Prop.\ A.2 of \cite{Gomi}). In particular, $H^1_\pm(X^\iota)$ is just the set of $\pm 1$-valued \emph{sign maps}, ${\rm Map}(X^\iota,\ZZ_2)$.

We can write the FKMM invariant
\begin{equation*}
\kappa:KR^{-4}(X)\rightarrow H^2_\pm(X,X^\iota)\cong {\rm Map}(X^\iota,\ZZ_2)/[X,{\rm U}(1)]_{\ZZ_2}
\end{equation*}
in terms of sign maps as follows. By Prop.\ 4.3 of \cite{D-G-AII}, a `Quaternionic' bundle $E$ over $\TT^d$ is necessarily trivial in the complex category, so that it has a global orthonormal frame of sections $\{t_1,\ldots,t_{2m}\}$. If $\widehat{\Theta}$ denotes the `Quaternionic' structure on the bundle $E$, we can construct its associated \emph{sewing matrix} \mbox{$w:X\rightarrow {\rm U}(2m)$} with components
\begin{equation*}
w_{ji}(x)=\langle t_j(\iota(x)) | \widehat{\Theta}(t_i(x))\rangle,
\end{equation*}
satisfying $w(\iota(x))=-w^{\rm t}(x), x\in X$, see \S4.2 of \cite{D-G-AII}. In particular, $w$ is antisymmetric at each fixed point. The determinant line bundle ${\rm det}\,E$ is a trivial `Real' line bundle $X\times\CC$ whose `Real' structure is
\begin{equation*}
{\rm det}(\widehat{\Theta}):(x,\lambda)\mapsto (\iota(x),{\rm det}(w)(x)\overline{\lambda}),
\end{equation*}
with ${\rm det}(w)(x)={\rm det}(w)(\iota(x))$. Triviality of ${\rm det}\,E$ means that there is a map $q_w:X\rightarrow {\rm U}(1)$ such that ${\rm det}(w)(x)=q_w(x)q_w(\iota(x))$. Thus at a fixed point, $q_w$ is a square root of ${\rm det}(w)$, and we obtain a sign map
\begin{equation*}
\mathfrak{d}_w:X^\iota\ni y\mapsto \frac{q_w(y)}{{\rm Pf}(w(y))}\in\ZZ_2,
\end{equation*}
which may be verified to be independent of the choice of orthonormal frame defining $w$. We observe that there is still a gauge freedom to modify $q_w$ by equivariant maps $\varepsilon\in [X,{\rm U}(1)]_{\ZZ_2}$. So only the class of $\mathfrak{d}_w$ in ${\rm Map}(X^\iota,\ZZ_2)/[X,{\rm U}(1)]_{\ZZ_2}$ is gauge-invariant.

With the above identifications, for $X=\TT^d, d=2,3$, the FKMM invariant is the assignment (Prop.\ 4.4 of \cite{D-G-AII})
\begin{equation*}
KR^{-4}(X)\ni [E]\mapsto [\mathfrak{d}_w]\in {\rm Map}(X^\iota,\ZZ_2)/[X,{\rm U}(1)]_{\ZZ_2}
\end{equation*}

\begin{lem}[Prop.\ A.2 of \cite{D-G-AII}]\label{lem:tori.sign.maps}
Let $\TT^d$ have the complex conjugation/flip involution in each coordinate, so the fixed points are labelled by $\{\pm 1\}^d$. Then
\begin{equation*}
{\rm Map}((\TT^d)^\iota,\ZZ_2)/[\TT^d,{\rm U}(1)]_{\ZZ_2}\cong (\ZZ_2)^{2^d-(d+1)}=\begin{cases}\ZZ_2,\;\;\quad d=2,\\ (\ZZ_2)^4,\;\;\quad d=3.\end{cases}
\end{equation*}
The equivariant maps in $[\TT^d,{\rm U}(1)]_{\ZZ_2}$ are generated by the $i$-th coordinate maps $x_i:\TT^d\rightarrow \TT\cong {\rm U}(1)$, $i=1,\ldots,d$, together with the constant map $-1$.
\end{lem}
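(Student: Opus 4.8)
The plan is to compute the two groups in the quotient --- ${\rm Map}((\TT^d)^\iota,\ZZ_2)$ and $[\TT^d,{\rm U}(1)]_{\ZZ_2}$ --- together with the restriction homomorphism between them, using the identifications $H^1_\pm(X)\cong[X,{\rm U}(1)]_{\ZZ_2}$ and $H^1_\pm(X^\iota)\cong{\rm Map}(X^\iota,\ZZ_2)$ recalled above. First I would note that $(\TT^d)^\iota=\{\pm 1\}^d$ has $2^d$ elements, so ${\rm Map}((\TT^d)^\iota,\ZZ_2)\cong(\ZZ_2)^{2^d}$, and I would record the elementary fact that the $2^d$ ``square-free monomials'' $(\epsilon_1,\dots,\epsilon_d)\mapsto\prod_{i\in S}\epsilon_i$, $S\subseteq\{1,\dots,d\}$, form a $\ZZ_2$-basis of this group (the character basis on $\{\pm 1\}^d$; the only multiplicative relation among the coordinate functions is the trivial one).

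Next I would establish that $[\TT^d,{\rm U}(1)]_{\ZZ_2}\cong\ZZ^d\oplus\ZZ_2$, generated by the coordinate maps $x_i:\TT^d\to\TT\cong{\rm U}(1)$ and the constant map $-1$. Forgetting the involution yields a group homomorphism $[\TT^d,{\rm U}(1)]_{\ZZ_2}\to[\TT^d,{\rm U}(1)]\cong H^1(\TT^d;\ZZ)\cong\ZZ^d$ which is onto (the $x_i$ hit a basis) and which splits since $\ZZ^d$ is free. Its kernel consists of equivariant $g$ that are non-equivariantly null-homotopic; such a $g$ lifts through the covering ${\rm U}(1)=\RR/\ZZ$ to $\tilde{g}:\TT^d\to\RR$, and equivariance forces $\tilde{g}\circ\iota=m-\tilde{g}$ for a constant $m\in\ZZ$. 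I would then set $h:=\tilde{g}-m/2$, observe $h\circ\iota=-h$, and note that the linear contraction $h_s:=(1-s)h$ descends to an equivariant homotopy from $g$ (if $m$ is even) or from $-g$ (if $m$ is odd) to the constant $1$. Hence the kernel is exactly $\{[1],[-1]\}\cong\ZZ_2$, generated by the constant $-1$, which proves the ``generated by'' clause of the lemma.

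Then I would push the generators forward under the restriction $j^*$ to $(\TT^d)^\iota$: the map $x_i$ restricts to $(\epsilon_1,\dots,\epsilon_d)\mapsto\epsilon_i$, and the constant $-1$ to the constant $-1$. These $d+1$ elements of $(\ZZ_2)^{2^d}$ are $\ZZ_2$-independent --- the $d$ coordinate functions lie in the character basis, and the constant $-1$ is not in their span since every product of coordinate functions has value $1$ at $(1,\dots,1)$ whereas the constant $-1$ has value $-1$ there. So $j^*[\TT^d,{\rm U}(1)]_{\ZZ_2}$ is a $\ZZ_2$-subspace of dimension $d+1$, and, everything being a $\ZZ_2$-vector space, the quotient is $(\ZZ_2)^{2^d}/(\ZZ_2)^{d+1}\cong(\ZZ_2)^{2^d-(d+1)}$, which for $d=2,3$ is $\ZZ_2$ and $(\ZZ_2)^4$ respectively.

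The hard part will be the kernel computation in the second step: verifying that an equivariant map which becomes null-homotopic after forgetting the involution is already equivariantly homotopic to one of the constants $\pm 1$. Everything else is bookkeeping --- the character-basis fact for $\{\pm 1\}^d$ and the dimension count. As an alternative to the covering-space argument one could compute $[\TT^d,{\rm U}(1)]_{\ZZ_2}=H^1_\pm(\TT^d)$ via an equivariant K\"unneth or Mayer--Vietoris decomposition built from the ``Real'' circle $\tilde{S}^1$, but I expect the covering-space route to be cleaner and to exhibit the generators most directly.
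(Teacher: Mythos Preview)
The paper does not supply its own proof of this lemma; it is stated with a citation to Prop.\ A.2 of \cite{D-G-AII} and then used. Your self-contained argument fills this gap and is essentially correct: the covering-space computation of the kernel of $[\TT^d,{\rm U}(1)]_{\ZZ_2}\to[\TT^d,{\rm U}(1)]$ is the right idea and cleanly exhibits the generators $x_1,\dots,x_d,-1$.

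One point needs fixing. Your claim that the $2^d$ square-free monomials $\chi_S(\epsilon)=\prod_{i\in S}\epsilon_i$ form a $\ZZ_2$-basis of ${\rm Map}(\{\pm 1\}^d,\ZZ_2)$ is false. In the multiplicative group ${\rm Map}(\{\pm 1\}^d,\{\pm 1\})$ (a $\ZZ_2$-vector space of dimension $2^d$) these characters satisfy $\chi_S\cdot\chi_T=\chi_{S\triangle T}$, so they span only a $d$-dimensional subspace, and $\chi_\emptyset=1$ is the identity rather than a basis vector. (You may be recalling that the $\chi_S$ form an $\RR$- or $\CC$-basis of ${\rm Map}(\{\pm 1\}^d,\RR)$; that is a different statement.) Fortunately your argument never actually uses this claim. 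All you need is that the $d+1$ specific elements $\chi_{\{1\}},\dots,\chi_{\{d\}},-1$ are $\ZZ_2$-independent, and you already give the check for $-1$ (evaluate at $(1,\dots,1)$); the independence of the $\chi_{\{i\}}$ among themselves follows by evaluating at the points with exactly one coordinate equal to $-1$. Replace the character-basis sentence with this direct verification and the proof stands.
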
 

With a little thought, we see that for $X=\TT^2$, the class of a sign map $\mathfrak{d}$ in ${\rm Map}((\TT^d)^\iota,\ZZ_2)/[\TT^2,{\rm U}(1)]_{\ZZ_2}\cong\ZZ_2$ is given by the gauge-invariant product of signs
\begin{equation}
\Pi\mathfrak{d}:=\Pi_{y\in(\TT^2)^\iota}\mathfrak{d}(y)\label{eqn:POS}
\end{equation}
over the four fixed points. Thus the FKMM invariant is the map
\begin{align}
\ZZ_2\cong \widetilde{KR}^{-4}(\TT^2)&\rightarrow {\rm Map}((\TT^2)^\iota,\ZZ_2)/[\TT^2,{\rm U}(1)]_{\ZZ_2}\cong\ZZ_2 \nonumber \\
[E]&\mapsto \Pi\mathfrak{d}_w.\label{eqn:FKM.POS}
\end{align}
Historically, the product-of-signs formula on the right side was written down by physicists Fu--Kane--Mele \cite{FKM}, although its role in the `Quaternionic' characteristic class theory was only investigated later on \cite{D-G-AII, D-G-cohom}. 

\begin{rem}[Weak and strong generators]\label{rem:weak.strong}
For $X=\TT^3$, we can pull back the above 2D FKMM invariant on $\TT^2$ in three canonical ways. This accounts for the three ``weak'' generators $\mathfrak{d}_1, \mathfrak{d}_2, \mathfrak{d}_3$ of $(\ZZ_2)^3\subset\widetilde{KR}^{-4}(\TT^3)\cong(\ZZ_2)^4$. Consider the product-of-signs map over the eight fixed points,
\begin{equation*}
\Pi:\widetilde{KR}^{-4}(\TT^3)\overset{\cong}{\rightarrow} {\rm Map}((\TT^3)^\iota,\ZZ_2)/[\TT^3,{\rm U}(1)]_{\ZZ_2}\rightarrow\ZZ_2.
\end{equation*}
Note that for each $i$, we have $\Pi\mathfrak{d}_i=+1$ (that is, $\mathfrak{d}_i\in{\rm ker}\,\Pi$), since the fixed point set of $\TT^3$ is a double cover of the fixed point set of $\TT^2$, and $\mathfrak{d}_i$ is pulled back under the covering map.
To fix a splitting $\widetilde{KR}^{-4}(\TT^3)\cong{\rm ker}\,\Pi\oplus\ZZ_2$, pick the sign map $\mathfrak{d}_{\rm strong}$ whose value is $-1$ at the fixed point $(+1,+1,+1)$ and $+1$ at the remaining seven fixed points (so $\Pi\mathfrak{d}_{\rm strong}=-1$). We call $\mathfrak{d}_{\rm strong}$ a strong generator in $\widetilde{KR}^{-4}(\TT^3)$.
\end{rem}

\section{`Real' gerbes}\label{sec:real.gerbes}

There are a number of equivalent realisations of gerbes \cite{Hi,Murray}, but in all cases, they are classified by the third integral cohomology group $H^3(X;\ZZ)$ \cite{MS}. From this viewpoint, a `Real' gerbe on a space $X$ with involution can be considered to be any geometric object classified by the equivariant cohomology $H^3_{\pm}(X) = H^3_{\Z_2}(X; \Z(1))$ with local coefficients.

\subsection{A definition of `Real' gerbes}
Let $(X,\iota)$ be a space with involution. Given a collection $\{U_i\}_{i\in I}$ of open subsets, we write $U_{i_1 i_2\ldots i_n}=U_{i_1}\cap U_{i_2}\cap\ldots\cap U_{i_n}$.

We define a `Real' gerbe on $X$ to be
$$
\G = (\U, L_{ij}, \phi_{ijk}, J_i, \tau_i, \rho_{ij})
$$
consisting of the following data:

\begin{enumerate}
\item
$\U = \{ U_i \}_{i \in I}$ is an invariant open cover of $X$ (that is, $\iota(U_i)=U_i$ for all $i\in I$).

\item
$L_{ij} \to U_{ij}$ is a Hermitian line bundle. We assume that $L_{ii}$ is the trivial line bundle.

\item
$\phi_{ijk} : L_{ij} \otimes L_{jk} \to L_{ik}$ is a unitary isomorphism on $U_{ijk}$ which makes the following diagram on $U_{ijkl}$ commutative:
$$
\begin{CD}
L_{ij} \otimes L_{jk} \otimes L_{kl}
@>{1 \otimes \phi_{jkl}}>> 
L_{ij} \otimes L_{jl} \\
@V{\phi_{ijk} \otimes 1}VV @VV{\phi_{ijl}}V \\
L_{ik} \otimes L_{kl} 
@>{\phi_{ikl}}>>
L_{il}.
\end{CD}
$$

\item
$J_i \to U_i$ is a Hermitian line bundle.

\item
$\rho_{ij}$ is a unitary isomorphism on $U_{ij}$
$$
\rho_{ij} : J_i \otimes \overline{L}_{ij} \to \iota^*L_{ij} \otimes J_j
$$
which makes the following diagram on $U_{ijk}$ commutative
$$
\begin{CD}
J_i \otimes \overline{L}_{ij} \otimes \overline{L}_{jk}
@>{1 \otimes \overline{\phi}_{ijk}}>>
J_i \otimes \overline{L}_{ik} @=
J_i \otimes \overline{L}_{ik} \\
@V{\rho_{ij} \otimes 1}VV @. @VV{\rho_{ik}}V \\
\iota^*L_{ij} \otimes J_j \otimes \overline{L}_{jk}
@>{1 \otimes \rho_{jk}}>>
\iota^*L_{ij} \otimes \iota^*L_{jk} \otimes J_k
@>{\iota^*\phi_{ijk}}>>
\iota^*L_{ik} \otimes J_k.
\end{CD}
$$

\item
$\tau_i$ is a unitary isomorphism $\tau_i : J_i \to \iota^*J_i$ such that the composition
$$
\begin{CD}
J_i @>{\tau_i}>> \iota^*J_i @>{\iota^*\tau_i}>>
\iota^*\iota^*J_i = J_i
\end{CD}
$$
is the identity on $J_i$ (hence $J_i$ is a $\Z_2$-equivariant line bundle) and the following diagram on $U_{ij}$ is commutative
$$
\begin{CD}
J_i \otimes \overline{L}_{ij} @>{\tau_i \otimes 1}>>
\iota^*J_i \otimes \overline{L}_{ij} @=
L_{ji} \otimes \iota^*J_i \\
@V{\rho_{ij}}VV @. @AA{\iota^*\rho_{ji}}A \\
\iota^*L_{ij} \otimes J_j @>{1 \otimes \tau_j}>>
\iota^*L_{ij} \otimes \iota^*J_j
@=
\iota^*J_j \otimes \iota^*\overline{L}_{ji},
\end{CD}
$$
where the identification $L_{ji} = \overline{L}_{ij}$ is the composition of the unitary isomorphism $L_{ji} \cong L_{ij}^*$ induced from $\phi_{iji}$ and $L_{ij}^* \cong \overline{L}_{ij}$ from the Hermitian metric.

\end{enumerate}

\begin{rem}\label{rem:Gao.Hori}
We mention some special cases and generalisations of our `Real' gerbes, as well as their relation to some other constructions in the literature.
\begin{enumerate}
\item If the involution $\iota$ is absent, then the data $J_i,\tau_i,\rho_{ij}$ are not included, and $(\U, L_{ij}, \phi_{ijk})$ gives the data of a usual non-equivariant gerbe.
\item A `Real' gerbe in this paper is a special case of a $(\ZZ_2,{\rm id})$-\emph{twisted equivariant gerbe} \cite{GSW}, also called a \emph{Jandl gerbe} \cite{SSW}. It is also a `Real' version of the Hitchin--Chatterjee formulation of gerbes \cite{Hi} in terms of ``clutching line bundles''.
\item Given a $\Z_2$-equivariant line bundle $J \to X$ with its $\Z_2$-equivariant structure $\tau : J \to \iota^*J$, there is an associated `Real' gerbe, via the so-called Gao--Hori construction in \cite{HMSV2}. This puts a `Real' structure on the trivial gerbe. In our formulation of `Real' gerbes, we take the invariant open cover $\U = \{ X \}$ and the Hermitian line bundle $J \to X$ together with its $\Z_2$-equivariant structure $\tau : J \to \iota^*J$.
\item Given a map $\tau : X \to {\rm U}(1)$ such that $\iota^*\tau = \tau^{-1}$, we can make the product line bundle $J = X \times \C$ into a $\Z_2$-equivariant one, via $(x, z) \mapsto (\iota(x), \tau(x) z)$, and hence the Gao--Hori construction yields the associated `Real' gerbe.
\item By passing to a sufficiently fine open cover, we may take the $J_i$ to be trivial line bundles. Then the unitary isomorphism $\rho_{ij}:\overline{L}_{ij}\rightarrow \iota^* L_{ij}$ is just an antiunitary map on $L_{ij}$ covering the involution $\iota$, while the $\tau_i$ which makes $J_i$ into a $\ZZ_2$-equivariant line bundle amounts to a map $\tau_i:U_i\rightarrow {\rm U}(1)$ such that $\iota^*\tau_i = \tau_i^{-1}$.
\item Let $G$ be a finite group acting on $X$, and $\phi:G\rightarrow \ZZ_2$ a homomorphism. In the context of twisted equivariant $K$-theory of Freed--Moore \cite{FM, G3}, a \emph{$\phi$-twist} \cite{G3} consists of a local equivalence of groupoids $F : \mathcal{X} \to X//G$, and a $\phi$-twisted extension of $\mathcal{X}$. Our definition of `Real' gerbes above is an explicit description of a $\phi$-twist in the special case where $G=\ZZ_2$ and $\phi:G\rightarrow\ZZ_2$ is the identity map. In more detail: from an invariant open cover $\mathfrak{U} = \{ U_i \}$, we can construct a groupoid  $\mathcal{X}$ such that the space of objects is $\mathcal{X}_0 = \bigsqcup U_i$ and the space of morphisms is $\mathcal{X}_1 = \Z_2 \times \bigsqcup U_{ij}$. This groupoid has the obvious map to the quotient groupoid $X//G$, and this map turns out to be a local equivalence. Now, a $\phi$-twisted extension of $\mathcal{X}$ consists of a Hermitian line bundle $\mathcal{L}$ on $\mathcal{X}_1$ and an extra datum about a compatibility condition. The Hermitian line bundle $\mathcal{L}$ corresponds to the data $L_{ij}$ and $J_i$, while the datum about the compatibility corresponds to the remaining data $\phi_{ijk}$, $\rho_{ij}$ and $\tau_i$.
\end{enumerate}
\end{rem}


\subsection{Equivalences of `Real' gerbes}
Similarly to the case of non-equivariant bundle gerbes \cite{MS}, we will consider `Real' gerbes up to \emph{stable equivalence}, defined as that generated by ``refinement of open covers'' and by the following equivalence. Suppose two `Real' gerbes
\begin{align*}
\G &= (\U, L_{ij}, \phi_{ijk}, J_i, \tau_i, \rho_{ij}), &
\G' &= (\U, L'_{ij}, \phi'_{ijk}, J'_i, \tau'_i, \rho'_{ij}),
\end{align*}
share the same open cover $\U$. They are said to be equivalent if we have the following data:
\begin{enumerate}
\item
A Hermitian line bundle $K_i \to U_i$ for each $i \in I$.

\item
A unitary isomorphism  $\psi_{ij} : K_i \otimes L_{ij} \to L'_{ij} \otimes K_j$ for $i, j \in I$ which makes the following diagram commutative on $U_{ijk}$
$$
\begin{CD}
K_i \otimes L_{ij} \otimes L_{jk} @>{\psi_{ij} \otimes 1}>>
L'_{ij} \otimes K_j \otimes L_{jk} @>{1 \otimes \psi_{jk} }>> 
L'_{ij} \otimes L'_{jk} \otimes K_k \\
@V{1 \otimes \phi_{ijk}}VV @. @VV{\phi'_{ijk} \otimes 1}V \\
K_i \otimes L_{ik} @= K_i \otimes L_{ik} @>>{\psi_{ik}}> L'_{ik} \otimes K_k.
\end{CD}
$$

\item
A unitary isomorphism $\pi_i : J'_i \otimes \overline{K}_i \to \iota^* K_i \otimes J_i$ for $i \in I$ which makes the following diagram on $U_{ij}$ commutative
$$
\begin{CD}
\iota^*K_i \otimes J_i \otimes \overline{L}_{ij}
@>{1 \otimes \rho_{ij}}>> 
\iota^*K_i \otimes \iota^*L_{ij} \otimes J_j
@>{\iota^*\psi_{ij} \otimes 1}>>
\iota^*L'_{ij} \otimes \iota^*K_j \otimes J_j \\
@A{\pi_i \otimes 1}AA @. @AA{1 \otimes \pi_j}A \\
J_i' \otimes \overline{K}_i \otimes \overline{L}_{ij}
@>{1 \otimes \overline{\psi}_{ij}}>>
J'_i \otimes \overline{L}'_{ij} \otimes \overline{K}_j
@>{\rho'_{ij} \otimes 1}>>
\iota^*L'_{ij} \otimes J'_j \otimes \overline{K}_j,
\end{CD}
$$
and also the following diagram on $U_i$ commutative,
$$
\begin{CD}
J'_i \otimes \overline{K}_i @>{\tau'_i \otimes 1}>>
\iota^*J'_i \otimes \overline{K}_i
@=
\iota^*(J'_i \otimes \iota^*\overline{K}_i) \\
@V{\pi_i}VV @. @VV{\iota^*\pi_i^\iota}V \\
\iota^*K_i \otimes J_i 
@>{1 \otimes \tau_i}>>
\iota^*K_i \otimes \iota^*J_i
@=
\iota^*(K_i \otimes J_i),
\end{CD}
$$
where the right vertical isomorphism is the pull-back of the map $\pi_i^\iota : J'_i \otimes \iota^*\overline{K}_i \to K_i \otimes J_i$ induced from $\pi_i : J'_i \otimes \overline{K}_i \to \iota^* K_i \otimes J_i$.

\end{enumerate}

\subsection{Cohomology classification of `Real' gerbes}\label{sec:Real.gerbe.class}

`Real' gerbes in our formulation are classified up to stable equivalence by the equivariant cohomology with local coefficients $H^3_\pm(X) := H^3_{\Z_2}(X; \Z(1))$. In this section, we explain how to define a \v{C}ech cohomology class from a given `Real' gerbe. Once this is clarified, the classification by $H^3_\pm(X)$ can be completed along the lines of \S4 of \cite{GSW}.
 
As explained in Appendix \ref{appendix:Borel.Cech.twisted}, we can identify $H^n_\pm(X)$ with the $(n - 1)$th cohomology of the total complex associated to the double complex $(\check{C}^{p, q}, \delta, \partial)$, where $\check{C}^{p, q}(\U) = \check{C}^q(\U; \underline{\T})$ is the \v{C}ech complex associated to a sufficiently fine invariant open cover $\U$ and $\underline{\TT}$ is the sheaf of continuous functions with values in $\T = {\rm U}(1)$, $\delta : \check{C}^{p, q} \to \check{C}^{p, q+1}$ the usual differential of the \v{C}ech complex, and $\partial : \check{C}^{p, q} \to \check{C}^{p + 1, q}$ is defined by $\partial c = c^{(-1)^p} \cdot \iota^*c$. (One can also obtain this double complex as the \v{C}ech complex associated to a complex of sheaves $\underline{\mathbb{T}} \overset{\partial}{\to} \underline{\mathbb{T}} \overset{\partial}{\to} \underline{\mathbb{T}} \overset{\partial}{\to} \cdots$.)

A $2$-cocycle is then given by
\begin{equation}
(\phi_{ijk}, \rho_{ij}, \tau_i)
\in \check{C}^{0, 2} \oplus \check{C}^{1, 1} \oplus \check{C}^{2, 0}
= \check{C}^2(\U, \underline{\T}) \oplus 
\check{C}^1(\U, \underline{\T}) \oplus \check{C}^0(\U, \underline{\T})\label{eqn:Cech.2.cocycle.decomposition}
\end{equation}
subject to the cocycle conditions 
\begin{align}
\phi_{ijk} \phi_{ikl} &= \phi_{jkl} \phi_{ijl}, \nonumber \\
\phi_{ijk} \cdot \iota^*\phi_{ijk} &= 
\rho_{ij}^{-1} \rho_{ik} \rho_{jk}^{-1}, \nonumber \\
\rho_{ij}^{-1} \cdot \iota^*\rho_{ij} &= \tau_j^{-1} \tau_i, \nonumber \\
\tau_i \cdot \iota^*\tau_i &= 1. \label{eqn:2cocycle.conditions}
\end{align}
Two $2$-cocycles $(\phi_{ijk}, \rho_{ij}, \tau_i)$ and $(\phi'_{ijk}, \rho'_{ij}, \tau'_i)$ define the same cohomology class when there is a \v{C}ech $1$-cochain
$$
(\psi_{ij}, \pi_i) \in \check{C}^{0, 1} \oplus \check{C}^{1, 0}
= \check{C}^1(\U, \underline{\T}) \oplus \check{C}^0(\U, \underline{\T})
$$
such that
\begin{align*}
\phi'_{ijk} \phi_{ijk}^{-1} &= \psi_{ij}^{-1} \psi_{ik} \psi_{jk}^{-1}, \\
\rho'_{ij} \rho_{ij}^{-1} 
&= (\psi_{ij} \cdot \iota^*\psi_{ij}) \pi_j^{-1} \pi_i, \\
\tau'_i \tau_i^{-1} &= \pi_i^{-1} \cdot \iota^*\pi_i.
\end{align*}

\medskip

The \v{C}ech $2$-cocycle representing a `Real' gerbe $\G = (\U, L_{ij}, \phi_{ijk}, J_i, \tau_i, \rho_{ij})$ is given as follows: First of all, we can assume $\V$ is a refinement of $\U$. Taking a further refinement if necessary, we may assume that, in the gerbe $(\V, L_{ij}, \phi_{ijk}, J_i, \tau_i, \rho_{ij})$ that is equivalent to $\G$ through the refinement, the line bundles $L_{ij}$ and $J_i$ are topologically trivial. Choosing trivializations $L_{ij} \cong U_{ij} \times \C$ and $J_i\cong U_i\times \C$, we can identify the unitary isomorphism $\phi_{ijk}$ with a map $\phi_{ijk} : U_{ijk} \to \T$. Similarly, we identify the antiunitary isomorphism $\rho_{ij}$ with $(x, z) \mapsto (\iota(x), \rho_{ij}(x) \bar{z})$, regarding it as a map $\rho_{ij} : U_{ij} \to \T$; the isomorphism $\tau_i$ is also identified with a map $U_i\to\T$ as in Remark \ref{rem:Gao.Hori}. Then we have a \v{C}ech $2$-cocycle $(\phi_{ijk}, \rho_{ij}, \tau_i)$ in the form of Eq.\ \eqref{eqn:Cech.2.cocycle.decomposition}, satisfying the conditions Eq.\ \eqref{eqn:2cocycle.conditions}.

\medskip

\begin{rem}
The above \v{C}ech cohomology description of $H^3_\pm(X)$ is compatible with that in \cite{GSW}, where an invariant open cover is used. On the contrary, our description is different from that in \cite{HMSV1}: there, one needs an open cover $\{ U_i \}_{i \in I}$ with an index set $I$ admitting an involution $\iota : I \to I$ such that $\iota(U_i) \subset U_{\iota(i)}$ (and there is no fixed point in $I$). Therefore it is not direct to relate the \v{C}ech cocycles in this paper with those in \cite{HMSV1}. 
\end{rem}

\begin{rem} \label{rem:equivariant_cohomology}
For $n \ge 1$, the equivariant cohomology $H^n_{\Z_2}(X)$ can similarly be identified with the $(n-1)$th cohomology of the total complex associated to the double complex $(\check{C}^{p, q}, \delta, \partial')$, where $\partial' c = c^{(-1)^{p+1}} \cdot \iota^*c$, see Appendix \ref{appendix:Borel.Cech.untwisted}.
\end{rem}

\subsection{Sign invariants of `Real' gerbes}\label{sec:sign.invariants}

Suppose $Y=Y^\iota$ is a space with trivial involution. As shown in Appendix \ref{appendix:equiv.cohom.fixed}, there is a natural isomorphism
$$
H^3_\pm(Y) \cong H^2(Y; \Z_2) \oplus H^0(Y; \Z_2).
$$
In particular, when $Y$ is path connected, then $H^0(Y; \Z_2) \cong \Z_2$, and this direct summand is identified with the direct summand $H^3_\pm(\pt) \cong \Z_2$ in the decomposition of $H^3_\pm(Y) \cong \tilde{H}^3_\pm(Y) \oplus H^3_\pm(\pt)$.

Following ideas in \cite{HMSV2}, we introduce the following \emph{sign invariants} of a `Real' gerbe.

\begin{dfn}\label{dfn:sign.invariant}
Let $X$ be a space with an involution $\iota : X \to X$. For any `Real' gerbe $\G$ over $X$ and any path-connected subspace $Y \subset X^\iota$, we define
$$
\sigma(\G, Y) \in \{ +1,-1\}$$
to be the sign in the component $H^0(Y; \Z_2) \cong \Z_2=\{\pm 1\}$ of $[\G|_Y] \in H^3_\pm(Y)$ with respect to the decomposition $H^3_\pm(Y) \cong H^2(Y; \Z_2) \oplus H^0(Y; \Z_2)$.
\end{dfn}

The cohomological definition of the sign invariant admits the following equivalent definitions:

\begin{itemize}
\item
Suppose that $\G$ consists of data $(\U, L_{ij}, \phi_{ijk}, J_i, \tau_i, \rho_{ij})$. Then $(J_i, \tau_i)$ gives rise to a $\Z_2$-equivariant line bundle on $U_i$. On $U_i \cap Y$, the $\Z_2$-action on $J_i$ can only be multiplication by $\pm 1$. This sign is independent of the choice of the index $i \in I$, and provides $\sigma(\G, Y)$. In particular, if $\G$ is the `Real' gerbe associated with a $\Z_2$-equivariant line bundle $J \to X$, then the sign of the $\Z_2$-action on $J|_Y$ defines $\sigma(\G, Y)$.

\item
Suppose, as in Remark \ref{rem:Gao.Hori}, that a `Real' gerbe is specified by maps $\tau_i : U_i \to {\rm U}(1)$ such that $\iota^*\tau_i = \tau_i^{-1}$. Thus, on $U_i \cap Y$, this map takes its values in $\{\pm 1\}$. This value is independent of the choice of the index $i \in I$, and defines the sign $\sigma(\G, Y)$.

\end{itemize}

\begin{rem}\label{rem:POS}
In \S\ref{sec:on.the.point}, we explain that $H^3_\pm(\pt)\cong\ZZ_2$ is generated by the gerbe $\G_{\tau_Q}$ induced by the so-called `Quaternionic' 2-cocycle $\tau_Q$, and that it has sign invariant $-1$. Tensoring by $\G_{\tau_Q}$ (which modifies the `Real' DD-invariant by the non-trivial element $\eta\in H^3_\pm(\pt)$) changes the sign invariant at each component $Y\subset X^\iota$. So the product of sign invariants at any \emph{pair} of components $Y_1, Y_2\subset X^\iota$ is a $\ZZ_2$-invariant factoring through $H^3_\pm(X)/H^3_\pm(\pt)$.
\end{rem}

\subsubsection{Completeness of sign invariants in low dimensions}

\begin{prop}\label{prop:low.dim.gerbe.signs}
Let $X$ be a compact and connected smooth manifold, and $\iota : X \to X$ a smooth involution such that the fixed point set $X^\iota$ consists of a finite number of points. Then the homomorphism 
$$
H^3_\pm(X) \to H^3_\pm(X^\iota) \cong H^0(X^\iota; \Z_2)
$$
induced from the inclusion $X^\iota \to X$ is:
\begin{itemize}
\item
surjective if $4 > \dim X$; and

\item
bijective if $3 > \dim X$.

\end{itemize}
\end{prop}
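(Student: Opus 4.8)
The plan is to compute both sides using the long exact sequences associated to the decomposition of $X$ into the fixed-point set $X^\iota$ and its complement, and compare. First I would recall from Appendix \ref{appendix:equiv.cohom.fixed} (and the statement in \S\ref{sec:sign.invariants}) that for a space $Y$ with trivial involution one has $H^n_\pm(Y)\cong H^{n}(Y;\Z_2)\oplus H^{n-2}(Y;\Z_2)\oplus\cdots$ (the twisted coefficient cohomology of a trivially-acting $\Z_2$ splitting off copies of mod-2 cohomology in the appropriate degrees), so in particular $H^3_\pm(X^\iota)\cong H^2(X^\iota;\Z_2)\oplus H^0(X^\iota;\Z_2)\cong H^0(X^\iota;\Z_2)$ since $X^\iota$ is $0$-dimensional. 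Thus the target group is exactly $H^0(X^\iota;\Z_2)$, and the content of the proposition is a statement about the restriction map in the twisted equivariant cohomology of $(X,\iota)$.

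Next I would set up the comparison via the free part. Let $U=X\setminus X^\iota$, an invariant open set on which $\iota$ acts freely; then $U\to U/\Z_2$ is a principal $\Z_2$-bundle, and the twisted equivariant cohomology $H^*_\pm(U)$ can be identified with the cohomology of $U/\Z_2$ with coefficients in the local system $\Z(1)$ determined by this double cover. The key dimension input is that $X$ is a manifold of dimension $\le 3$, hence $U/\Z_2$ is a (noncompact) manifold of dimension $\le 3$, and so $H^k_\pm(U)=H^k(U/\Z_2;\Z(1))$ can be controlled: using a cell structure or Poincaré–Lefschetz duality for the open manifold, the relevant groups vanish or become manageable above the dimension. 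I would then feed this into the long exact sequence of the pair $(X,U)$ — equivalently, excision identifies $H^*_\pm(X,U)$ with the equivariant cohomology of a neighbourhood of $X^\iota$ relative to its boundary, which by the Thom/tubular-neighbourhood argument (each fixed point has an invariant disc neighbourhood, $\iota$ acting linearly) reduces to a direct sum over the points of $X^\iota$ of local contributions. The exact sequence reads
\[
\cdots \to H^3_\pm(X,U) \to H^3_\pm(X) \xrightarrow{\ \text{res}\ } H^3_\pm(U)\oplus(\text{fixed pt data}) \to \cdots,
\]
but it is cleaner to instead use the pair $(X,X^\iota)$ directly: one has the long exact sequence
\[
\cdots \to H^3_\pm(X,X^\iota)\to H^3_\pm(X)\xrightarrow{\ j^*\ } H^3_\pm(X^\iota)\to H^4_\pm(X,X^\iota)\to\cdots,
\]
and surjectivity of $j^*$ in degree $3$ amounts to vanishing of $H^4_\pm(X,X^\iota)$, while bijectivity additionally requires vanishing of $H^3_\pm(X,X^\iota)$. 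So the proof reduces to: $H^4_\pm(X,X^\iota)=0$ when $\dim X<4$, and also $H^3_\pm(X,X^\iota)=0$ when $\dim X<3$.

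For the relative groups I would excise $X^\iota$ and identify $H^k_\pm(X,X^\iota)\cong H^k_\pm(X,V)$ for $V$ a small invariant tubular neighbourhood of $X^\iota$, then deformation-retract to get $H^k_\pm(X,X^\iota)\cong \tilde H^k_\pm(X/X^\iota)$ or, better, relate it to $H^k_\pm(U^+)$ (compactly-supported twisted cohomology of the free part $U$), i.e. $H^k_\pm(X,X^\iota)\cong H^k_{c,\pm}(U)\cong H^k_c(U/\Z_2;\Z(1))$. Since $U/\Z_2$ is an open manifold of dimension $m=\dim X\le 3$, Poincaré duality gives $H^k_c(U/\Z_2;\Z(1))\cong H_{m-k}(U/\Z_2;\Z(1)\otimes\mathrm{or})$, which vanishes for $k>m$; hence $H^4_{c,\pm}(U)=0$ once $m\le 3$, giving surjectivity, and $H^3_{c,\pm}(U)=0$ once $m\le 2$, giving bijectivity. (One must take a little care that $X$ need not be orientable, but twisting the duality statement by the orientation local system absorbs this; the conclusion only uses degree vs. dimension.) The \textbf{main obstacle} I anticipate is setting up the excision/duality identification $H^k_\pm(X,X^\iota)\cong H^k_c(U/\Z_2;\Z(1))$ rigorously with the correct local system — in particular being careful that the Thom-class contributions at the isolated fixed points (where $\iota$ acts by $-1$ on the normal fibre $\R^m$) do not spoil the vanishing range — and making sure the \v{C}ech/Borel double-complex model of $H^*_\pm$ from Appendix \ref{appendix:Borel.Cech} genuinely agrees with this sheaf-theoretic picture so that the long exact sequences can be invoked. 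Once that bookkeeping is in place, the dimension count is immediate.
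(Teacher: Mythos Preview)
Your proposal is correct and follows essentially the same route as the paper: both reduce to the long exact sequence of the pair $(X,X^\iota)$ and then show $H^n_\pm(X,X^\iota)=0$ for $n>\dim X$ by passing to the free complement and using a manifold dimension bound on the quotient. The paper's execution of that last step is slightly different from yours and sidesteps the Poincar\'e duality bookkeeping you flagged as the main obstacle: it excises to a compact manifold with boundary $X'=X\setminus D$ (removing small invariant disks around the fixed points), so $H^n_\pm(X,X^\iota)\cong H^n_\pm(X',\partial X')$, and then invokes the exact sequence
\[
\cdots\to H^n_{\Z_2}(X',\partial X')\to H^n(X',\partial X')\to H^n_\pm(X',\partial X')\to H^{n+1}_{\Z_2}(X',\partial X')\to\cdots
\]
(Prop.~2.3 of \cite{Gomi}). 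Since $\iota$ is free on $X'$, the outer terms are ordinary cohomology of the compact manifold-with-boundary $X'/\Z_2$, and both vanish above $\dim X$ for purely dimensional reasons---no orientation or local-coefficient tracking needed. One small correction: your stated decomposition $H^n_\pm(Y)\cong H^n(Y;\Z_2)\oplus H^{n-2}(Y;\Z_2)\oplus\cdots$ for trivial involution is off by one; it should be $H^{n-1}(Y;\Z_2)\oplus H^{n-3}(Y;\Z_2)\oplus\cdots$ (Lemma~\ref{lem:equivariant_cohomology_of_fixed_point_set}), though you correctly land on $H^2\oplus H^0$ for $n=3$.
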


\begin{proof}
We have the exact sequence for the pair $(X, X^\iota)$
$$
\cdots \to
\cdots
H^3_\pm(X, X^\iota) \to
H^3_\pm(X) \to 
H^3_\pm(X^\iota) \to
H^4_\pm(X, X^\iota) \to
\cdots.
$$
Therefore the proposition will follow from $H^n_\pm(X, X^\iota) = 0$ for $n > \dim X$, which is shown below. 

By the so-called slice theorem \cite{Hsiang}, we have open invariant disks $D_x$ centered at $x \in X^\iota$ such that $\overline{D}_x \cap \overline{D}_y = \emptyset$ for $x \neq y$. Let $D = \bigcup_{x \in X^\iota} D_x$ be the (disjoint) union of the disks. We put $X' = X \backslash D$, which is a compact and connected manifold with $\iota$-invariant boundary $\partial X'$. By the excision axiom, we have an isomorphism of groups
$$
H^n_\pm(X, X^\iota) \cong H^n_\pm(X', \partial X').
$$
The latter group fits into the exact sequence (Prop.\ 2.3 of \cite{Gomi})
$$
\cdot\cdot \to
H^n_{\Z_2}(X', \partial X') \overset{f}{\to}
H^n(X', \partial X') \to
H^n_\pm(X', \partial X') \to
H^{n+1}_{\Z_2}(X', \partial X') \to
\cdot\cdot,
$$
with $f$ the map forgetting the $\ZZ_2$-action. Now, the induced involution on $X'$ is fixed point free. Therefore $X'/\Z_2$ is also a compact and connected manifold with boundary $\partial X'/\Z_2$, so that
\begin{align*}
H^n(X', \partial X') &\cong
0, \qquad\qquad\qquad\qquad\qquad\quad (n > \dim X)
\\
H^n_{\Z_2}(X', \partial X') &\cong
H^n(X'/\Z_2, \partial X'/\Z_2) \cong
0. \quad (n > \dim X)
\end{align*}
Hence $H^n_\pm(X, X^\iota) \cong H^n_\pm(X', \partial X') \cong 0$ for $n > \dim X$.
\end{proof}

\begin{cor} \label{cor:classification_via_sign}
Let $X$ be a compact and connected smooth manifold of dimension $dim\, X < 3$, and $\iota : X \to X$ a smooth involution such that the fixed point set $X^\iota$ consists of $N$ points. Then the group of equivalence classes of `Real' gerbes on $X$ is isomorphic to $\Z_2^N$ via the signs $\sigma(\G, x)$, ($x \in X^\iota$). 
\end{cor}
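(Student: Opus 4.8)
The plan is to deduce the corollary from the \v{C}ech classification of `Real' gerbes in \S\ref{sec:Real.gerbe.class} together with the bijectivity half of Proposition~\ref{prop:low.dim.gerbe.signs}, after making sure that all the maps in sight are group homomorphisms and not merely bijections of sets.

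First I would record the group structure. Tensoring `Real' gerbes (tensoring the line bundles $L_{ij}$ and $J_i$ and composing the structure isomorphisms $\phi_{ijk},\rho_{ij},\tau_i$ in the evident way) passes to stable equivalence classes and makes them into an abelian group, with identity the trivial gerbe and inverses given by the dual gerbe. Under the assignment of a \v{C}ech $2$-cocycle $(\phi_{ijk},\rho_{ij},\tau_i)$ to a gerbe described in \S\ref{sec:Real.gerbe.class}, the product of two gerbes is represented by the product of their cocycles, and refinements together with the equivalences of \S3.2 correspond precisely to coboundaries; completing the classification along the lines of \S4 of \cite{GSW}, one obtains a \emph{group} isomorphism between stable equivalence classes of `Real' gerbes on $X$ and $H^3_\pm(X)$.

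Next I would invoke the hypothesis $\dim X < 3$, i.e. $3 > \dim X$, so that the second bullet of Proposition~\ref{prop:low.dim.gerbe.signs} applies: the map $H^3_\pm(X)\to H^3_\pm(X^\iota)$ induced by the inclusion $X^\iota\hookrightarrow X$ is a bijection. Since this map is induced by restriction it is a group homomorphism by functoriality of equivariant \v{C}ech cohomology, hence a group isomorphism. Because $X^\iota$ is a disjoint union of $N$ points, each with the trivial involution, the fixed-point isomorphism recalled just before Definition~\ref{dfn:sign.invariant} gives $H^3_\pm(X^\iota)\cong\bigoplus_{x\in X^\iota}\bigl(H^2(\{x\};\Z_2)\oplus H^0(\{x\};\Z_2)\bigr)\cong\Z_2^N$. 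Composing the two isomorphisms yields a group isomorphism from equivalence classes of `Real' gerbes on $X$ onto $\Z_2^N$, and it remains only to identify it concretely: for a class $[\G]$, its component in the factor indexed by $x\in X^\iota$ is the image of the restricted class $[\G|_{\{x\}}]\in H^3_\pm(\{x\})$ under $H^3_\pm(\{x\})\cong H^0(\{x\};\Z_2)\cong\Z_2$, which is by definition $\sigma(\G,x)$. Hence $[\G]\mapsto(\sigma(\G,x))_{x\in X^\iota}$ is the stated isomorphism.

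The argument is essentially bookkeeping, and the only point requiring genuine care — the ``main obstacle'', such as it is — is to verify that the two maps being composed are group homomorphisms rather than set bijections: for the classification map this is the observation that tensoring gerbes adds the representing \v{C}ech cocycles, and for the restriction map it is functoriality in the open cover and the involution. Once these are stated, Proposition~\ref{prop:low.dim.gerbe.signs} and the fixed-point computation of $H^3_\pm$ make the corollary immediate.
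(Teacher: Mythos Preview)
Your proposal is correct and follows essentially the same approach the paper intends: the corollary is stated without a separate proof because it is meant to follow immediately from Proposition~\ref{prop:low.dim.gerbe.signs} together with the classification of `Real' gerbes by $H^3_\pm(X)$ and the computation $H^3_\pm(X^\iota)\cong H^0(X^\iota;\Z_2)\cong\Z_2^N$. Your extra care in checking that the relevant maps are group homomorphisms (tensor product of gerbes versus addition of cocycles, functoriality of restriction) is a welcome clarification but does not depart from the paper's implicit argument.
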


Under the setup of the above corollary, let $x_0 \in X^\iota$ be a fixed point. A `Real' gerbe $\G_{x_0}$ such that 
$$
\sigma(\G_{x_0}, x) =
\left\{
\begin{array}{ll}
-1, & (x_0 = x \in X^\iota) \\
+1, & (x_0 \neq x \in X^\iota)
\end{array}
\right.
$$
can be constructed by taking an invariant open covering $\{ X \backslash \{ x_ 0 \}, D_{x_0} \}$, where $D_{x_0}$ is an invariant open disk around $x_0$ which does not contain any other fixed points; see the 2-torus example in \S \ref{sec:low.examples.gerbe}.

\section{Examples of `Real' gerbes}\label{sec:Real.gerbe.examples}
\subsection{On the point $\pt$}\label{sec:on.the.point}

Let $\pt$ denote the point with trivial involution. Because $H^3_\pm(\pt) \cong \Z_2$, there is a non-trivial `Real' gerbe over $\pt$. This is described by the following data:
\begin{enumerate}
\item
We take $U_0 = \pt$ to form an open cover $\{ U_0 \}$. 

\item
We take $L_{00} \to U_{00}$ to be the product bundle $L_{00} = \C$.

\item
There is no need to consider $\phi_{ijk}$.

\item 
We take $J_0\rightarrow U_0$ to be the product bundle $L_{00} = \C$.

\item
We take $\rho_{00}(z) = \bar{z}$. 

\item
$\tau_0 = \tau_Q(-1, -1)$, where $\tau_Q:\ZZ_2\times\ZZ_2\rightarrow {\rm U}(1)$
is the `Quaternionic' group 2-cocycle
\begin{align*}
\tau_Q(1, 1) &= \tau_Q(-1, 1) = \tau_Q(1, -1) = 1, &
\tau_Q(-1, -1) &= -1.
\end{align*}
\end{enumerate}
Note that the sign invariant of this `Real' gerbe is $-1$.

\medskip

From the general viewpoint of $\phi$-twists, an element in $H^3_\pm(\pt)$ is represented by a `twisted' group $2$-cocycle, namely, $\tau : \Z_2 \times \Z_2 \to {\rm U}(1)$ such that
$$
\tau(h, k)^g \cdot \tau(gh, k)^{-1} \cdot \tau(g, hk) \cdot \tau(g, h)^{-1} 
= 1,\qquad g,h,k\in\ZZ_2,
$$
and the $\tau_Q$ above is a non-trivial 2-cocycle. When $\tau$ is normalized, only $\tau(-1,-1)$ may be non-trivial, and can only equal $\pm 1 $. The latter sign may be identified with the sign invariant of Definition \ref{dfn:sign.invariant}.

\begin{dfn}\label{defn:non-trivial.gerbe.point}
We write $\mathcal{G}_{\tau_Q}$ for the above non-trivial `Real' gerbe on $X=\pt$, and $\eta\in H^3_\pm(\pt)$ its `Real' DD-invariant. For a space with involution $(X,\iota)$, the pullbacks of $\tau_Q, \mathcal{G}_{\tau_Q}, \eta$ and $H^3_\pm(\pt)$ under the collapse map $X\rightarrow\pt$ are denoted with the same symbols.
\end{dfn}

\begin{rem}
Later on, we will generally be interested in `Real' gerbes up to shifts by $\eta$, i.e.\ $H^3_\pm(X)/H^3_\pm(\pt)$. If $X$ has a fixed point $\pt$ regarded as a basepoint, the \emph{reduced} cohomology fits into
\begin{equation*}
0\rightarrow \tilde{H}^3_\pm(X)\rightarrow H^3_\pm(X)\overset{i^*}{\rightarrow} H^3_\pm(\pt)\rightarrow 0,
\end{equation*}
where $i:\pt\rightarrow X$ is the inclusion. The collapse map $X\rightarrow \pt$ induces a splitting $H^3_\pm(X)\cong\tilde{H}^3_\pm(X)\oplus H^3_\pm(\pt)$.
\end{rem}

\subsection{On the circle $\tilde{S}^1$, 2-sphere $\tilde{S}^2$, and 2-torus $\TT^2$}\label{sec:low.examples.gerbe}
Let $\tilde{S}^1$ be the circle with the flip involution, which we identify with the unit circle in the complex plane with complex conjugation involution and basepoint $+1$.

For a $\Z_2$-space $X$ with a fixed point, the smash product $X \wedge \tilde{S}^1$ naturally inherits an involution. Then the following formulae for the reduced cohomology hold true (\S2 of \cite{Gomi}):
\begin{align*}
\tilde{H}^n_\pm(X \wedge \tilde{S}^1) &\cong \tilde{H}^{n-1}_{\Z_2}(X), &
\tilde{H}^n_{\Z_2}(X \wedge \tilde{S}^1) &\cong \tilde{H}^{n-1}_{\pm}(X),
\end{align*}
where $H^n_{\Z_2}(X) = H^n_{\Z_2}(X; \Z)$ is the standard Borel equivariant cohomology. Thus, we can compute the reduced cohomology of $\tilde{S}^d = \underbrace{\tilde{S}^1 \wedge \cdots \wedge \tilde{S}^1}_{d}$ as 
$$
\tilde{H}^3_{\pm}(\tilde{S}^d)
\cong
\left\{
\begin{array}{ll}
\Z_2, & (d = 1, 2) \\
\Z, & (d = 3) \\
0. & (d \ge 4)
\end{array}
\right.
$$

\medskip

{\bf On $\tilde{S}^1$.} 
Since $\tilde{H}^3_\pm(\tilde{S}^1)\cong\ZZ_2$, there exists a non-trivial `Real' gerbe $\G$ over $\tilde{S}^1$. This is specified by the following data:

\begin{enumerate}
\item
We take an invariant open cover $\{ U_+, U_- \}$ of $\tilde{S}^1$ so that each of $U_\pm$ is equivariantly contractible to a fixed point $\pm 1$, and $U_{+-}:=U_+ \cap U_-$ is equivariantly homotopic to the two-point space $\{i,-i\}$ with free involution.

\item
We take $L_{+-}$ to be the product line bundle over $U_{+-}$ (and $L_{-+} = L_{+-}^*$).

\item
We need not consider $\phi_{ijk}$.

\item We take $J_\pm$ to be the product line bundle over $U_\pm$. 

\item
Identifying $L_{+-}$ with the product bundle $\{i,-i\} \times \C \to \{i,-i\}$, we define $\rho_{+-}$ by
\begin{align*}
\rho_{+-} &: \{i,-i\} \times \C \to \{i,-i\} \times \C, &
\rho_{+-}(\pm i, z) &= (\mp i, \pm\bar{z}),
\end{align*}
and similarly for $\rho_{-+}$. 

\item
We take $\tau_+ = \tau_Q(-1, -1)=-1$ and $\tau_- = 1$ to be constant maps, where $\tau_Q$ is the `Quaternionic' $2$-cocycle.

\end{enumerate}
Note that $\rho_{-+}\rho_{+-} = -1$, so the ``clutching'' line bundle $L_{+-} \to U_{+-}$ is a `Quaternionic' line bundle over the two-point space $\{i,-i\}$ with free involution.

\begin{prop}\label{prop:non-triviality.gerbe.S1}
The `Real' gerbe $\G$ on $\tilde{S}^1$ constructed above is non-trivial.
\end{prop}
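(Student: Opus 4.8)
The plan is to compute the two sign invariants of $\G$ and compare them with those of the trivial gerbe via the classification result. First, note that $\tilde{S}^1$ is a compact, connected smooth $1$-manifold carrying a smooth involution whose fixed-point set $(\tilde{S}^1)^\iota = \{+1,-1\}$ consists of exactly two points, so Corollary \ref{cor:classification_via_sign} applies with $N=2$: stable equivalence classes of `Real' gerbes on $\tilde{S}^1$ are in bijection with $\ZZ_2^2$ through $\G'\mapsto(\sigma(\G',\{+1\}),\sigma(\G',\{-1\}))$, and the trivial gerbe corresponds to $(+1,+1)$. It therefore suffices to show that the sign-invariant tuple of $\G$ is not $(+1,+1)$. (One should first check, as is immediate from the construction, that the invariant open cover $\{U_+,U_-\}$ with $U_\pm$ equivariantly contractible to $\pm 1$ is compatible with the slice-theorem disks underlying Corollary \ref{cor:classification_via_sign}, and that the data $(\U,L_{ij},\phi_{ijk},J_i,\tau_i,\rho_{ij})$ indeed satisfy the `Real' gerbe axioms — both are routine.)

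Next I would read off $\sigma(\G,\{+1\})$ from the $\tau$-data using the last bullet of \S\ref{sec:sign.invariants}. Since $J_+ \to U_+$ is the product line bundle and its $\ZZ_2$-equivariant structure is the constant map $\tau_+ = \tau_Q(-1,-1) = -1$, the $\ZZ_2$-action on the fibre of $J_+$ over the fixed point $+1\in U_+$ is multiplication by $-1$, whence $\sigma(\G,\{+1\}) = -1$. (For completeness, $\sigma(\G,\{-1\}) = \tau_-(-1) = +1$, but this value is not needed.) Thus the sign-invariant tuple of $\G$ is $(-1,+1)\neq(+1,+1)$, so by Corollary \ref{cor:classification_via_sign} the gerbe $\G$ is not stably equivalent to the trivial one, i.e.\ it is non-trivial.

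I would also record a self-contained alternative that bypasses the full classification: if $\G$ were trivial, then (comparing with the trivial $2$-cocycle) \S\ref{sec:Real.gerbe.class} would provide a \v{C}ech $1$-cochain $(\psi_{ij},\pi_i)$ with $\tau_i^{-1} = \pi_i^{-1}\cdot\iota^*\pi_i$ for each $i$; evaluating the $i=+$ relation at the fixed point $+1$, where $\iota$ acts trivially so that $\iota^*\pi_+(+1)=\pi_+(+1)$, forces $1 = \tau_+(+1)^{-1} = -1$, a contradiction. The argument involves no real obstacle; the only point demanding mild care is the identification of the local $\ZZ_2$-action on $J_i$ over a fixed point with the stable invariant $\sigma(\G,\cdot)$, which is exactly the content already supplied by Proposition \ref{prop:low.dim.gerbe.signs} and Corollary \ref{cor:classification_via_sign}.
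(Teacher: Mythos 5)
Your proof is correct. It is, in its main thread, the sign-invariant argument that the paper records as an alternative immediately after its displayed proof of the proposition. The paper's displayed proof instead uses a Mayer--Vietoris sequence for the invariant cover $\{U_+, U_-\}$: since $H^n_\pm(U_\pm) \cong H^n_\pm(\pt)$ and $H^n_\pm(U_{+-}) \cong H^n(\pt)$ (the intersection being equivariantly a free two-point set), the map $(j_+^*, j_-^*) : H^3_\pm(\tilde{S}^1) \to H^3_\pm(U_+) \oplus H^3_\pm(U_-)$ is an isomorphism, and one then checks the restrictions to $U_\pm$ directly. The two arguments are genuinely different in where they place the cohomological work: your route invokes Corollary \ref{cor:classification_via_sign}, whose proof uses Proposition \ref{prop:low.dim.gerbe.signs} (excision plus the slice theorem) to know the sign invariants are \emph{complete}, and the non-triviality is then a one-line read-off from $\tau_\pm$; the Mayer--Vietoris route is more hands-on and uses no classification, only the local cohomology of the pieces of the cover. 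Your appended \v{C}ech-cochain argument is the most self-contained variant, though strictly speaking it should be phrased for an arbitrary refinement of $\{U_+, U_-\}$ (stable equivalence allows refinements): the point is that any such refinement still has some $V_i$ containing $+1$ with $\tau_i(+1) = -1$, and the coboundary relation $\tau_i^{-1} = \pi_i^{-1}\cdot\iota^*\pi_i$ evaluated there still forces $-1 = 1$. Finally, the parenthetical worry about compatibility of $\{U_+, U_-\}$ with slice-theorem disks is unnecessary: the sign invariant of Definition \ref{dfn:sign.invariant} is cohomological and cover-independent, and the bullet-point characterizations in \S\ref{sec:sign.invariants} are valid for any gerbe datum presenting the class.
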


\begin{proof}
Noting that
\begin{align*}
H^n_\pm(U_\pm) &\cong H^n_\pm(\pt), &
H^n_\pm(U_+ \cap U_-) &\cong H^n(\pt),
\end{align*}
we make use of the Mayer--Vietoris exact sequence for $\{ U_+, U_- \}$,
$$
\overbrace{H^2_\pm(U_+ \cap U_-)}^0 \longrightarrow
\overbrace{H^3_\pm(\tilde{S}^1)}^{\Z_2 \oplus \Z_2} 
\overset{(j_+^*, j_-^*)}{\longrightarrow}
\overbrace{H^3_\pm(U_+) \oplus H^3_\pm(U_-)}^{\Z_2 \oplus \Z_2}  
\longrightarrow
\overbrace{H^3_\pm(U_+ \cap U_-)}^0,
$$
where $j_\pm : U_\pm \to \tilde{S}^1$ are the inclusions. Since $(j_+^*, j_-^*)$ is bijective, it suffices to see that the gerbe constructed on $\tilde{S}^1$ restricts to a non-trivial one in $H^3_\pm(U_+)$ or in $H^3_\pm(U_-)$. It is clear that the restriction to $U_-$ is trivial, but that to $U_+$ is non-trivial.
\end{proof}

We can also show Prop.\ \ref{prop:non-triviality.gerbe.S1} using the sign invariants introduced in \S\ref{sec:sign.invariants}. At the fixed points $\pm 1$, we have
\begin{equation*}
\sigma(\G,\{+1\})=\tau_+=-1,\qquad \sigma(\G,\{-1\})=\tau_-=1.
\end{equation*}

\begin{rem}
If we had chosen $\tau_+ = 1$ and $\tau_- = \tau_Q(-1,-1)=-1$, we would have obtained another non-trivial `Real' gerbe, which is not isomorphic to the previous one. However, they differ only by changing the signs of $\tau_\pm$, i.e.\ multiplying by $\tau_Q=-1$. This modifies the `Real' DD-invariant by $\eta\in H^3_\pm(\pt)$, and is not visible in $\tilde{H}^3_\pm(X)$.
\end{rem}

\medskip

{\bf On $\tilde{S}^2$.}
The sphere $\tilde{S}^2=\tilde{S}^1\wedge\tilde{S}^1$ can be identified with the unit 2-sphere in $\RR^3$ with the involution $(p_0,p_1,p_2)\mapsto (p_0,-p_1,-p_2)$. Since $\tilde{H}^3_\pm(\tilde{S}^2)\cong\ZZ_2$, there is a non-trivial `Real' gerbe over $\tilde{S}^2$, which is not simply obtained by pulling back along $X\rightarrow \pt$. 

Its construction is almost identical to that over $\tilde{S}^1$, the only differences being the following. We take an invariant open cover $\{ U_+, U_- \}$ of $\tilde{S}^2$ so that $U_\pm$ is equivariantly contractible to the fixed point $(\pm 1,0,0)$, and $U_+ \cap U_-$ is equivariantly homotopic to the circle $S^1_{\mathrm{free}} = \{ u \in \C\, |\ \lvert u \rvert = 1 \}$ with the free involution $u \mapsto -u$. Identifying $L_{+-}$ with  $S^1_{\mathrm{free}} \times \C \to S^1_{\mathrm{free}}$, we define
\begin{align*}
\rho_{+-} &: 
S^1_{\mathrm{free}} \times \C \to S^1_{\mathrm{free}} \times \C, &
\rho_{+-}(u, z) &= (-u, u \bar{z}),
\end{align*}
and similarly for $\rho_{-+}$. Since $\rho(-1)_{-+}\rho(-1)_{+-} = -1$, this ``clutching'' line bundle $L_{+-} \to U_{+-} \simeq S^1_{\mathrm{free}}$ is a `Quaternionic' line bundle.

The `Real' gerbe $\G$ described above is non-trivial, since it restricts to the non-trivial gerbe on $\tilde{S}^1 \subset \tilde{S}^2$ (set $p_2=0$). In terms of sign invariants,
\begin{equation*}
\sigma(\G,\{(\pm 1,0,0)\})=\tau_\pm=\mp 1.
\end{equation*}

\medskip

{\bf On $\TT^2$.} Consider the 2-torus $\TT^2 = \tilde{S}^1\times\tilde{S}^1$ with involution $(z_1, z_2) \mapsto (\overline{z_1}, \overline{z_2})$. Pick any one of the four fixed points to be $\pt \in \TT^2$ and consider an open cover $\{ U_0, U_\infty \}$ such that $U_0$ is a $\Z_2$-invariant disk centered at $\pt$ and $U_0 \cap U_\infty \simeq S^1_{\mathrm{free}}$. For this open cover, we can take the remaining data of a `Real' gerbe in the same way as in the case of $\tilde{S}^2$ above. This construction results in a sign invariant $-1$ at $\pt$, and $+1$ at the other three fixed points of $\TT^2$.

It is known \cite{G3} that
$$
H^3_\pm(\TT^2) \cong
\Z_2 \oplus \Z_2 \oplus \Z_2 \oplus \Z_2.
$$
In terms of sign invariants, this fact also follows from Corollary \ref{cor:classification_via_sign}. A basis of $H^3_\pm(\TT^2)$ is obtained by the above construction of a `Real' gerbe localized at each choice of fixed point.

\medskip

\subsection{On the 3-sphere $\tilde{S}^3\cong {\rm SU}(2)$}
When equipped with the sharp involution, Eq.\ \eqref{eqn:sharp}, \eqref{eqn:sharp.unitary.formula}, ${\rm SU}(2)$ is identified with the unit 3-sphere in $\RR^4$ with the involution $(p_0,p_1,p_2,p_3)\mapsto (p_0,-p_1,-p_2,-p_3)$.

Set $U_\pm=\tilde{S}^3\setminus \{(\pm 1,0,0,0)\}$, which is equivariantly homotopic to the fixed point $(\mp 1,0,0,0)$. The double intersection $U_{+-}$ is equivariantly homotopic to the equatorial 2-sphere $S^2_{\rm free}$ with free antipodal involution. Let $L_{+-}\rightarrow U_{+-}$ be the Hopf line bundle. With only two open sets in the cover, $\phi_{ijk}$ need not be considered. We also set $J_\pm\rightarrow U_\pm$ to be the product line bundle.

To specify $\rho_{+-}$, it is convenient to think of $U_{+-}\simeq S^2_{\rm free}$ as the Bloch sphere, so that the fiber $(L_{+-})_{\vect{n}}$ above a unit 3-vector $\vect{n}\in S^2_{\rm free}$ is the $+1$-eigenspace in $\CC^2$ of the spin matrix $\vect{n}\cdot\vect{\sigma}$ along the direction $\vect{n}$. The involution takes $\vect{n}$ to $\vect{n}^\sharp=-\vect{n}$. If $(\vect{n},\xi)\in (L_{+-})_{\vect{n}}$, then 
\begin{equation*}
(\vect{n}^\sharp\cdot\vect{\sigma})(\Theta \xi)=(-\vect{n}\cdot\vect{\sigma})(\Theta \xi)=(\Theta(\vect{n}\cdot\vect{\sigma})\Theta^{-1})(\Theta\xi)=\Theta\xi,
\end{equation*}
so that $\Theta\xi\in (L_{+-})_{\vect{n}^\sharp}$. Thus, we can set $\rho_{+-}:(\vect{n},\xi)\mapsto(\vect{n}^\sharp,\Theta\xi)$, which turns $L_{+-}$ into a `Quaternionic' clutching line bundle. 

Finally, we set $\tau_\pm=\mp 1$, to complete the data of a `Real' gerbe over $\tilde{S}^3={\rm SU}(2)$. This generates $\tilde{H}^3_{\pm}(\tilde{S}^3)\cong\ZZ$.

\medskip
By forgetting the `Real' data, the underlying non-equivariant gerbe is just the basic gerbe over ${\rm SU}(2)$. By restricting to $p_3=0$ (resp.\ $p_2=0=p_3$), we recover the `Real' gerbe over $\tilde{S}^2$ (resp.\ $\tilde{S}^1$) constructed earlier in \S\ref{sec:low.examples.gerbe}.

\subsection{Basic `Real' gerbe on ${\rm SU}(2n)$}\label{sec:basic.Real.SU}
We briefly review the basic (non-equivariant) gerbe on ${\rm SU}(N)$ constructed in \cite{Gaw-R,Mein}: A key to the construction is the fact that the eigenvalues of $u \in {\rm SU}(N)$ are uniquely expressed as $\mathrm{Spec}(u) = \{ e^{2\pi i \lambda_1}, \ldots, e^{2\pi i \lambda_N} \}$ in terms of $\lambda_1, \ldots, \lambda_N \in \R$ such that
\begin{equation}
\lambda_1 \ge \lambda_2 \ge \cdots \ge \lambda_N \ge \lambda_1 - 1,\qquad \sum_{i=1}^N\lambda_i=0.\label{eqn:SU.eigenvalues}
\end{equation}
As a result, for $i = 1, \ldots, N-1$, we have open sets $U_i \subset {\rm SU}(N)$,
$$
U_i = \bigg\{ u \in {\rm SU}(N) \bigg|\
\begin{array}{l}
\mathrm{Spec}(u) =\{ e^{2\pi i \lambda_1}, \ldots, e^{2\pi i \lambda_N} \}, \\
\lambda_1 \ge \cdots \ge \lambda_i > \lambda_{i+1} \ge 
\cdots \ge \lambda_N \ge \lambda_1 - 1
\end{array}
\bigg\},
$$
as well as $U_N \subset {\rm SU}(N)$ by setting
$$
U_N = \bigg\{ u \in {\rm SU}(N) \bigg|\
\begin{array}{l}
\mathrm{Spec}(u) = \{ e^{2\pi i \lambda_1}, \ldots, e^{2\pi i \lambda_N} \}, \\
\lambda_1 \ge \cdots \ge \lambda_N > \lambda_1 - 1
\end{array}
\bigg\}.
$$
These open sets form an open cover $\U = \{ U_1, \ldots, U_N \}$ of ${\rm SU}(N)$. For $i, j$ such that $1 \le i < j \le N$, we can construct a vector bundle $E_{ij} \to U_i \cap U_j$ of rank $j - i$ as a subbundle of the product bundle,
$$
E_{ij} = \bigcup_{u \in U_i \cap U_j}
\bigoplus_{i < p \le j} \mathrm{Ker}(u - e^{2\pi i \lambda_p})
\subset (U_i \cap U_j) \times \C^N.
$$
Taking the determinant we get line bundles $L_{ij} \to U_i \cap U_j$. If $i > j$, then we put $L_{ij} = L_{ji}^*$. Now, we have an obvious isomorphism $\phi_{ijk} : L_{ij} \otimes L_{jk} \to L_{ik}$ on $U_i \cap U_j \cap U_k$ compatible on the quadruple intersections. These data constitute the basic gerbe $\mathcal{G} = (\{ U_i \}, L_{ij}, \phi_{ijk})$ on ${\rm SU}(N)$, whose Diximer--Douady class generates $H^3({\rm SU}(N)) \cong \Z$.

\subsubsection{`Real' structure on the basic gerbe over ${\rm SU}(2n)$}

Suppose that $N = 2n$ is even. In this case, we have the involution on ${\rm SU}(2n)$ induced by a quaternionic structure, $\sharp:u\mapsto \Theta u^* \Theta^{-1}$ (Eq.\ \eqref{eqn:sharp.unitary.formula}). The key to the construction of a `Real' structure on the basic gerbe over ${\rm SU}(2n)$ is:

\begin{lem}
Each of the open sets $U_1, \ldots, U_{2n} \subset {\rm SU}(2n)$ is invariant under the $\sharp$-involution.
\end{lem}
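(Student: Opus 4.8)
# Proof plan for the Lemma

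The plan is to show that the $\sharp$-involution $u \mapsto \Theta u^* \Theta^{-1}$ preserves the spectrum of $u \in {\rm SU}(2n)$ (in fact each eigenvalue with multiplicity), since the open sets $U_i$ were defined purely in terms of the ordered eigenvalue data $\lambda_1 \ge \cdots \ge \lambda_{2n} \ge \lambda_1 - 1$ of Eq.\ \eqref{eqn:SU.eigenvalues}. Once ${\rm Spec}(u^\sharp) = {\rm Spec}(u)$ as multisets, the defining inequalities describing membership in $U_i$ depend only on ${\rm Spec}(u)$, and hence $u \in U_i \iff u^\sharp \in U_i$, which is exactly invariance.

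First I would observe that for any unitary $u$, the operator $u^* = u^{-1}$ has eigenvalues which are the complex conjugates (equivalently, inverses) of those of $u$, with the same multiplicities. Conjugation by the unitary $\Theta$... — here one must be slightly careful, because $\Theta$ is \emph{antiunitary}, $\Theta = J \circ \kappa$ with $\kappa$ complex conjugation. Conjugating a linear operator $A$ by an antiunitary operator sends eigenvalue $\mu$ to $\bar\mu$: if $A\xi = \mu\xi$ then $(\Theta A \Theta^{-1})(\Theta\xi) = \Theta(A\xi) = \Theta(\mu\xi) = \bar\mu\,\Theta\xi$. Applying this with $A = u^* = u^{-1}$: the eigenvalues of $u^{-1}$ are $\{e^{-2\pi i \lambda_p}\}$, and conjugating by $\Theta$ replaces each by its conjugate $\{e^{2\pi i \lambda_p}\}$, recovering exactly ${\rm Spec}(u)$ with the original multiplicities. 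Thus ${\rm Spec}(u^\sharp) = {\rm Spec}(u)$.

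Alternatively, and perhaps more cleanly, I would invoke Eq.\ \eqref{eqn:sharp.equivalent}: the $\sharp$-involution is equivariantly equivalent, via the homeomorphism $u \mapsto Ju$ of ${\rm SU}(2n)$, to the ``minus-transpose'' involution $u \mapsto -u^{\rm t}$. But this equivalence changes the underlying space by left-multiplication by $J$, so it does not directly preserve the $U_i$; hence the direct spectral argument above is the one to carry through. (One could instead note $-u^{\rm t}$ and $u$ need not even be conjugate, so this route is not the right one — the honest computation with $\Theta u^* \Theta^{-1}$ and the antiunitarity of $\Theta$ is what is needed.)

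The only genuinely delicate point — the ``main obstacle'', though it is minor — is keeping straight how an \emph{antiunitary} conjugation acts on eigenvalues, and confirming that multiplicities (not just the set of eigenvalues) are preserved, so that the \emph{ordered} list $\lambda_1 \ge \cdots \ge \lambda_{2n}$ and the sharp inequality ``$\lambda_i > \lambda_{i+1}$'' defining $U_i$ are genuinely unchanged. Since $\Theta$ restricts to an $\RR$-linear isomorphism between the $\mu$-eigenspace of $u^{-1}$ and the $\bar\mu$-eigenspace of $u^\sharp$, these eigenspaces have equal complex dimension, so multiplicities match and the argument goes through for every $i = 1, \ldots, 2n$.
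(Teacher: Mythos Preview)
Your proposal is correct and follows essentially the same approach as the paper: the paper's proof simply asserts that $\mathrm{Spec}(u) = \mathrm{Spec}(u^\sharp)$ and that membership in $U_i$ depends only on the (ordered) spectrum, whence invariance follows. You have supplied the details behind the ``it is clear'' step by tracking carefully how the antiunitary conjugation $\Theta(\cdot)\Theta^{-1}$ acts on eigenvalues and eigenspaces, and you correctly flagged why the minus-transpose reformulation does not help here.
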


\begin{proof}
Notice that a unitary matrix $u \in U_i$ is characterized by its standard expression of the eigenvalues. It is clear that $\mathrm{Spec}(u) = \mathrm{Spec}(u^\sharp)$. Hence $u \in U_i$ implies $u^\sharp \in U_i$. 
\end{proof}

The product bundle $(U_i \cap U_j) \times \C^{2n} \to U_i \cap U_j$ has the trivial `Quaternionic' structure $(u, \xi) \mapsto (u^\sharp, \Theta \xi) $, which restricts to the subbundle $E_{ij} \to U_i \cap U_j$. As a result, $L_{ij} = \det E_{ij}$ is a `Real' (resp.\ `Quaternionic') line bundle if and only if $\lvert i - j \rvert$ is even (resp.\ odd), and we take $\rho_{ij}$ to be the `Real' (resp.\ `Quaternionic') structure. Finally, set $J_i\rightarrow U_i$ to be the product line bundle, and define $\tau_i : U_i \to {\rm U}(1)$ by
$$
\tau_i =
\left\{
\begin{array}{ll}
-1, & (\mbox{$i$ : odd}) \\
1. & (\mbox{$i$ : even})
\end{array}
\right.
$$

\begin{rem}\label{rem:SU.Pfaffian}
To understand the fixed point set of ${\rm SU}(2n)$, recall that conjugation by $J=\begin{pmatrix} 0 & -\mathbbm{1}_n \\
\mathbbm{1}_n & 0 \end{pmatrix}$ converts $\sharp$ to the involution $u\mapsto -u^{\rm t}$ (Eq.\ \eqref{eqn:sharp.equivalent}). So $u$ is $\sharp$-invariant iff $J^{-1}u$ is skew-symmetric. The skew-symmetric elements of ${\rm SU}(2n)$ form two connected components, distinguished by the Pfaffian. Write 
\begin{equation*}
{\rm SU}(2n)^\sharp_\pm=\{u\in{\rm SU}(2n)\,:\,{\rm Pf}(J^{-1}u)=\pm 1\},
\end{equation*}
then we can check that the sign invariants of the basic `Real' gerbe are
$$
\sigma(\G, {\rm SU}(2n)^\sharp_\pm) = -\mathrm{Pf} = \mp 1.
$$
\end{rem}

\begin{rem}
When labelling the ${\rm SU}(N)$ eigenvalues, we could have imposed $\sum_{i=1}^N{\lambda_i}=1$ instead of $\sum_{i=1}^N{\lambda_i}=0$ as in Eq.\ \eqref{eqn:SU.eigenvalues}. Then $\lambda_N$ needs to be replaced by $\lambda_N+1$, and the decreasing labels $\lambda_i^\prime$ are
\begin{equation*}
\underbrace{(\lambda_N+1)}_{\lambda_1^\prime}\geq\underbrace{\lambda_1}_{\lambda_2^\prime}\geq\underbrace{\lambda_2}_{\lambda_3^\prime}\geq\ldots\geq\underbrace{\lambda_{N-1}}_{\lambda_N^\prime}\geq (\lambda_N+1)-1.
\end{equation*}
\end{rem}
The open sets $U_i^\prime$ defined via the $\lambda_i^\prime$ labels are just a cyclic relabelling of the original $U_i$, with $i\mapsto i+1$ mod $N$. Accordingly, we have
$$
\tau'_i =
\left\{
\begin{array}{ll}
1, & (\mbox{$i$ : odd}) \\
-1, & (\mbox{$i$ : even})
\end{array}
\right. 
$$
which provides a `Real' structure on $\mathcal{G}$ inequivalent to that provided by $\tau_i$. The difference in `Real' DD-invariants is $\eta\in H^3_\pm(\pt)$. Up to this choice, there is a canonical `Real' gerbe over ${\rm SU}(2n)$.

\begin{rem}
For any central element $z \in {\rm SU}(N)$, we can define an involution $\iota$ on ${\rm SU}(N)$ to be $\iota(g) = (zg)^{-1}$. Such involutions and their related ``Jandl gerbes'' were studied in \cite{SSW} and subsequent works, but they are different from the involution considered in this paper. 
\end{rem}

\section{`Real' Fermi gerbe and higher spectral flow}\label{sec:Real.gerbe.spectral.flow}
We will give a spectral construction of `Real' gerbes associated to certain self-adjoint Fredholm families, motivated by topological insulator examples and applications.

\subsection{Spectral flow as an obstruction to gap-opening}\label{sec:sf.gap}
Quite generally, a norm-continuous self-adjoint Fredholm family $F:X\rightarrow \mathcal{F}^{\rm sa}_*$ may exhibit \emph{spectral flow} along loops in $X$. Here, $\mathcal{F}^{\rm sa}_*$ denotes the non-contractible component of the space of self-adjoint Fredholm operators $\mathcal{F}^{\rm sa}$ on a separable Hilbert space, with \emph{both} positive and negative essential spectrum \cite{AS}. Let us recall how this works.

Following \cite{Phillips}, there is a well-defined spectral-flattening map $\varphi$ on $\mathcal{F}^{\rm sa}_*$, which collapses all the positive (resp.\ negative) spectrum outside some small interval $[-\delta,\delta]$ to $+1$ (resp.\ $-1$), and rescales the essential spectral gap to $(-1,1)$. So there remains only finitely many eigenvalues inside $(-1,1)$. This is at first a local construction over neighbourhoods for which $A\mapsto\chi_{[-\delta,\delta]}(A)$ is continuous, finite-rank projection-valued ($\chi$ denotes the characteristic function), and is effected there by applying to $A$ the function which is $+1$ when $t\geq \delta$, $-1$ when $t\leq -\delta$, and interpolates linearly between $-1$ and $+1$ when $t\in(-\delta,\delta)$. Over $\mathcal{F}^{\rm sa}_*$, different choices of neighbourhoods and $\delta$ will be needed, but they can be patched into a global construction over $\mathcal{F}^{\rm sa}_*$ by a partition-of-unity argument, as detailed in the proof of Prop.\ 4 of \cite{Phillips}. This $\varphi$ is shown to implement a homotopy equivalence between $\mathcal{F}^{\rm sa}_*$ and
\begin{equation*}
\hat{F}^\infty_*:=\{A\in \mathcal{F}^{\rm sa}_*\,:\,||A||=1,\;\text{Spec}(A)\;\text{finite},\;\text{ess-spec}(A)=\{-1,+1\}\}.
\end{equation*}
For each $A\in\mathcal{F}^{\rm sa}_*$, the spectrally-flattened $\varphi(A)\in\hat{F}^\infty_*$ is obtainable from $A$ by applying continuous functional calculus.

Given a continuous $F:X\rightarrow \mathcal{F}^{\rm sa}_*$, let $v_F$ denote the unitary map 
\begin{equation}
v_F:{x}\mapsto{\rm exp}\left(\pi i (1+\varphi(F(x))\right)=-{\rm exp}(\pi i(\varphi(F(x)))\in {\rm U}(\infty),\label{eqn:Fredholm.to.unitary}
\end{equation}
whose homotopy class defines an element of $K^{-1}(X)$. This passage from $\mathcal{F}^{\rm sa}_*$ to ${\rm U}(\infty)$, taking $A\mapsto -{\rm exp}(\pi i\varphi(A))$, is a homotopy equivalence \cite{AS, Phillips}. 

Spectral flows of $F$ are due to the ``determinant part'' of $v_F$,
\begin{equation*}
{\rm det}(v_F):X\rightarrow {\rm U}(1).
\end{equation*}
Since ${\rm U}(1)=B\ZZ$, the homotopy class of ${\rm det}(v_F)$ is an element of $H^1(X;\ZZ)\cong {\rm Hom}(H_1(X),\ZZ)$. For a continuous loop $\ell:S^1\rightarrow X$, we can pullback $\ell^*[{\rm det}(v_F)]=[\ell^*({\rm det}(v_F))]\in H^1(S^1;\ZZ)\cong\ZZ$, with the latter isomorphism given by the winding number. According to \cite{Phillips}, this winding number gives the spectral flow of $F(\ell(\cdot))$ along the loop $\ell$.

\medskip
The existence of non-trivial spectral flow along some loop $\ell:S^1\rightarrow X$ in the parameter space, can be thought of as a first obstruction to the existence of a global spectral gap inside the common essential spectral gap of the family $F$. Intuitively, along such a loop in $X$, discrete spectrum robustly connects the negative essential spectrum to the positive essential spectrum, ``flowing'' across 0.

\subsection{Equivariant self-adjoint Fredholm families from half-space Hamiltonians}\label{sec:half.Hamiltonian.family}
Recall from \S\ref{sec:KR.formulation}, that Fourier transform of a gapped Hamiltonian $H$ and its Fermi projection $P_{\rm Fermi}$ gave the Hermitian matrix assignments
\begin{equation*}
\TT^d\ni\vect{z}\equiv ({x};z_d)\mapsto H^{\rm 0D}({x};z_d)\in M_{2n}(\CC),
\end{equation*}
and
\begin{equation*}
P_{\rm Fermi}^{\rm 0D}:({x};z_d)\mapsto \frac{1-{\rm sgn}(H^{\rm 0D}({x};z_d))}{2}.
\end{equation*}
Here we have abbreviated the first $d-1$ coordinates of $\TT^d$ into a single symbol $x=(z_1,\ldots,z_{d-1})$.
If we only Fourier transformed $H$ along the first $d-1$ directions, we get a norm-continuous assignment of gapped self-adjoint operators
\begin{equation*}
\TT^{d-1}\ni{x}\mapsto H^{\rm 1D}({x})\in M_{2n}(C^*_{r}(\ZZ))\subset\mathcal{B}(V\otimes\ell^2(\ZZ)),
\end{equation*}
and projections
\begin{equation*}
P^{\rm 1D}_{\rm Fermi}:{x}\mapsto\frac{1-{\rm sgn}(H^{\rm 1D}({x}))}{2}\in M_{2n}(C^*_{r}(\ZZ))\subset\mathcal{B}(V\otimes\ell^2(\ZZ)),
\end{equation*}
each acting on the ``1D Hilbert space'' $V\otimes\ell^2(\ZZ)$ transverse to the first $d-1$ directions.

\medskip

The truncated $\widetilde{H}$ on the half-space can likewise be Fourier transformed in the first $d-1$ directions, giving a truncated self-adjoint family,
\begin{equation*}
\TT^{d-1}\ni{x}\mapsto \widetilde{H}^{\rm 1D}({x})\in M_{2n}(C^*_{r}(\NN))\subset \mathcal{B}(V\otimes\ell^2(\NN)),
\end{equation*}
in which $\widetilde{H}^{\rm 1D}({x})=\jmath^*\circ H^{\rm 1D}({x})\circ\jmath$ is the compression of $H^{\rm 1D}(x)$ to the transverse ``half-line Hilbert space'' $V\otimes\ell^2(\NN)$. 
Because the spectrum of $\widetilde{H}^{\rm 1D}({x})$ modulo the compacts on $V\otimes\ell^2(\NN)$ is the spectrum of $H^{\rm 1D}({x})$ (recall Eq.\ \eqref{eqn:real.Toeplitz}), and the latter is gapped, we see that the essential spectrum of each $\widetilde{H}^{\rm 1D}({x})$ is gapped. Thus $\widetilde{H}^{\rm 1D}$ is a continuous self-adjoint \emph{Fredholm} family
\begin{equation*}
\widetilde{H}^{\rm 1D}:\TT^{d-1}\rightarrow \mathcal{F}^{\rm sa}_*.
\end{equation*}

\medskip

\subsubsection{Time-reversal invariance and equivariant self-adjoint Fredholm families}
Suppose $H$, thus also $P_{\rm Fermi}$, is time-reversal invariant, i.e., it commutes with the quaternionic structure $\Theta$ from Eq.\ \eqref{eqn:standard.quaternionic} (extended to $V\otimes \ell^2(\ZZ^d)$). Then $\widetilde{H}$ also commutes with $\Theta$ acting on \emph{$V\otimes \ell^2(\ZZ^{d-1}\times\NN)$}. When we work with the Fourier transformed Hilbert space $L^2(\TT^{d-1})\otimes (V\otimes \ell^2(\NN))$, the quaternionic structure becomes $\widehat{\Theta}=\Theta\circ\iota_{\rm flip}$ where $\iota_{\rm flip}:x\mapsto \overline{x}$ and $\Theta$ is the standard one on $V\otimes \ell^2(\NN)$. Explicitly, this means that the family $\widetilde{H}^{\rm 1D}$ satisfies the `Quaternionic' condition
\begin{equation}
\widetilde{H}^{\rm 1D}({x})={\rm Ad}_\Theta (\widetilde{H}^{\rm 1D}(\overline{{x}})) =J\overline{\widetilde{H}^{\rm 1D}(\overline{{x}})}J^{-1}.\label{eqn:H.equivariance}
\end{equation}
The operation ${\rm Ad}_\Theta$ preserves the spectrum of self-adjoint operators on $V\otimes\ell^2(\NN)$, thus it induces an involution on $\mathcal{F}^{\rm sa}_*$. Then Eq.\ \eqref{eqn:H.equivariance} is simply restated as:
\begin{prop}\label{prop:H.equivariance}
Let $H$ be a gapped time-reversal invariant local Hamiltonian acting on $V\otimes \ell^2(\ZZ^d)$, and $\widetilde{H}^{\rm 1D}$ be its Fourier transform in the first $d-1$ directions. Then $\widetilde{H}^{\rm 1D}:\TT^{d-1}\rightarrow\mathcal{F}^{\rm sa}_*$ is a continuous equivariant self-adjoint family with respect to the flip involution on $\TT^{d-1}$ and ${\rm Ad}_\Theta$ on $\mathcal{F}^{\rm sa}_*$.
\end{prop}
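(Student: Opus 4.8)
The plan is to verify the three assertions in Proposition~\ref{prop:H.equivariance} in turn: continuity, the Fredholm property with both positive and negative essential spectrum (i.e.\ that the family lands in $\mathcal{F}^{\rm sa}_*$), and equivariance with respect to $\iota_{\rm flip}$ on $\TT^{d-1}$ and ${\rm Ad}_\Theta$ on $\mathcal{F}^{\rm sa}_*$. Most of the substance has already been assembled in the preceding paragraphs of \S\ref{sec:half.Hamiltonian.family}; the task is to package it as a clean proof.

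First I would note that ${\rm Ad}_\Theta$ is a well-defined involution on $\mathcal{F}^{\rm sa}_*$: since $\Theta$ is antiunitary with $\Theta^2=-1$, the map $A\mapsto \Theta A\Theta^{-1}$ preserves self-adjointness, preserves the spectrum (hence preserves Fredholmness and the condition that both essential spectral half-lines are nonempty), and squares to the identity because the sign ambiguity in $\Theta^2=-1$ cancels in the conjugation. So it restricts to an involution of the subspace $\mathcal{F}^{\rm sa}_*\subset\mathcal{F}^{\rm sa}$. Next, continuity of $x\mapsto \widetilde H^{\rm 1D}(x)$ in operator norm follows from the fact that $H^{\rm 1D}$ is, by locality of $H$, a norm-limit of finite linear combinations of monomials $\prod T_i^{\pm}$ whose $\TT^{d-1}$-Fourier coefficients are continuous (indeed trigonometric-polynomial) functions of $x$, and that compression $S\mapsto \jmath^*S\jmath$ is norm-contractive, hence continuous. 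The Fredholm claim is exactly the argument already given: by spectral permanence and the real Toeplitz extension Eq.~\eqref{eqn:real.Toeplitz} (tensored with $M_{2n}(C^*_r(\ZZ^{d-1}))$), the image of $\widetilde H^{\rm 1D}(x)$ in the Calkin algebra of $V\otimes\ell^2(\NN)$ equals $H^{\rm 1D}(x)$, whose spectrum has the gap $(-\epsilon,\epsilon)$ inherited from the bulk gap Eq.~\eqref{eqn:spectral.gap.definition}; hence $\widetilde H^{\rm 1D}(x)$ is self-adjoint Fredholm with essential spectrum meeting both $(-\infty,-\epsilon]$ and $[\epsilon,\infty)$, placing it in $\mathcal{F}^{\rm sa}_*$.

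For equivariance, I would start from $\Theta$-invariance of $\widetilde H$ on $V\otimes\ell^2(\ZZ^{d-1}\times\NN)$ (which follows from that of $H$ since truncation commutes with the restricted $\Theta$, as observed in \S\ref{sec:gap.filling}), Fourier transform in the first $d-1$ directions so that $\Theta\otimes\kappa$ becomes $\widehat\Theta=\Theta\otimes(\iota_{\rm flip}\circ\kappa)$ acting on $L^2(\TT^{d-1})\otimes(V\otimes\ell^2(\NN))$, and unwind the resulting identity fiberwise. Commutation of $\widetilde H^{\rm 1D}$ (now a multiplication operator in $x$) with $\widehat\Theta$ becomes precisely Eq.~\eqref{eqn:H.equivariance}, i.e.\ $\widetilde H^{\rm 1D}(x)={\rm Ad}_\Theta(\widetilde H^{\rm 1D}(\overline x))$, using $\Theta=J\circ\kappa$ to rewrite ${\rm Ad}_\Theta$ as $J\overline{(\,\cdot\,)}J^{-1}$; reading this as an identity of maps $\TT^{d-1}\to\mathcal{F}^{\rm sa}_*$ is exactly the statement that $\widetilde H^{\rm 1D}$ is equivariant for $\iota_{\rm flip}$ on the source and ${\rm Ad}_\Theta$ on the target.

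I do not expect a genuine obstacle here --- the proposition is essentially a bookkeeping repackaging of facts already established. The one point that merits a careful sentence rather than a wave of the hand is the interchange of the partial Fourier transform with the truncation $\jmath^*(\cdot)\jmath$: one must check that Fourier-transforming $H$ in the first $d-1$ directions and then compressing in the last direction yields the same family as compressing first and then transforming, which holds because the $d-1$ translations being transformed commute with the projection $\jmath^*\jmath$ (they act trivially on the $\ell^2(\NN)$ factor). Granting this, the proof is short.
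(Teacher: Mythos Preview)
Your proposal is correct and follows exactly the paper's approach: the paper treats this proposition as an immediate restatement of Eq.~\eqref{eqn:H.equivariance} together with the continuity and Fredholm facts established in the preceding paragraphs of \S\ref{sec:half.Hamiltonian.family}, and you have done precisely this (indeed, more carefully than the paper itself, which simply writes ``Then Eq.~\eqref{eqn:H.equivariance} is simply restated as:'' before the proposition).
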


\subsubsection{$K$-theory connecting map --- Fredholm formulation}
Recall from Eq.\ \eqref{eqn:sharp.unitary.formula} that we also have the involution $u\mapsto \Theta u^*\Theta^{-1}$ on ${\rm U}(\infty)$. Since
\begin{align*}
{\rm exp}(\pi i \varphi({\rm Ad}_\Theta A ))={\rm exp}(\Theta (-\pi i\varphi(A))\Theta^{-1})&=\Theta {\rm exp}(-\pi i\varphi(A)) \Theta^{-1}\\ 
&=\Theta ({\rm exp}(\pi i\varphi(A)))^*\Theta^{-1},
\end{align*}
the homotopy equivalence $\mathcal{F}^{\rm sa}_*\simeq {\rm U}(\infty)$ is actually $\ZZ_2$-equivariant. 

Thus, if we are given an involutive space $(X,\iota)$ and an equivariant self-adjoint Fredholm family $F:X\rightarrow \mathcal{F}^{\rm sa}_*$, the unitary map $v_F$ of Eq.\ \eqref{eqn:Fredholm.to.unitary} gives an equivariant map $X\rightarrow {\rm U}(\infty)$, whose equivariant homotopy class defines a $K$-theory element (recall Eq.\ \eqref{eqn:KR3.classifying.map}),
\begin{equation*}
v_F\in [X,{\rm U}(\infty)]_{\ZZ_2}=KR^{-3}(X).
\end{equation*}

\begin{dfn}\label{defn:boundary.K.class}
Let $H$ be a time-reversal invariant gapped Hamiltonian acting on $V\otimes \ell^2(\ZZ^d)$. The resultant equivariant family $\widetilde{H}^{\rm 1D}:\TT^{d-1}\rightarrow\mathcal{F}^{\rm sa}_*$ of Prop.\ \ref{prop:H.equivariance} determines a class in $KR^{-3}(\TT^{d-1})\cong KO_3(C^*_{r,\RR}(\ZZ^{d-1}))$, which we call the boundary $K$-theory class of $H$.
\end{dfn}

We saw in Eq.\ \eqref{eqn:exponential.map.q} that the connecting map \mbox{$\delta_4:KO_4(C^*_{r,\RR}(\ZZ^d))\rightarrow KO_3(C^*_{r,\RR}(\ZZ^{d-1}))$} applied to the Fermi projection $P_{\rm Fermi}$ for a $\Theta$-invariant $H$, has the formula
\begin{equation*}
\delta_4[P_{\rm Fermi}]=[{\rm exp}(-2\pi i\tilde{P}_{\rm Fermi})],
\end{equation*}
where $\tilde{P}_{\rm Fermi}$ is a(ny) self-adjoint lift of $P_{\rm Fermi}$ in $M_{2n}(C^*_{r}(\ZZ^{d-1})\otimes C^*_{r}(\NN))$ satisfying the $\Theta$-invariance condition $(\tilde{P}_{\rm Fermi})^\flat=\tilde{P}_{\rm Fermi}$ (equivalently, $\widehat{\Theta}$-invariance after Fourier transforming $\ZZ^{d-1}$ to $\TT^{d-1}$).

Let us rewrite $\delta_4$ in terms of the half-space Hamiltonian $\widetilde{H}$ regarded as an equivariant family $\widetilde{H}^{\rm 1D}:\TT^{d-1}\rightarrow \mathcal{F}^{\rm sa}_*$ after Fourier transform. As discussed in \S\ref{sec:sf.gap}, the spectral flattening $\varphi$ is implemented by functional calculus with some real-valued function. It commutes with ${\rm Ad}_\Theta$, and also \mbox{$q:M_{2n}(C^*_r(\NN))\rightarrow M_{2n}(C^*_{r}(\ZZ))$}. Thus
\begin{align*}
q\left(\frac{1-\varphi(\widetilde{H}^{\rm 1D}({x}))}{2}\right)&=\frac{1-\varphi(q(\widetilde{H}^{\rm 1D}({x})))}{2}\\
&=\frac{1-\varphi(H^{\rm 1D}({x}))}{2}=\frac{1-{\rm sgn}(H^{\rm 1D}({x}))}{2}=P^{\rm 1D}_{\rm Fermi}({x}),\quad x\in\TT^{d-1},
\end{align*}
so we have a desired $\widehat{\Theta}$-invariant lift $\frac{1-\varphi(\widetilde{H}^{\rm 1D}({\cdot}))}{2}$ of $P^{\rm 1D}_{\rm Fermi}({\cdot})$.
Now applying the exponential map,
\begin{align*}
{\rm exp}\left(-2\pi i \left(\frac{1-\varphi(\widetilde{H}^{\rm 1D}({x}))}{2}\right)\right)&={\rm exp}\left(\pi i (1+\varphi(\widetilde{H}^{\rm 1D}({x})))\right)\nonumber \\
&=-{\rm exp}\left(\pi i\varphi(\widetilde{H}^{\rm 1D}({x}))\right)\in {\rm U}(\infty).
\end{align*}
Thus the connecting map $\delta_4$ can be rewritten as
\begin{equation*}
\delta_4[P_{\rm Fermi}]=\left[x\mapsto -{\rm exp}\left(\pi i \varphi(\widetilde{H}^{\rm 1D}(x))\right)\right]\in [\TT^{d-1}, {\rm U}(\infty)]_{\ZZ_2}=KR^{-3}(\TT^{d-1}).
\end{equation*}
Under the equivariant homotopy equivalence $\mathcal{F}^{\rm sa}_*\simeq {\rm U}(\infty)$ described in \ref{sec:equiv.sa.Fredholm}, we equivalently have
\begin{equation*}
\delta_4[P_{\rm Fermi}]=\left[x\mapsto \widetilde{H}^{\rm 1D}(x)\right]\in [\TT^{d-1},\mathcal{F}^{\rm sa}_*]_{\ZZ_2}\cong KR^{-3}(\TT^{d-1}).
\end{equation*}
To summarize: 
\begin{cor}\label{cor:BBC}
\emph{$\delta_4$ maps the bulk Fermi projection $P_{\rm Fermi}$ to the half-space Hamiltonian $\widetilde{H}^{\rm 1D}$ at the level of their $K$-theory classes.}
\end{cor}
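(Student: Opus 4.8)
The plan is to assemble, into a single statement, the chain of identifications already laid out in this subsection; the only genuinely new ingredient is a careful choice of the self-adjoint lift that enters the exponential formula for $\delta_4$. First I would recall from Eq.\ \eqref{eqn:exponential.map.q} that, for \emph{any} self-adjoint $\widehat{\Theta}$-invariant lift $\widetilde{P}$ of $P^{\rm 1D}_{\rm Fermi}$ inside $M_n(C^*_{r,\RR}(\ZZ^{d-1}\times\NN)^\HH)$ with $0\le\widetilde{P}\le 1$, one has $\delta_4[P_{\rm Fermi}]=[{\rm exp}(-2\pi i\widetilde{P})]\in KO_3(C^*_{r,\RR}(\ZZ^{d-1}))\cong KR^{-3}(\TT^{d-1})$, this class being independent of the chosen lift (part of the well-definedness of $\delta_4$).

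The key step is then to take the specific lift
\begin{equation*}
\widetilde{P}(x)=\frac{1-\varphi(\widetilde{H}^{\rm 1D}(x))}{2},\qquad x\in\TT^{d-1},
\end{equation*}
where $\varphi$ is the spectral-flattening map of \S\ref{sec:sf.gap}, and to check that it is admissible: it is norm-continuous in $x$ and $\widehat{\Theta}$-invariant since $\varphi$ commutes with ${\rm Ad}_\Theta$; it satisfies $0\le\widetilde{P}\le 1$ since $\varphi$ is valued in $[-1,1]$; and it lifts $P^{\rm 1D}_{\rm Fermi}$ because $q$ commutes with $\varphi$, because $q(\widetilde{H}^{\rm 1D}(x))=H^{\rm 1D}(x)$, and because $\varphi(H^{\rm 1D}(x))={\rm sgn}(H^{\rm 1D}(x))$ as $H^{\rm 1D}(x)$ has a spectral gap at $0$. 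Substituting this lift into the exponential formula and computing
\begin{equation*}
{\rm exp}\bigl(-2\pi i\widetilde{P}(x)\bigr)={\rm exp}\bigl(\pi i(1+\varphi(\widetilde{H}^{\rm 1D}(x)))\bigr)=-{\rm exp}\bigl(\pi i\varphi(\widetilde{H}^{\rm 1D}(x))\bigr),
\end{equation*}
I would recognise precisely the unitary $v_{\widetilde{H}^{\rm 1D}}$ of Eq.\ \eqref{eqn:Fredholm.to.unitary} attached to the equivariant Fredholm family $\widetilde{H}^{\rm 1D}$. Finally I would invoke the $\ZZ_2$-equivariant homotopy equivalence $\mathcal{F}^{\rm sa}_*\simeq{\rm U}(\infty)$ (intertwining ${\rm Ad}_\Theta$ on the left with $u\mapsto\Theta u^*\Theta^{-1}$ on the right), recalled in \S\ref{sec:equiv.sa.Fredholm} and above, to conclude that $[v_{\widetilde{H}^{\rm 1D}}]\in[\TT^{d-1},{\rm U}(\infty)]_{\ZZ_2}$ agrees with $[\widetilde{H}^{\rm 1D}]\in[\TT^{d-1},\mathcal{F}^{\rm sa}_*]_{\ZZ_2}$, that is, with the boundary $K$-theory class of Definition \ref{defn:boundary.K.class}; this is exactly the asserted equality $\delta_4[P_{\rm Fermi}]=[\widetilde{H}^{\rm 1D}]$.

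The main obstacle I anticipate is not in this chain of equalities — which is essentially bookkeeping, much of it already written out above — but in the underlying well-behavedness of $\varphi$. A priori $\varphi$ is only a pointwise continuous-functional-calculus prescription, patched together over $\mathcal{F}^{\rm sa}_*$ by a partition of unity (following \cite{Phillips}), so one must check that it can be arranged to commute with ${\rm Ad}_\Theta$ (so that equivariance is genuinely preserved) and with the Toeplitz quotient $q$ (so that the lift descends to $P^{\rm 1D}_{\rm Fermi}$), and that the resulting $\widetilde{P}$ really lies in the real $C^*$-subalgebra $M_n(C^*_{r,\RR}(\ZZ^{d-1}\times\NN)^\HH)$ and not merely in $\mathcal{B}(V\otimes\ell^2(\ZZ^{d-1}\times\NN))$. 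One should also note that it is precisely the independence of $\delta_4[P_{\rm Fermi}]$ from the choice of lift which permits the use of the convenient lift built from $\varphi$ in the first place.
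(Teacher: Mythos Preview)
Your proposal is correct and follows essentially the same approach as the paper: the corollary is a summary of the computation already laid out in the preceding paragraphs, and you have reproduced exactly that chain --- choose the lift $\frac{1-\varphi(\widetilde{H}^{\rm 1D})}{2}$, verify via $q\circ\varphi=\varphi\circ q$ and $\varphi(H^{\rm 1D})={\rm sgn}(H^{\rm 1D})$ that it is an admissible $\widehat{\Theta}$-invariant lift of $P^{\rm 1D}_{\rm Fermi}$, compute the exponential to obtain $v_{\widetilde{H}^{\rm 1D}}$, and invoke the equivariant homotopy equivalence $\mathcal{F}^{\rm sa}_*\simeq{\rm U}(\infty)$. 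Your closing paragraph flagging the well-behavedness of the patched $\varphi$ (commutation with ${\rm Ad}_\Theta$ and with $q$, and membership in the half-space algebra) is a fair observation of points the paper takes for granted, but does not indicate any divergence in method.
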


\begin{rem}
For the physical interpretation, it is important that we may replace $\widetilde{H}^{\rm 1D}$ above by a more realistic $\widetilde{H}^{\rm 1D}+V$ where $V$ is any ($\widehat{\Theta}$-invariant) self-adjoint perturbation term in the kernel of $q$. Such a $V$ is a family of compact operators parametrized by $\TT^{d-1}$, thus may be thought of as a ``transversally compact'' time-reversal invariant boundary perturbation. ``Turning on $V$'' does not change the homotopy invariants of $\widetilde{H}^{\rm 1D}$ such as the $K$-theory class. So conclusions drawn from computations involving such invariants are stable against such perturbations.
\end{rem}

\subsubsection{Equivariant self-adjoint Fredholm families and vanishing spectral flows}\label{sec:equiv.sa.Fredholm}

Abstracting the above discussion, consider a compact, connected space $X$ with involution $\iota$, and a continuous equivariant map
\begin{equation*}
F:X\rightarrow \mathcal{F}^{\rm sa}_*.
\end{equation*}
By the Fredholm condition, there exists some common essential spectral gap for the family, say $(-\epsilon,\epsilon)$. We shall think of $F$ as an ``abstract $\Theta$-invariant half-space Hamiltonian which has a gap $(-\epsilon,\epsilon)$ in the bulk''. 

The full spectrum of the half-space Hamiltonian is the union of the spectra of $F(x)$ over $x\in X$, and  includes contributions from the discrete spectra of the $F(x)$ inside $(-\epsilon,\epsilon)$. Thus the half-space Hamiltonian will not generally retain the entire interval $(-\epsilon,\epsilon)$ as a spectral gap. Our interest is in whether any spectral gap inside $(-\epsilon,\epsilon)$ persists at all.

A first obstruction could occur at the level of spectral flows, which for equivariant self-adjoint Fredholm families $F$, are equivariant spectral flows, corresponding to a class ${\rm Sf}(F)\in H^1_{\ZZ_2}(X)$. If ${\rm Sf}(F)$ is non-trivial, gap-filling occurs. 

In some situations, we automatically have ${\rm Sf}(F)=0$, and we cannot conclude anything about gap-filling from such an absence of spectral flow. 
For example, the involutive space $\TT^{d-1}$ has $H^1_{\ZZ_2}(\TT^{d-1})=0$. Let us therefore look for a secondary obstruction to spectral gap-opening for the equivariant family $F$.

\subsection{`Real' Fermi gerbe obstruction to spectral gap opening}\label{sec:Real.Fermi}
Even if $F$ has no spectral flows at all, it could still be possible that a global spectral gap is prevented from existing inside $(-\epsilon,\epsilon)$. That a higher obstruction exists in the form of a ``Fermi gerbe'' associated to $F$, was observed in \cite{C-T}. There, it was explained that the Dixmier--Douady invariant of the Fermi gerbe detects the gap-filling property for $F$. 
This is a \emph{non-equivariant} obstruction living in $H^3(X)$, occuring only when ${\rm dim}(X)\geq 3$.

For equivariant self-adjoint Fredholm families, the Fermi gerbe construction initially proceeds in the same way as in \cite{C-T}, but it acquires the extra structure of a `Real' gerbe under a certain assumption. As we saw in \S\ref{sec:Real.gerbe.examples}, there are non-trivial `Real'  gerbes even if ${\rm dim}(X)<3$, and this is important for physically important examples with $X=\TT^{d-1}, d=2,3$.

\subsubsection{Underlying non-equivariant Fermi gerbe}
In \cite{C-T}, the \emph{Fermi gerbe} was constructed from a continuous family of self-adjoint Fredholm operators parametrized by a compact space $X$. Without loss of generality, the common essential spectral gap of the family is assumed to be $(-1,1)$.

As a warm-up, let us reformulate the spectral flow of $F$ as follows. For each $\mu \in (-1, 1)$, define the open set
$$
U_\mu = \{ x \in X |\ \mu \not\in \mathrm{Spec}(F(x)) \},
$$
so that $\U=\{U_\mu\}_{\mu\in(-1,1)}$ covers $X$. For $\mu, \nu \in (-1, 1)$ such that $\mu < \nu$, let $E_{\mu \nu} \to U_\mu \cap U_\nu$ be the eigenspace bundle of $F$ for the eigenvalues lying inside $(\mu, \nu)$. On $U_\mu \cap U_\nu \cap U_\xi$ with $\mu < \nu < \xi$, there is a natural isomorphism $E_{\mu \nu} \oplus E_{\nu \xi} \cong E_{\mu \xi}$. Thus, the rank $r_{\mu \nu} = \mathrm{rank}E_{\mu \nu}$ defines a locally constant function $r_{\mu \nu} : U_\mu \cap U_\nu \to \Z$, and we get a \v{C}ech cocycle $r = (r_{\mu \nu}) \in \check{Z}^1(\U; \Z)$, and upon passing to refinements, a first cohomology class ${\rm Sf}(F)\in H^1(X)$.

Now suppose there exists some $\lambda_0 \in (-1, 1)$ such that $\lambda_0 \not\in \mathrm{Spec}(F(x))$ for all $x\in X$. Then $U_{\lambda_0} = X$ is enough to cover $X$, the \v{C}ech cocycle $r$ trivializes, and so $\mathrm{Sf}(F) = 0$. In other words, if $\mathrm{Sf}(F) \neq 0$, then we can conclude that no such global gap $\lambda_0 \in (-1, 1)$ can occur in the ${\rm Spec}(F(x))$. As the spectral flow is homotopy invariant, for any other map $F'$ homotopic to $F$, the gap-filling still occurs for $F'$.

\medskip

Conversely, suppose $\mathrm{Sf}(F) = 0$. This does \emph{not} necessarily imply that there exists a global gap $\lambda_0 \in (-1, 1)$ in ${\rm Spec}(F(x))$. Here, a gerbe obstruction can occur. With $\U=\{U_\mu\}_{\mu\in(-1,1)}$ as before, take the determinant line bundles $L_{\mu \nu} = \det E_{\mu \nu}$ over $U_\mu\cap U_\nu$. There are isomorphisms $\phi_{\mu \nu \xi}:L_{\mu\nu}\otimes L_{\nu\xi}\rightarrow L_{\mu\xi}$ on triple intersections, induced from $E_{\mu \nu} \oplus E_{\nu \xi} \cong E_{\mu \xi}$. We get a \v{C}ech cocycle in $\check{Z}^2(\U;\underline{{\rm U}(1)})$ defining a gerbe, and then its Dixmier--Douady (DD) invariant in $H^3(X)$. 

\medskip
Suppose $\mu_0\in(-1,1)$ is a global spectral gap for $F$. Then $U_{\mu_0}$ suffices to cover $X$ and globally trivialize the Fermi gerbe. Thus a non-vanishing DD-invariant of the Fermi gerbe detects gap-filling.

\subsubsection{`Real' structure on Fermi gerbe}
 When $F$ is an equivariant self-adjoint family, the $U_\mu$ would give an invariant cover of $X$, and the equivariant spectral flow gives a class  ${\rm Sf}(F)\in H^1_{\ZZ_2}(X)$ obstructing the existence of a global gap in ${\rm Spec}(F(x))$.  Suppose ${\rm Sf}(F)=0$; we shall look for the next obstruction in the form of a `Real' Fermi gerbe.

We have the invariant open cover $\{ U_\mu \}$, line bundles $L_{\mu \nu} = \det E_{\mu \nu}$, and also isomorphisms $\phi_{\mu \nu \xi}$, as in the non-equivariant Fermi gerbe (We take $J_\mu$ to be trivial.) The `Quaternionic' structure on the Hilbert space restricts to each $E_{\mu\nu}$, and induces a `Real' or `Quaternionic' structure $\rho_{\mu\nu}$ on the determinant line bundles $L_{\mu\nu}$ depending on whether the rank $r_{\mu\nu}$ of $E_{\mu\nu}$ is even or odd. Here, we observe that $\rho_{\mu\nu}$ furnishes an antiunitary isomorphism between $L_{\mu\nu}$ and $\iota^*L_{\mu\nu}$, forming part of the data of a `Real' gerbe. In a local trivialization, this would be implemented as
\begin{equation*}
\rho_{\mu\nu}:(x,z)\mapsto(\iota(x),\rho_{\mu\nu}(x)\overline{z}),
\end{equation*}
with $\rho_{\mu\nu}(x)\in{\rm U}(1)$. Squaring it gives
\begin{equation*}
\pm {\rm id}=\rho_{\mu\nu}^2:(x,z)\mapsto(x,\rho_{\mu\nu}(\iota(x))\overline{\rho_{\mu\nu}(x)}z)
\end{equation*}
according to the parity of $r_{\mu\nu}$.
Thus $\rho_{\mu\nu}^{-1}\iota^*\rho_{\mu\nu}=\pm 1=e^{\pi i r_{\mu\nu}}$.

To complete the data of a `Real' Fermi gerbe, the remaining datum \mbox{$\tau_\mu : U_\mu \to {\rm U}(1)$} satisfying
\begin{equation*}
\rho_{\mu\nu}^{-1}\iota^*\rho_{\mu\nu}=e^{\pi i r_{\mu\nu}}=\tau_\nu^{-1}\tau_\mu,\qquad \tau_\mu\iota^*\tau_\mu=1,
\end{equation*}
(i.e.\ the remaining cocycle conditions Eq.\ \eqref{eqn:2cocycle.conditions})
needs to be provided. 

In general, this last datum is not available, but under the assumption $\mathrm{Sf}(F) = 0$, we can always find such a desired $\tau_\mu$, after possibly passing to a refinement, see Lemma \ref{lem:vanishing.equivariant.H1} in the Appendix. Even so, the choice of $\tau_\mu$ is not unique, and is not determined by the spectral data of $F$. Nevertheless, when $X$ is connected the ambiguity is just an overall sign in the collection of $\tau_\mu$. Thus the DD-class of the resulting `Real' gerbe is well-defined up to $\eta\in H^3_\pm(\pt) \cong \Z_2$, see Remark \ref{rem:vanishing.equivariant.H1}.

As in the non-equivariant case, if the above `Real' Fermi gerbe has non-trivial `Real' DD-invariant in $H^3_\pm(X)/H^3_\pm({\rm pt})$, then there cannot exist any $\lambda_0 \in (-1, 1)$ such that $\lambda_0 \not\in \mathrm{Spec}(F(x))\;\forall x\in X$, to wit: gap-filling occurs. This `Real' gerbe obstruction is an equivariant homotopy invariant of $F$.

\subsubsection*{Sign invariants of `Real' Fermi gerbe}
Let $y$ be a fixed point of $(X,\iota)$. Then $F(y)$ is a $\Theta$-invariant self-adjoint Fredholm operator. Suppose $\mu_y\in(-1,1)$ is an eigenvalue of $F(y)$, then it has even multiplicity (``Kramers degeneracy''). Also, $y\in U_{\mu_y\pm\varepsilon}$ for sufficiently small $\varepsilon$ since $\mu_y$ is isolated in the spectrum of $F(y)$. At $y$, we have $\tau_{\mu_y-\varepsilon}=\tau_{\mu_y+\varepsilon}$ since $r_{\mu_y-\varepsilon,\mu_y+\varepsilon}$ is even, and this value of $\tau_{\mu_y\pm\varepsilon}$ is the sign invariant of $\G_F$ at $\{y\}$. This ``invariant'' inherits an ambiguity from that of $\tau_\mu$, but the product of sign invariants at any \emph{pair} of components of $X^\iota$ is unambiguous, see Remark \ref{rem:POS}.

Some examples of `Real' Fermi gerbes and their sign invariants, are given in Fig.\ \ref{fig:phase.function3} and Fig.\ \ref{fig:phase.function}.

\subsection{Relation between $KR^{-3}$ and `Real' gerbes}

\begin{lem}\label{lem:U.to.SU} 
Let $(X,\iota)$ be an involutive space such that $H^1_{\ZZ_2}(X)=0$. Suppose $q:X\rightarrow {\rm U}(2n)$ is an equivariant map, i.e., $q(\iota(x))=q(x)^\sharp$. Then $q$ is equivariantly homotopic to a map $X\rightarrow {\rm SU}(2n)$.
\end{lem}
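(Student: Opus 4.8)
The plan is to use the determinant map to ``split off'' the $\mathrm{U}(1)$-part of $q$ equivariantly. Recall that $\mathrm{U}(2n)$ fits into the fibration $\mathrm{SU}(2n) \to \mathrm{U}(2n) \xrightarrow{\det} \mathrm{U}(1)$, which in fact splits (non-canonically) as a product of $\mathrm{SU}(2n)$-spaces once we trivialize using a section, e.g.\ $z \mapsto \mathrm{diag}(z,1,\ldots,1)$; but to keep track of the $\ZZ_2$-actions it is cleanest to work directly with $\det q$. First I would observe how the involution interacts with the determinant: since $q(\iota(x)) = q(x)^\sharp = \Theta q(x)^* \Theta^{-1}$ (Eq.\ \eqref{eqn:sharp.unitary.formula}), and $\det$ is conjugation-invariant, we get $\det(q(\iota(x))) = \overline{\det(q(x))}$. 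Thus the map $d := \det q : X \to \mathrm{U}(1)$ is equivariant with respect to the complex-conjugation involution on $\mathrm{U}(1)$, so it represents a class in $[X,\mathrm{U}(1)]_{\ZZ_2} \cong H^1_\pm(X)$.

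The key step is then to kill this class. The hypothesis $H^1_{\ZZ_2}(X) = 0$ does not directly say $H^1_\pm(X) = 0$, so I would first need to relate the two. Here I would invoke the exact sequence relating Borel cohomology with constant and twisted $\ZZ$-coefficients (the analogue of the sequence quoted in the proof of Prop.\ \ref{prop:low.dim.gerbe.signs}, or Prop.\ 2.3 of \cite{Gomi}): there is $H^1_{\ZZ_2}(X) \to H^1(X) \to H^1_\pm(X) \to H^2_{\ZZ_2}(X) \to \cdots$, but this still leaves an $H^2_{\ZZ_2}$ term. Let me instead argue more directly: I claim $[X,\mathrm{U}(1)]_{\ZZ_2}$, for the conjugation action on $\mathrm{U}(1)$, sits in an exact sequence whose relevant term is controlled by $H^1_{\ZZ_2}(X)$. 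Concretely, writing $\mathrm{U}(1) = \{|z|=1\}$ with conjugation, an equivariant map $d$ is the same as an odd map $\theta: X \to \RR/\ZZ$ (mod the subtlety that conjugation fixes $\pm 1$), i.e.\ $\theta(\iota(x)) = -\theta(x)$; such a $\theta$ is an element of $H^1(X; \widetilde{\ZZ})$-type data but at the \emph{cochain} level it is literally an equivariant map to the sign-representation circle, whose homotopy classes form $H^1_\pm(X)$ directly. The honest route: $\widetilde{H}^1_\pm(X) \cong [X,\mathrm{U}(1)]_{\ZZ_2}$ by Prop.\ A.2 of \cite{Gomi} (already cited in \S\ref{sec:FKMM.invariant}), and then I use the exact sequence, from the cofiber-type description, $H^0_{\ZZ_2}(X;\ZZ_2) \to H^1_\pm(X) \to H^1_{\ZZ_2}(X)$ — wait, this needs care. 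The cleanest correct statement I would use is: the suspension isomorphisms $\widetilde{H}^n_\pm(X\wedge\tilde S^1) \cong \widetilde{H}^{n-1}_{\ZZ_2}(X)$ and $\widetilde{H}^n_{\ZZ_2}(X\wedge\tilde S^1)\cong\widetilde{H}^{n-1}_\pm(X)$ from \S\ref{sec:low.examples.gerbe} together with a long exact sequence; but since this is getting intricate, in the writeup I would simply cite the precise statement from \cite{Gomi} that under $H^1_{\ZZ_2}(X)=0$ one has $H^1_\pm(X)$ generated by elements with a specified normal form, OR — and this is probably what the authors intend — appeal directly to the fact (Prop.\ A.2 of \cite{Gomi}, combined with the coefficient sequence) that $H^1_{\ZZ_2}(X)=0$ forces every equivariant map $X\to\mathrm{U}(1)$ (conjugation action) to be equivariantly null-homotopic. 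I would state this as the crux.

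Granting that $d = \det q$ is equivariantly null-homotopic, let $d_t$ be an equivariant homotopy from $d$ to the constant map $1$. Then I would lift this to a homotopy of $q$: pick any equivariant lift — for instance, using the equivariant section $s: \mathrm{U}(1) \to \mathrm{U}(2n)$, $s(z) = \mathrm{diag}(z, \bar z, 1, \ldots, 1)$, which is easily checked to satisfy $s(\bar z) = s(z)^\sharp$ by a direct computation with $\Theta$ in block form (the first two coordinates form a quaternionic pair) — and set $q_t := q \cdot s(d_t)^{-1} \cdot s(d_0)$, or more simply $q_t := s(d_0 d_t^{-1}) \cdot q$. This is equivariant since both factors are, it starts at $q$, and at $t=1$ it has determinant $d \cdot \overline{?}$... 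I would just define $q_t = s(d_t/d)\,q$ — hmm, division on $\mathrm{U}(1)$ is fine — so $q_0 = q$ and $\det q_1 = (d_1/d)\cdot d = d_1 = 1$, hence $q_1: X \to \mathrm{SU}(2n)$, and equivariance of $q_1$ follows since $s$ is equivariant and quotients/products of equivariant maps to $\mathrm{U}(1)$ (conjugation action, abelian) are equivariant. This completes the argument.

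The main obstacle I anticipate is precisely the passage from the hypothesis $H^1_{\ZZ_2}(X) = 0$ (Borel cohomology, untwisted coefficients) to the needed vanishing/triviality statement about equivariant maps $X \to \mathrm{U}(1)$ with the conjugation action, which is governed by the \emph{twisted} group $H^1_\pm(X)$. Everything else — the behaviour of $\det$ under the $\sharp$-involution, the existence of an equivariant section of $\det$, and the final homotopy-lifting step — is routine. In the writeup I would isolate this cohomological input as a short separate claim (or cite Lemma \ref{lem:vanishing.equivariant.H1} / Remark \ref{rem:vanishing.equivariant.H1} in the Appendix, to which the paper already refers for a closely analogous vanishing statement), and keep the main proof to the three lines above.
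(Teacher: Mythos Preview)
There is a genuine error in your computation of how $\det$ behaves under $\sharp$. You write that since $q(\iota(x))=\Theta q(x)^*\Theta^{-1}$ and ``$\det$ is conjugation-invariant'', one gets $\det(q(\iota(x)))=\overline{\det(q(x))}$. But $\Theta=J\circ\kappa$ is \emph{antilinear}, so conjugation by $\Theta$ is not ordinary matrix conjugation: for a linear operator $A$ one has $\Theta A\Theta^{-1}=J\overline{A}J^{-1}$, hence $\det(\Theta A\Theta^{-1})=\overline{\det A}$. Applying this with $A=q(x)^*$ gives
\[
\det\bigl(q(\iota(x))\bigr)=\det\bigl(\Theta q(x)^*\Theta^{-1}\bigr)=\overline{\det(q(x)^*)}=\overline{\overline{\det q(x)}}=\det q(x).
\]
Thus $\det q:X\to{\rm U}(1)$ is equivariant for the \emph{trivial} involution on ${\rm U}(1)$, so $[\det q]\in[X,{\rm U}(1)_{\rm triv}]_{\ZZ_2}\cong H^1_{\ZZ_2}(X)=0$ by Lemma~\ref{lem:H1.equivariant}. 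This is exactly the hypothesis --- there is no need to pass to $H^1_\pm(X)$, and all of your worry about relating the twisted and untwisted groups evaporates. (Indeed the claim you were reaching for, that $H^1_{\ZZ_2}(X)=0$ forces $H^1_\pm(X)=0$, is false in general: already for $X=\pt$ one has $H^1_{\ZZ_2}(\pt)=0$ but $H^1_\pm(\pt)\cong\ZZ_2$.)

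A secondary issue: your proposed equivariant section $s(z)=\mathrm{diag}(z,\bar z,1,\ldots,1)$ has $\det s(z)=|z|^2=1$, so it is not a section of $\det$ at all. The paper avoids sections entirely: once $\det q$ is equivariantly null-homotopic via $g_t:X\to{\rm U}(1)$ with $g_0=1$, $g_1=(\det q)^{-1}$, and $g_t\circ\iota=g_t$, one takes a continuous $2n$-th root $h_t$ (possible since $g_t$ has no winding) and sets $\tilde q_t(x)=q(x)\cdot h_t(x)\mathbbm{1}_{2n}$. The scalar matrix is $\sharp$-fixed and $h_t$ is $\iota$-invariant, so equivariance is preserved, and $\det\tilde q_1=1$.
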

\begin{proof}
We have $ [{\rm det}(q)]\in [X,{\rm U}(1)_{\rm triv}]_{\ZZ_2}\cong H^1_{\ZZ_2}(X)=0$ where ${\rm U}(1)$ is given the trivial involution, see e.g.\ Lemma \ref{lem:H1.equivariant}. So there exists a homotopy ${g}:X\times [0,1]\rightarrow {\rm U}(1)$ such that 
\begin{equation*}
{g}(x,0)=1,\quad {g}(x,1)={\rm det}\,q(x)^{-1},\quad {g}(x,t)={g}(\iota(x),t),\quad \forall x\in X, t\in[0,1].
\end{equation*}
For each $t\in[0,1]$, the map ${g}(\cdot,t):X\rightarrow{\rm U}(1)$ has vanishing winding numbers, so we may choose a continuous $2n$-th root function ${h}(\cdot,t):X\rightarrow{\rm U}(1)$,
\begin{equation*}
{h}(x,t)={h}(\iota(x),t),\quad ({h}(x,t))^{2n}=g(x,t),\quad \forall x\in X, t\in [0,1].
\end{equation*}
For $t\in[0,1]$, set
\begin{equation*}
\tilde{q}(x,t)=q(x) (h(x,t)\mathbbm{1}_{2n}),\quad x\in X,
\end{equation*}
to obtain the desired equivariant homotopy from $q$ to $\tilde{q}(\cdot,1)\in [X,{\rm SU}(2n)]_{\ZZ_2}$.
\end{proof}

\medskip

Recall that a $KR^{-3}(X)$ class may be represented by an equivariant map $X\rightarrow {\rm U}(2n)$ for some $n$. Lemma \ref{lem:U.to.SU} says that if $H^1_{\ZZ_2}(X)=0$, the representative classifying map may be taken to land in ${\rm SU}(2n)$, and the `Real' basic gerbe over ${\rm SU}(2n)$ may be pulled back to $X$. Despite this, we do not obtain a well-defined map from $KR^{-3}(X)$ to $H^3_\pm(X)$. This is because two different choices of representative ${\rm SU}(2n)$-valued classifying maps may be equivariantly homotopic only over ${\rm U}(2n)$-valued maps rather than over ${\rm SU}(2n)$-valued maps. The basic gerbe cannot generally be pulled back over the homotopy, and it is possible that different choices of representative ${\rm SU}(2)$-valued classifying maps lead to inequivalent pullback `Real' gerbes.

The ambiguity can be seen with a simple example. Take $X=\pt$. Then $q_t:\pt\mapsto e^{\pi it}\mathbbm{1}_2\in{\rm U}(2)$ gives a homotopy between the constant ${\rm SU}(2)$-valued maps $q_0=\mathbbm{1}_2$ and $q_1=-\mathbbm{1}_2$. The basic `Real' gerbe on ${\rm SU}(2)$ restricts to the trivial `Real' gerbe over $\{\mathbbm{1}_2\}$, but restricts to the non-trivial `Real' gerbe over $\{-\mathbbm{1}_2\}$. Yet $q_0$ and $q_1$ represent the same (trivial) $KR^{-3}(\pt)$ class.

For a slightly more interesting example, recall that the non-trivial `Real' gerbe over $\tilde{S}^1$ is a restriction of the basic `Real' gerbe over ${\rm SU}(2)$; thus it has the classifying map
\begin{equation*}
q_0:\tilde{S}^1\ni z\mapsto \begin{pmatrix} z & 0 \\ 0 &\overline{z}\end{pmatrix}\in{\rm SU}(2).
\end{equation*}
Setting $q_t:=e^{\pi it}q_0$ gives a homotopy, through ${\rm U}(2)$-valued maps, from $q_0$ to $q_1=-q_0$. 
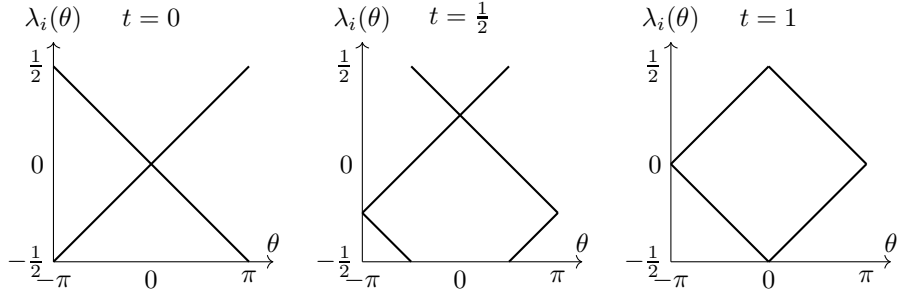
\begin{figure}[h!]
\begin{tikzpicture}[scale=0.65]
\draw[->] (0,0) -- (4.5,0) node[anchor=south] {$\theta$};
\draw[->] (0,0) -- (0,4.5) node[anchor=south] {$\lambda_i(\theta)$};
\draw[thick] (0,4) -- (4,0);
\draw[thick] (0,0) --  (4,4);
\node[below] at (2,0) {$0$};
\node[left] at (0,2) {$0$};
\node[left] at (0,0) {$-\frac{1}{2}$};
\node[below] at (0,0) {$-\pi$};
\node[below] at (4,0) {$\pi$};
\node[left] at (0,4) {$\frac{1}{2}$};
\node at (2,5) {$t=0$};
\end{tikzpicture}~
\begin{tikzpicture}[scale=0.65]
\draw[->] (0,0) -- (4.5,0) node[anchor=south] {$\theta$};
\draw[->] (0,0) -- (0,4.5) node[anchor=south] {$\lambda_i(\theta)$};
\draw[thick] (1,4) -- (4,1);
\draw[thick] (0,1) -- (1,0);
\draw[thick] (0,1) -- (3,4);
\draw[thick] (3,0) -- (4,1);
\node[below] at (2,0) {$0$};
\node[left] at (0,2) {$0$};
\node[left] at (0,0) {$-\frac{1}{2}$};
\node[below] at (0,0) {$-\pi$};
\node[below] at (4,0) {$\pi$};
\node[left] at (0,4) {$\frac{1}{2}$};
\node at (2,5) {$t=\frac{1}{2}$};
\end{tikzpicture}~
\begin{tikzpicture}[scale=0.65]
\draw[->] (0,0) -- (4.5,0) node[anchor=south] {$\theta$};
\draw[->] (0,0) -- (0,4.5) node[anchor=south] {$\lambda_i(\theta)$};
\draw[thick] (0,2) -- (2,0);
\draw[thick] (2,4) -- (4,2);
\draw[thick] (0,2) --  (2,4);
\draw[thick] (2,0) --  (4,2);
\node[below] at (2,0) {$0$};
\node[left] at (0,2) {$0$};
\node[left] at (0,0) {$-\frac{1}{2}$};
\node[below] at (0,0) {$-\pi$};
\node[below] at (4,0) {$\pi$};
\node[left] at (0,4) {$\frac{1}{2}$};
\node at (2,5) {$t=1$};
\end{tikzpicture}
\caption{For the unitary maps $q_t:\tilde{S}^1\ni e^{i\theta}\equiv z\mapsto {\rm diag}(e^{\pi i t}z, e^{\pi i t}\overline{z})={\rm diag}(e^{i(\pi t+\theta)},e^{i(\pi t-\theta)})$, the arguments $\lambda_1,\lambda_2$ of their eigenvalue functions $e^{2\pi i\lambda_1(\theta)}, e^{2\pi i\lambda_2(\theta)}$ at $t=0,\frac{1}{2},1$ are plotted. For the `Real' gerbe at $t=0$ regarded as a restriction of the basic `Real' gerbe on ${\rm SU}(2)$, the open set $U_1=\{\theta:\lambda_1(\theta)>\lambda_2(\theta)\}$ is $(0,2\pi)$, while \mbox{$U_2=\{\theta:\lambda_2(\theta)>\lambda_1(\theta)-1\}$} is $(-\pi,\pi)$. For the `Real' gerbe at $t=1$, the open set labelling $U_i$, and thus the attached sign $\tau_i$, is exchanged. Notice that the homotopy moves the Dirac point (eigenvalue crossing) from $\theta=0$ to $\theta=\pi$. Furthermore, at intermediate $t=\frac{1}{2}$, there are \emph{two} Dirac points: one at $\theta=0$ and one at $\theta=\pi$.}
\label{fig:phase.function2}
\end{figure}
It is not hard to see that $q_1$ and $q_0$ define inequivalent pullback `Real' gerbes on $\tilde{S}^1$, differing by the non-trivial element of $H^3_\pm(\pt)$, see Fig.\ \ref{fig:phase.function2}.

\begin{rem}
While not detailed here, one may compute that in contrast to ${\rm SU}(2n)$, we have $H^1_{\ZZ_2}({\rm U}(2n))\neq 0$, so that the non-vanishing spectral flows provide an obstruction to the existence of `Real' structures on the basic gerbe over ${\rm U}(2n)$. In other words, there is no ``universal'' `Real' gerbe over \emph{all} of ${\rm U}(\infty)$ to be pulled back under a general $KR^{-3}$-classifying map.
\end{rem}

\medskip

In the Fredholm formulation, a $KR^{-3}(X)$ class is represented by an equivariant self-adjoint Fredholm family. As we saw in \S\ref{sec:Real.Fermi}, with $X$ compact, connected, and $H^1_{\ZZ_2}(X)=0$ satisfied, the `Real' Fermi gerbe $\G_F$ for any equivariant $F:X\rightarrow \mathcal{F}^{\rm sa}_*$ is well-defined, up to a choice of the $\tau_\mu$ corresponding to a $H^3_\pm(\pt)$ ambiguity. Then taking the `Real' DD-invariant of $\G_F$ modulo $H^3_\pm(\pt)$ gives a well-defined homomorphism
\begin{equation}
DD:KR^{-3}(X)\rightarrow H^3_\pm(X)/H^3_\pm(\pt).\label{eqn:KR.to.DD}
\end{equation}
Also, for any pair of path-connected subsets $Y_1,Y_2\subset X^\iota$, the product of their sign invariants makes sense,
\begin{equation}
\sigma_{Y_1,Y_2}:KR^{-3}(X)\rightarrow H^3_\pm(X)/H^3_\pm(\pt)\rightarrow\ZZ_2,\qquad F\mapsto \sigma(\G_F,Y_1)\sigma(\G_F,Y_2),\label{eqn:POS.from.KR}
\end{equation}
independently of the indeterminacy of $\tau_\mu$.

\section{Applications to topological insulators}
\subsection{Non-triviality of `Real' Fermi gerbe}\label{sec:non.triviality}
{\bf Chern insulator.} In complex $K$-theory, we have $\widetilde{K}^0(\TT^2)\cong H^2(\TT^2)\cong\ZZ$. The generator can be represented by the Hopf line bundle over $S^2$ pulled back to $\TT^2$ under a degree-1 map. In terms of $C^*$-algebras, there is a projection $P_{\rm Chern}\in M_2(C(\TT^2))\cong M_2(C^*_r(\ZZ^2))$ representing the non-trivial generator of $K_0(C(\TT^2))\cong\ZZ\oplus\ZZ$. Such a projection arises, basically by definition, as the Fermi projection of a ``Chern insulator'' Hamiltonian $H_{\rm Chern}$, see e.g.\ \cite{T-edge}.

Let $\delta: K_0(C(\TT^2))\rightarrow K_0(C(\TT))$ be the connecting ``exponential'' map for the Toeplitz sequence
\begin{equation*}
0\rightarrow C^*_r(\ZZ)\otimes\mathcal{K}\rightarrow C^*_r(\ZZ\times\NN)\rightarrow C^*_r(\ZZ^2)\rightarrow 0,
\end{equation*}
or equivalently,
\begin{equation*}
0\rightarrow C(\TT)\otimes\mathcal{K}\rightarrow C(\TT)\otimes C^*_r(\NN)\rightarrow C(\TT^2)\rightarrow 0.
\end{equation*}
As in \S\ref{sec:half.Hamiltonian.family}, $\delta[P_{\rm Chern}]$ may be represented by the self-adjoint Fredholm family $\widetilde{H}^{\rm 1D}_{\rm Chern}:\TT\rightarrow\mathcal{F}^{\rm sa}_*$ built from the half-space Hamiltonian $\widetilde{H}_{\rm Chern}$. We may compute (see, e.g., \cite{T-edge, Thiang-sf}) that $\delta[P_{\rm Chern}]=[\widetilde{H}^{\rm 1D}_{\rm Chern}]$ is a generator of $K_1(C(\TT))\cong\ZZ$, with the latter isomorphism given by the spectral flow of $\widetilde{H}^{\rm 1D}_{\rm Chern}$. Furthermore, $[\overline{P}_{\rm Chern}]=-[P_{\rm Chern}]$ in $\widetilde{K}^0(\TT)$ (the conjugate Bloch bundle has the opposite Chern class), so $\overline{P}_{\rm Chern}$ leads to an opposite spectral flow compared to $P_{\rm Chern}$.

\medskip
\noindent
{\bf Quantum spin Hall insulator.}
Now consider the direct sum $P_{\rm QSH}:=P_{\rm Chern}\oplus\overline{P}_{\rm Chern}$, which is invariant under ${\rm Ad}_\Theta$ where $\Theta$ is the standard quaternionic structure $\Theta=\begin{pmatrix} 0 & -\mathbbm{1}_2 \\ \mathbbm{1}_2 & 0\end{pmatrix}\circ\kappa$. The subscript `QSH' means that $P_{\rm QSH}$ is supposed to be the Fermi projection for a `Quantum Spin Hall' Hamiltonian $H_{\rm QSH}=H_{\rm Chern}\oplus \overline{H}_{\rm Chern}$, following ideas in \cite{KM}. 

Thus $P_{\rm QSH}$ defines some class in $KO_0((C^*_{r,\RR}(\ZZ^2)^\HH)\cong KR^{-4}(\TT^2)$. We also have $\delta_4[P_{\rm QSH}]=[\widetilde{H}^{\rm 1D}_{\rm QSH}]$, and for $\widetilde{H}^{\rm 1D}_{\rm QSH}$, there are equal and opposite spectral flows coming from each factor. Diagramatically, with an appropriate deformation of $\widetilde{H}^{\rm 1D}_{\rm Chern}$ and therefore of $\widetilde{H}^{\rm 1D}_{\rm QSH}$, the discrete spectrum of $\widetilde{H}^{\rm 1D}_{\rm QSH}(x)$ flows in a manner depicted in Fig.\ \ref{fig:phase.function2}.

For brevity, write $F_{\rm QSH}=\widetilde{H}^{\rm 1D}_{\rm QSH}$ for the equivariant self-adjoint Fredholm family. Its (equivariant) spectral flow vanishes, so we can construct the `Real' Fermi gerbe $\G_{F_{\rm QSH}}$. We can directly see that the sign invariants at the two fixed points of $\TT$ differ (see Fig.\ \ref{fig:phase.function3}), so $\G_{F_{\rm QSH}}$ has non-trivial `Real' DD-invariant,
\begin{equation}
DD(\G_{F_{\rm QSH}})=-1\in\ZZ_2\cong H^3_\pm(\TT)/H^3_\pm(\pt)\cong {\rm Map}(\TT^\iota,\ZZ_2)/\{\pm 1\}.\label{eqn:QSH.nontrivial.gerbe}
\end{equation}
This shows that the $K$-theory class of $F_{\rm QSH}$ and reduced $K$-theory class of $P_{\rm QSH}$ are non-trivial.

The FKMM invariant of $P_{\rm QSH}$ can also be directly shown to be nontrivial (e.g.\ Cor.\ 4.12 of \cite{D-G-AII}); that is, 
\begin{equation}
\kappa[P_{\rm QSH}]=-1\in \ZZ_2\cong H^2_\pm(\TT^2, (\TT^2)^\iota)\cong {\rm Map}((\TT^2)^\iota,\ZZ_2)/[\TT^2,{\rm U}(1)]_{\ZZ_2}.\label{eqn:QSH.nontrivial.FKMM}
\end{equation}

\begin{figure}[h!]
\begin{center}
\begin{tikzpicture}[scale=0.7]
\draw[->] (0,0) -- (4.5,0) node[anchor=south] {$\theta$};
\draw[->] (0,0) -- (0,4.7) node[anchor=south] {Spectrum};
\draw[dashed] (0,2.8) -- (4,2.8);
\draw[dashed] (0,3.2) -- (4,3.2);
\draw[dotted] (1.2,2.8) -- (1.2,-2);
\draw[dotted] (2.8,2.8) -- (2.8,-2);
\draw[dotted] (0.8,3.2) -- (0.8,-3);
\draw[dotted] (3.2,3.2) -- (3.2,-3);
\draw[<->] (1.2,-2) -- (2,-2) node[anchor=south] {$-$} -- (2.8,-2);
\draw[->] (0,-2) -- (0.6,-2) node[anchor=south] {$+\;\;\;$} -- (1.2,-2);
\draw[<-] (2.8,-2) -- (3.4,-2) node[anchor=south] {$\;\;\;+$} -- (4,-2);
\draw[<->] (0.8,-3) -- (2,-3) node[anchor=south] {$-$} -- (3.2,-3);
\draw[->] (0,-3) -- (0.4,-3) node[anchor=south] {$+$} -- (0.8,-3);
\draw[<-] (3.2,-3) -- (3.6,-3) node[anchor=south] {$+$} -- (4,-3);
\node at (-0.5,-2) {$U_{0.4}$};
\node at (-0.5,-3) {$U_{0.6}$};
\filldraw[gray] (0,4) -- (4,4)-- (4,4.5) -- (0,4.5);
\filldraw[gray] (0,0) -- (4,0)-- (4,-0.5) -- (0,-0.5);
\draw[thick] (0,4) -- (4,0);
\draw[thick] (0,0) --  (4,4);
\node[below] at (2,-0.5) {$0$};
\node[left] at (0,2) {$0$};
\node[left] at (0,0) {$-1$};
\node[below] at (0,-0.5) {$-\pi$};
\node[below] at (4,-0.5) {$\pi$};
\node[left] at (0,4) {$1$};
\end{tikzpicture}~\hspace{2em}
\begin{tikzpicture}[scale=0.7]
\draw[fill=gray!50] (0,0) -- (0,-0.3) -- (4,-0.3) -- (6,3.16) -- (6,3.36) -- (2,3.36) -- (0,0) -- (4,0) -- (6,3.36);
\draw (4,-0.3) -- (4,0);
\draw[gray,fill=gray!35,shift={(3cm,1.73cm)},rotate=30] (0,0) ellipse (30pt and 25pt);
\node at (3,5.732) {$\bullet$};
\draw[shift={(3cm,5.73cm)},rotate=30] (0,0) ellipse (30pt and 25pt);
\draw (2.1,5.232) -- (3.9,2.232);
\draw (4,5.732) -- (2,1.732);
\draw (3,4.77) -- (3,0.77);
\draw[fill=gray!50, opacity=0.95] (0,4.3) -- (0,4) -- (4,4) -- (6,7.46) -- (6,7.66) -- (2,7.66) -- (0,4.3) -- (4,4.3) -- (6,7.66);
\draw (4,4) -- (4,4.3);
\node at (3,5.732) {$\bullet$};
\node at (6,7.36) {$\bullet$};
\node[above] at (3,5.7) {$(0,0)$};
\node[above] at (6.5,6.6) {$(\pi,\pi)$};
\draw[dotted] (0,4) -- (2,7.36) -- (6,7.36);
\draw[->] (5,0) -- (6,0);
\draw[->] (5,0) -- (5.5,0.86);
\node at (6.3,0) {$\theta_1$};
\node at (5.7,1.1) {$\theta_2$};
\end{tikzpicture}
\end{center}
\caption{[Left] For the equivariant self-adjoint Fredholm family $\widetilde{H}^{\rm 1D}_{\rm QSH}$, a possible dependence of the spectrum on $e^{i\theta}\in\TT$ is plotted. The essential spectrum is shown as gray bands, with common essential spectral gap $(-1,1)$. The net flow of discrete spectrum across this gap vanishes, but the `Real' Fermi gerbe is nontrivial. Two of the open sets in the open cover $\{U_\mu\}_{\mu\in(-1,1)}$ are shown, for $\mu=0.4$ and $\mu=0.6$, and together they suffice to refine $\{U_\mu\}_{\mu\in(-1,1)}$. A possible choice of $\tau_{0.4}$ and $\tau_{0.6}$ is indicated by the signs attached to each connected component of $U_{0.4}$ and $U_{0.6}$. At the fixed point $+1$ (i.e.\ $\theta=0$), the sign invariant of the Fermi gerbe is the value of $\tau_{0.4}$ (or $\tau_{0.6}$) there, which is $-1$. At the fixed point $-1$ (i.e.\ $\theta=\pi$), the sign invariant is $+1$. These sign invariants are ambiguous up to an overall sign. The (unambiguous) product of these sign invariants is $-1$, so the Fermi gerbe is non-trivial in $H^3_\pm(\TT)/H^3_\pm(\pt)$. [Right] A rotated version of the first diagram, depicting schematically the spectrum of an equivariant family $\widetilde{H}^{\rm 1D}_{\rm strong}:\TT^2\rightarrow\mathcal{F}^{\rm sa}_*$ associated to a strong 3D topological insulator. The sign invariants of the Fermi gerbe are similarly seen to be $-1$ at the fixed point $(\theta_1,\theta_2)=(0,0)$ and $+1$ at the remaining three fixed points $(0,\pi), (\pi,0), (\pi,\pi)$ of $\TT^2$.}
\label{fig:phase.function3}
\end{figure}
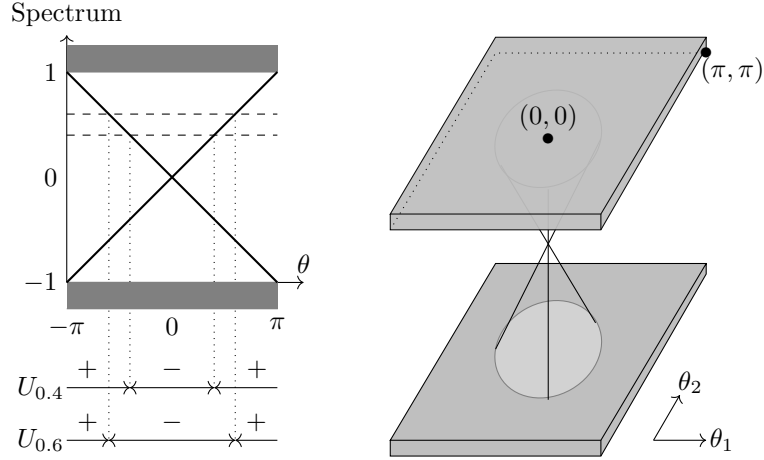

\medskip

Let $\pi=\pi_2:\TT^2\rightarrow\TT^1$ be the map projecting out the second coordinate, which also expresses $\TT^2$ as a `Real' circle bundle over $\TT$. Restricted to the fixed points, $\pi$ gives $(\TT^2)^\iota$ as a double cover of $\TT^\iota$. Then sign maps on $(\TT^2)^\iota$ can be ``pushed forward'' to sign maps on $\TT^\iota$ by multiplying along (the two-point) fibers (this is a genuine pushforward if we think of sign maps as elements of $H^0(X^\iota;\ZZ_2)$). 

Furthermore, by invoking Lemma \ref{lem:tori.sign.maps}, we see that any equivariant map in 
$[\TT^2,{\rm U}(1)]_{\ZZ_2}$ restricts on $(\TT^2)^\iota$ to a sign map whose push-forward is a constant sign map $\pm 1$ on $\TT^\iota$. This means that the push-forward descends to a well-defined map
\begin{equation*}
\pi_*:\underbrace{{\rm Map}((\TT^2)^\iota,\ZZ_2)/[\TT^2,{\rm U}(1)]_{\ZZ_2}}_{H^2_\pm(\TT^2, (\TT^2)^\iota)}\rightarrow \underbrace{{\rm Map}(\TT^\iota,\ZZ_2)/\{\pm 1\}}_{H^3_\pm(\TT)/H^3_\pm(\pt)}.
\end{equation*}

\begin{prop}\label{prop:commute.2torus}
The following diagram commutes:
\begin{equation*}
\begin{CD}
KR^{-4}(\TT^2) @>{\kappa}>> 
H^2_\pm(\TT^2, (\TT^2)^\iota) \\
@V{\pi_!=\delta_4}VV @VV{\pi_*}V \\
KR^{-3}(\TT) @>{DD}>> H^3_\pm(\TT)/H^3_\pm(\pt).
\end{CD}
\end{equation*}
\end{prop}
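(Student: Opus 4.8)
The plan is to trace a `Quaternionic' Bloch bundle $E$ over $\TT^2$ (equivalently, a Fermi projection $P$) around the square, using concrete representatives at each corner and checking equality of the resulting sign maps on $\TT^\iota$. Start in the top-left corner with $[P_{\rm Fermi}]=[E]\in KR^{-4}(\TT^2)$ represented by a rank-$2$ `Quaternionic' bundle (stable rank is $1$ by Theorem 4.2 of \cite{D-G-cohom}), with a global complex orthonormal frame $\{t_1,t_2\}$ and sewing matrix $w:\TT^2\rightarrow{\rm U}(2)$, $w(\iota(x))=-w^{\rm t}(x)$. Going right, $\kappa[E]$ is the class of the sign map $\mathfrak{d}_w(y)=q_w(y)/{\rm Pf}(w(y))$ on $(\TT^2)^\iota$, where $q_w$ is a chosen square root of ${\rm det}\,w$ (Prop.\ 4.4 of \cite{D-G-AII}). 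Applying $\pi_*$ then multiplies $\mathfrak{d}_w$ along the two-point fibers of $(\TT^2)^\iota\to\TT^\iota$, giving a sign map on $\TT^\iota$ well-defined modulo constants.

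For the other route, by Corollary \ref{cor:BBC} the map $\pi_!=\delta_4$ sends $[P_{\rm Fermi}]$ to the $KR^{-3}(\TT)$-class of the equivariant family $\widetilde{H}^{\rm 1D}:\TT\to\mathcal{F}^{\rm sa}_*$, and $DD$ then extracts (modulo $\eta\in H^3_\pm(\pt)$) the `Real' DD-invariant of the `Real' Fermi gerbe $\G_{\widetilde{H}^{\rm 1D}}$, which by Corollary \ref{cor:classification_via_sign} and Definition \ref{dfn:sign.invariant} is encoded in the sign invariants $\sigma(\G_{\widetilde{H}^{\rm 1D}},\{y\})$ at the two fixed points $y=\pm1\in\TT$. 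So the proposition reduces to a pointwise identity: at each $y\in\TT^\iota$, the product of $\mathfrak{d}_w$ over the fiber $\pi^{-1}(y)\cap(\TT^2)^\iota$ equals the Fermi-gerbe sign $\tau_\mu(y)$. The key computational step is to relate the Pfaffian data of the sewing matrix over the fiber to the parity of the ranks $r_{\mu\nu}$ of the eigenbundles of $\widetilde{H}^{\rm 1D}$: over $y$, the two fixed points upstairs correspond to the two momenta $z_2=\pm1$, and the sewing-matrix/Pfaffian structure there is exactly the Kramers-degeneracy data of the $\Theta$-invariant operator $\widetilde{H}^{\rm 1D}(y)$. One realizes $\widetilde{H}^{\rm 1D}(y)$ (after spectral flattening) as the truncation of $H^{\rm 1D}(y)$, whose negative eigenspace is the fiber of $P_{\rm Fermi}$ over the circle $\{y\}\times\TT$, and one tracks how the `Quaternionic' determinant line bundle ${\rm det}\,E$ restricted to this circle relates, via the exponential/connecting map, to the $\tau_\mu$ datum of $\G_{\widetilde{H}^{\rm 1D}}$.

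Concretely, I would (i) fix a sign map representing $\kappa[E]$ and compute $\pi_*$ of it at $y=+1$ and $y=-1$ in terms of $q_w$ and $\mathrm{Pf}$; (ii) choose, for the family $\widetilde{H}^{\rm 1D}$, an explicit refinement of the cover $\{U_\mu\}$ to two sets (as in Fig.\ \ref{fig:phase.function3}) and an explicit choice of $\tau_\mu$, computing the resulting sign invariant at each fixed point; (iii) identify the gluing data of the eigenbundles $E_{\mu\nu}$ over a neighborhood of a fixed point with the local trivialization data of $\det E$ near the corresponding pair of points of $(\TT^2)^\iota$, using that the exponential map $\delta_4$ is $[{\rm exp}(-2\pi i\widetilde{P}_{\rm Fermi})]$ and a lift $\widetilde{P}_{\rm Fermi}$ may be built by functional calculus from $\widetilde{H}^{\rm 1D}$; (iv) conclude that both composites produce the same class in ${\rm Map}(\TT^\iota,\ZZ_2)/\{\pm1\}$. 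The main obstacle will be step (iii): carefully matching the Pfaffian/square-root bookkeeping of the sewing matrix (a bulk, $0$D-in-$z_2$ object over the circle $\{y\}\times\TT$) with the rank-parity bookkeeping of the Fermi-gerbe cocycle $(\rho_{\mu\nu},\tau_\mu)$ (a boundary, spectral object), keeping track of the overall-sign ambiguity so that equality holds only modulo $H^3_\pm(\pt)$, which is precisely why the target of $DD$ is quotiented. A cleaner alternative, which I would pursue if the direct cocycle matching becomes unwieldy, is to verify commutativity on generators: $KR^{-4}(\TT^2)\cong\ZZ_2\oplus\ZZ$ is generated by $[P_{\rm QSH}]$ together with a trivial-FKMM rank generator, and both sides have already been computed on $P_{\rm QSH}$ (Eqs.\ \eqref{eqn:QSH.nontrivial.gerbe} and \eqref{eqn:QSH.nontrivial.FKMM}) and manifestly agree; one then only needs $\pi_*\kappa$ and $DD\,\delta_4$ to vanish on the rank generator (whose Bloch bundle has trivial sewing matrix and whose $\widetilde{H}^{\rm 1D}$ has no spectral flow and a globally defined gap), which is immediate.
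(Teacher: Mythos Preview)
Your ``cleaner alternative'' at the end is precisely the paper's proof: since all maps in the square are group homomorphisms and $KR^{-4}(\TT^2)\cong\ZZ_2\oplus\ZZ$, it suffices to check on the generator $[P_{\rm QSH}]$ (the rank generator being killed by both $\kappa$ and $\pi_!$), and both composites are already known to land on the non-trivial element by Eqs.~\eqref{eqn:QSH.nontrivial.gerbe} and \eqref{eqn:QSH.nontrivial.FKMM}. The only small step you gloss over is that $\pi_*$ preserves non-triviality; the paper handles this by observing that $\pi_*$ preserves the overall product of signs (a sign map on $(\TT^2)^\iota$ with product $-1$ pushes forward to one on $\TT^\iota$ with product $-1$, hence non-constant).

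Your main proposal --- a direct cocycle-level matching of the sewing-matrix/Pfaffian data of $E$ against the rank-parity data $(\rho_{\mu\nu},\tau_\mu)$ of the Fermi gerbe --- is a genuinely different and more conceptual route. If carried out it would explain \emph{why} the diagram commutes rather than merely verifying it on a $\ZZ_2$, and would plausibly extend to more general base spaces where a generator check is unavailable. But as you yourself anticipate, step~(iii) is the real work: the sewing matrix is a bulk object over $\{y\}\times\TT$ while the Fermi-gerbe cocycle is spectral data of the truncated family over $y$, and the dictionary between them runs through the Toeplitz extension and functional calculus in a way that is not made explicit anywhere in the paper. The paper sidesteps this entirely; given that both target groups are $\ZZ_2$, the generator check is by far the more economical argument, and you should lead with it rather than present it as a fallback.
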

\begin{proof}
It suffices to check on the non-trivial generator $[P_{\rm QSH}]$ of $KR^{-4}(\TT^2)$ (since $\kappa$ and $\pi_!$ vanish on the trivial generator $[1]$). From Eq.\ \eqref{eqn:QSH.nontrivial.gerbe}, \mbox{$DD\circ\pi_![P_{\rm QSH}]$} is the non-trivial element of $H^3_\pm(\TT)/H^3_\pm(\pt)$, so it is represented by a non-constant sign map on $\TT^\iota$. By Eq.\ \eqref{eqn:QSH.nontrivial.FKMM}, $\kappa[P_{\rm QSH}]$ is non-trivial as a class in ${\rm Map}((\TT^2)^\iota,\ZZ_2)/[\TT^2,{\rm U}(1)]_{\ZZ_2}$, thus it is represented by a sign map on $(\TT^2)^\iota$ with product of signs $-1$ (Eq.\ \eqref{eqn:POS}). Then $\pi_*\circ\kappa[P_{\rm QSH}]$ is also represented by a sign map on $\TT^\iota$ with product of signs $-1$ (Eq.\ \eqref{eqn:POS.from.KR}), i.e.\ nonconstant.
\end{proof}

\medskip

Now let $\pi=\pi_3:\TT^3\rightarrow\TT^2$ project out the third coordinate. By similar arguments, the push-forward is well defined on equivalence classes of sign maps on the fixed points,
\begin{equation*}
\pi_*:\underbrace{{\rm Map}((\TT^3)^\iota,\ZZ_2)/[\TT^3,{\rm U}(1)]_{\ZZ_2}}_{H^2_\pm(\TT^3, (\TT^3)^\iota)}\rightarrow \underbrace{{\rm Map}((\TT^2)^\iota,\ZZ_2)/\{\pm 1\}}_{H^3_\pm(\TT^2)/H^3_\pm(\pt)}.
\end{equation*}

\begin{prop}\label{prop:commute.3torus}
The following diagram commutes:
\begin{equation*}
\begin{CD}
KR^{-4}(\TT^3) @>{\kappa}>> 
H^2_\pm(\TT^3, (\TT^3)^\iota) \\
@V{\pi_!=\delta_4}VV @VV{\pi_*}V \\
KR^{-3}(\TT^2) @>{DD}>> H^3_\pm(\TT^2)/H^3_\pm(\pt).
\end{CD}
\end{equation*}
\end{prop}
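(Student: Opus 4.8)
The plan is to imitate the proof of Proposition \ref{prop:commute.2torus}, but now the target group $H^3_\pm(\TT^2)/H^3_\pm(\pt)\cong\widetilde{H}^3_\pm(\TT^2)\cong(\ZZ_2)^3$ is no longer a single $\ZZ_2$, so checking a single non-trivial generator will not suffice: I must verify the identity $DD\circ\pi_! = \pi_*\circ\kappa$ on a \emph{generating set} of $\widetilde{KR}^{-4}(\TT^3)\cong(\ZZ_2)^4$. By Remark \ref{rem:weak.strong}, a convenient set of generators of $\widetilde{KR}^{-4}(\TT^3)$ is given by the three ``weak'' classes $\mathfrak d_1,\mathfrak d_2,\mathfrak d_3$ (pulled back from the non-trivial generator of $\widetilde{KR}^{-4}(\TT^2)$ along the coordinate projections $\pi_i$) together with a ``strong'' class $\mathfrak d_{\rm strong}$. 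The strategy is: track how the FKMM sign map $\kappa$ of each generator sits on $(\TT^3)^\iota$, push it forward along the fibers of $\pi_3$, and separately compute $DD\circ\pi_!$ of the same generator via the sign invariants of the associated `Real' Fermi gerbe.

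First I would identify, for each generator $[E]$, the sign map $\mathfrak d_w$ on $(\TT^3)^\iota=\{\pm1\}^3$ representing $\kappa[E]\in {\rm Map}((\TT^3)^\iota,\ZZ_2)/[\TT^3,{\rm U}(1)]_{\ZZ_2}$, using the product-of-signs description of \S\ref{sec:FKMM.sign} and Remark \ref{rem:weak.strong}. Concretely: $\mathfrak d_3$ is pulled back from $\TT^2$ along $\pi_3$, so it is constant along the fibers $\{(a,b,+1),(a,b,-1)\}$ of $\pi_3|_{(\TT^3)^\iota}$; $\mathfrak d_1,\mathfrak d_2$ are pulled back from the other two $\TT^2$'s; and $\mathfrak d_{\rm strong}$ is $-1$ at $(+1,+1,+1)$ and $+1$ elsewhere. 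Applying $\pi_*$ (multiply along the two-point fibers over $(\TT^2)^\iota$, then take the class mod $\{\pm1\}$): one finds $\pi_*\mathfrak d_3 = +1$ constant (so trivial in $H^3_\pm(\TT^2)/H^3_\pm(\pt)$), while $\pi_*\mathfrak d_1$ and $\pi_*\mathfrak d_2$ are non-constant (they equal the ``weak'' $\mathfrak d_1,\mathfrak d_2$ generators of $\widetilde H^3_\pm(\TT^2)$ under the identification $H^2_\pm(\TT^2,(\TT^2)^\iota)\hookrightarrow H^3_\pm(\TT^2)/H^3_\pm(\pt)$ of \S\ref{sec:sign.invariants}), and $\pi_*\mathfrak d_{\rm strong}$ equals the sign map on $(\TT^2)^\iota$ which is $-1$ at $(+1,+1)$ and $+1$ at the other three fixed points — i.e.\ the ``strong'' generator of $\widetilde H^3_\pm(\TT^2)$ in the sense of Corollary \ref{cor:classification_via_sign}.

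Next I would compute the left-bottom composite $DD\circ\pi_!=DD\circ\delta_4$ on the same generators, using Corollary \ref{cor:BBC}: $\delta_4[E]$ is the class of the equivariant self-adjoint Fredholm family $\widetilde H^{\rm 1D}:\TT^2\to\mathcal F^{\rm sa}_*$, and $DD$ of its `Real' Fermi gerbe is computed from the sign invariants at the four fixed points of $\TT^2$ (cf.\ the strong-insulator picture in Fig.\ \ref{fig:phase.function3}). For the weak generators $\mathfrak d_1,\mathfrak d_2$ this reduces, via the PV/Gysin functoriality of $\delta_4$ under the sublattice inclusions $j_i$ (i.e.\ the maps $\pi_i$), to the already-established 2-torus case Proposition \ref{prop:commute.2torus} — essentially because $\delta_4$ commutes with pullback along a spectator circle factor, and the $d=2$ result then gives a non-constant sign map on the relevant $\TT^\iota\subset(\TT^2)^\iota$. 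For $\mathfrak d_3$ (pulled back along the direction being projected out, namely the third), the family $\widetilde H^{\rm 1D}$ is equivariantly homotopic to one with a global spectral gap (the half-space Hamiltonian for a ``trivial-in-the-boundary-direction'' bulk), so $DD\circ\delta_4\,\mathfrak d_3=0$, matching $\pi_*\mathfrak d_3=+1$. For $\mathfrak d_{\rm strong}$ one argues as in the right panel of Fig.\ \ref{fig:phase.function3}: the associated equivariant family over $\TT^2$ has a single Dirac cone at $(+1,+1)$, giving sign invariant $-1$ there and $+1$ at the other three fixed points, so $DD\circ\delta_4\,\mathfrak d_{\rm strong}$ is exactly $\pi_*\circ\kappa\,\mathfrak d_{\rm strong}$. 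Since $DD\circ\pi_!$ and $\pi_*\circ\kappa$ agree on a generating set of the $\ZZ_2$-vector space $\widetilde{KR}^{-4}(\TT^3)$ (and both vanish on the trivial rank generator $[1]$ since $\kappa$ and $\pi_!$ do), they agree, proving commutativity.

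I expect the main obstacle to be the strong-generator case: unlike the weak generators, it is not pulled back from a $\TT^2$ and so is not handled by Proposition \ref{prop:commute.2torus}. One must exhibit an explicit (or at least a controlled homotopy-theoretic) model of an equivariant Fredholm family over $\TT^2$ representing $\delta_4\,\mathfrak d_{\rm strong}$ whose Fermi gerbe sign invariants can actually be read off — the ``single Dirac cone at one fixed point'' model of Fig.\ \ref{fig:phase.function3} — and check that $DD$ of that gerbe is the claimed non-constant sign map. A clean way to organize this is to note that both $DD\circ\pi_!$ and $\pi_*\circ\kappa$ are homomorphisms, that Proposition \ref{prop:commute.2torus} plus naturality pins down their values on the three weak generators, and that the remaining ambiguity is a single $\ZZ_2$ (the strong summand); so the computation really only needs to determine \emph{one} sign — whether $DD\circ\delta_4\,\mathfrak d_{\rm strong}$ is the trivial or the strong class of $\widetilde H^3_\pm(\TT^2)$ — which the Fig.\ \ref{fig:phase.function3} model resolves as the strong class.
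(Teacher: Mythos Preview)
Your overall plan is correct and matches the paper's for the weak generators: both you and the paper dispose of $\mathfrak{d}_1,\mathfrak{d}_2,\mathfrak{d}_3$ by naturality together with Proposition~\ref{prop:commute.2torus} (with $\mathfrak{d}_3$ landing in $\ker\pi_!$ since it is $\pi_3^*$ of a class on $\TT^2$).

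For the strong generator you take a genuinely different route. You propose to build (or invoke) an explicit equivariant Fredholm model over $\TT^2$ representing $\delta_4\mathfrak{d}_{\rm strong}$ --- the ``single Dirac cone'' picture --- and read off its `Real' Fermi gerbe signs directly. The paper instead avoids any 3D model: it restricts $\nu_0$ along the section maps $s_{i,\pm}:\TT^2\hookrightarrow\TT^3$ (the sub-2-tori at constant $i$-th coordinate $\pm1$, for $i=1,2$), observes that $s_{i,+}^*\nu_0$ has non-trivial FKMM product-of-signs while $s_{i,-}^*\nu_0$ has trivial one, and then notes that each such section itself fibers under $\pi_3$ over a circle $\tilde S^1\subset\TT^2$. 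Naturality plus Proposition~\ref{prop:commute.2torus} (applied to these 2-tori over circles) then pins down the sign invariants of $DD\circ\pi_!(\nu_0)$ on each of the four circles through the fixed points of $\TT^2$, forcing the representative $\tilde{\mathfrak d}_{\rm strong}$ with $-1$ at $(+1,+1)$ and $+1$ elsewhere. This bootstraps the entire proposition from the 2-torus case, with no explicit 3D Hamiltonian needed. Your approach is more hands-on and would also work, but --- as you yourself flag --- it carries the burden of rigorously exhibiting and analyzing a representative family; the paper's section-map trick sidesteps exactly that obstacle.
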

\begin{proof}
For the ``weak'' generators $\nu_1, \nu_2, \nu_3$ of $KR^{-4}(\TT^3)$, which are by definition pulled back from $\TT^2$, this follows from Prop.\ \ref{prop:commute.2torus} by naturality. For the ``strong'' generator $\nu_0$ of $KR^{-4}(\TT^3)$, we consider the section maps $s_{i,\pm}:\TT^2\rightarrow \TT^3, i=1,2$, which picks out the 2-torus inside $\TT^3$ at constant $i$-th coordinate $\pm 1$. The FKMM invariant of the strong generator $\nu_0$ is represented by the sign map $\mathfrak{d}_{\rm strong}$ on $(\TT^3)^\iota$ which is $-1$ at $(+1,+1,+1)$ and $+1$ elsewhere. This restricts on the 2-torus sections $s_{i,+}$ and $s_{i,-}$ to sign maps having different product-of-signs (this holds for $i=1$ and $i=2$). Each 2-torus section $s_{i,\pm}$ itself fibers (under $\pi=\pi_3$) over the circle in $\TT^2$ with $i$-th coordinate $\pm 1$. Invoking naturality again, we see that $DD\circ\pi_! (\nu_0)$ is represented by a sign map on $(\TT^2)^\iota$ whose restriction to the circle with $i$-th coordinate $+1$ (resp.\ $-1$) is non-trivial (resp.\ trivial) modulo the overall global sign; this holds for $i=1$ and $i=2$. Thus $DD\circ\pi_! (\nu_0)$ is representable by the sign map $\tilde{\mathfrak{d}}_{\rm strong}$ which is $-1$ at $(+1,+1)$ and $+1$ elsewhere in $(\TT^2)^\iota$. Thus we have verified that $\pi_*[\mathfrak{d}_{\rm strong}]=[\tilde{\mathfrak{d}}_{\rm strong}]$, which is the commutativity result for the strong generator.
\end{proof}

\subsection{How to count Dirac cone edge states?}\label{sec:counting.Dirac}
Suppose $H_{\rm TI}$ is the Hamiltonian for a 2D $\Theta$-invariant topological insulator. Thus its Fermi projection represents the generator of $\widetilde{KR}^{-4}(\TT^2)\cong\ZZ_2$. 

Figure \ref{fig:phase.function2} shows some possible spectral dispersions for the half-space Hamiltonian $\widetilde{H}_{\rm TI}$ (think of $2\lambda_i$ as the discrete spectrum inside the bulk gap $(-1,1)$). 

To define an ``edge index'', we could follow \cite{KM,GP} and take the mod-2 number of points $k\in (0,\pi)\subset\TT$ at which a (simple) eigenvalue of $\widetilde{H}^{\rm 1D}(k)$ crosses $0$. Physically, this is supposed to count the number of Kramers pairs (eigenstates related by $\Theta$) at the Fermi energy. Where applicable, this index coincides with our `Real' Fermi gerbe invariant, as is easily verified in the examples in Fig.\ \ref{fig:phase.function}. This method of counting, however, depends on the parameter space $\TT^{d-1}$ being one-dimensional. 

\medskip

For topological insulators in dimension $d=3$, for which the boundary Brillouin zone $\TT^{d-1}$ is two-dimensional, the physicists' rule-of-thumb is that the strong Fu--Kane--Mele $\ZZ_2$ invariant for $H$ is non-trivial iff $\widetilde{H}$ acquires an odd number of Dirac points (eigenvalue crossings) in the bulk gap, located at the fixed points $(\TT^2)^\iota$ (see e.g.\ \cite{FKM}). Alternatively, one says that a strong topological insulator should have an odd number of ``Dirac cone edge states'' (Fig.\ \ref{fig:phase.function3}).

To see that there might be a problem with such an informal statement, consider the analogous statement for $d=2$. We see from Fig.\ \ref{fig:phase.function2} that for a 2D topological insulator, neither the positions nor the total number (mod-2) of Dirac points is a homotopy invariant. Furthermore, as illustrated in Fig. \ref{fig:phase.function}, a Dirac point may merge into the bulk spectrum when $\Theta$-invariant perturbations are continuously turned on. 

By considering ``surfaces of rotation'', the same issues with Dirac point counting arises for $d=3$. An illustration of how a strong topological insulator might have no Dirac points is illustrated in Fig.\ \ref{fig:phase.function}, where the Dirac point has been manipulated in such a way that it disappears into the bulk spectrum. Conversely, it is possible to have a single Dirac cone but no gap-filling. We mention that Dirac cone manipulations are of experimental importance, see e.g.\ \cite{LYTH} on uncovering ``hiddden'' Dirac cones, although the goal is usually to move the Dirac point closer to the Fermi energy.

The upshot of the above discussion is that a more precise signature of topological insulators is the topologically protected gap-filling property, whether by a ``classic'' Dirac cone, or otherwise. Our `Real' Fermi gerbe invariant captures such a signature precisely and invariantly.

\begin{figure}[h!]
\begin{center}
\begin{tikzpicture}[scale=0.6, every node/.style={scale=0.8}]
\draw[->] (0,0) -- (4.5,0) node[anchor=south] {$\theta$};
\draw[->] (0,0) -- (0,4.7) node[anchor=south] {Spectrum};
\filldraw[gray=30] (0,4) -- (4,4)-- (4,4.5) -- (0,4.5);
\filldraw[gray=30] (0,0) -- (4,0)-- (4,-0.5) -- (0,-0.5);
\draw[thick] (0,4) -- (4,0);
\draw[thick] (0,0) --  (4,4);
\node[below] at (2,-0.5) {$0$};
\node[left] at (0,2) {$0$};
\node[left] at (0,0) {$-1$};
\node[below] at (0,-0.5) {$-\pi$};
\node[below] at (4,-0.5) {$\pi$};
\node[left] at (0,4) {$1$};
\end{tikzpicture}~
\begin{tikzpicture}[scale=0.6, every node/.style={scale=0.8}]
\draw[->] (0,0) -- (4.5,0) node[anchor=south] {$\theta$};
\filldraw[gray=30] (0,4) -- (4,4)-- (4,4.5) -- (0,4.5);
\filldraw[gray=30] (0,0) -- (4,0)-- (4,-0.5) -- (0,-0.5);
\draw[->] (0,0) -- (0,4.7);
\draw[thick] (0,4) -- (2,1) -- (4,0);
\draw[thick] (0,0) -- (2,1) -- (4,4);
\node[below] at (2,-0.5) {$0$};
\node[left] at (0,2) {$0$};
\node[left] at (0,0) {$-1$};
\node[below] at (0,-0.5) {$-\pi$};
\node[below] at (4,-0.5) {$\pi$};
\node[left] at (0,4) {$1$};
\end{tikzpicture}~
\begin{tikzpicture}[scale=0.6, every node/.style={scale=0.8}]
\draw[->] (0,0) -- (4.5,0) node[anchor=south] {$\theta$};
\filldraw[gray=30] (0,4) -- (4,4)-- (4,4.5) -- (0,4.5);
\filldraw[gray=30] (0,0) -- (4,0)-- (4,-0.5) -- (0,-0.5);
\draw[->] (0,0) -- (0,4.7);
\draw[thick] (0,4) -- (2,0) -- (4,0);
\draw[thick] (0,0) -- (2,0) -- (4,4);
\node[below] at (2,-0.5) {$0$};
\node[left] at (0,2) {$0$};
\node[left] at (0,0) {$-1$};
\node[below] at (0,-0.5) {$-\pi$};
\node[below] at (4,-0.5) {$\pi$};
\node[left] at (0,4) {$1$};
\end{tikzpicture}
\end{center}
\begin{center}
\begin{tikzpicture}[scale=0.6, every node/.style={scale=0.8}]
\draw[->] (0,0) -- (4.5,0) node[anchor=south] {$\theta$};
\filldraw[gray=30] (0,4) -- (4,4)-- (4,4.5) -- (0,4.5);
\filldraw[gray=30] (0,0) -- (4,0)-- (4,-0.5) -- (0,-0.5);
\draw[->] (0,0) -- (0,4.7);
\draw[thick] (0.2,4) -- (1.8,0) -- (4,0);
\draw[thick] (0,0) -- (2.2,0) -- (3.8,4);
\node[below] at (2,-0.5) {$0$};
\node[left] at (0,2) {$0$};
\node[left] at (0,0) {$-1$};
\node[below] at (0,-0.5) {$-\pi$};
\node[below] at (4,-0.5) {$\pi$};
\node[left] at (0,4) {$1$};
\draw[dashed] (0,2) -- (4,2);
\draw[dashed] (0,3) -- (4,3);
\draw[dotted] (0.6,3) -- (0.6,-3);
\draw[dotted] (1,2) -- (1,-2);
\draw[dotted] (3.4,3) -- (3.4,-3);
\draw[dotted] (3,2) -- (3,-2);
\draw[<->] (1,-2) -- (2,-2) node[anchor=south] {$-$} -- (3,-2);
\draw[->] (0,-2) -- (0.5,-2) node[anchor=south] {$+\;\;\;$} -- (1,-2);
\draw[<-] (3,-2) -- (3.5,-2) node[anchor=south] {$\;\;\;+$} -- (4,-2);
\draw[<->] (0.6,-3) -- (2,-3) node[anchor=south] {$-$} -- (3.4,-3);
\draw[->] (0,-3) -- (0.3,-3) node[anchor=south] {$+$} -- (0.6,-3);
\draw[<-] (3.4,-3) -- (3.7,-3) node[anchor=south] {$+$} -- (4,-3);
\node at (-0.5,-2) {$U_{0}$};
\node at (-0.5,-3) {$U_{0.5}$};
\end{tikzpicture}
\begin{tikzpicture}[scale=0.6, every node/.style={scale=0.8}]
\draw[fill=gray!50] (0,0) -- (0,-0.3) -- (4,-0.3) -- (6,3.16) -- (6,3.36) -- (2,3.36) -- (0,0) -- (4,0) -- (6,3.36);
\draw (4,-0.3) -- (4,0);
\draw[gray,fill=gray!35,shift={(3cm,1.73cm)},rotate=30] (0,0) ellipse (20pt and 15pt);
\node at (3,5.732) {$\bullet$};
\draw[shift={(3cm,5.73cm)},rotate=30] (0,0) ellipse (35pt and 20pt);
\draw (4,5.7) -- (3.65,1.8);
\draw (1.95,5.732) -- (2.35,1.6);
\draw (3,5) -- (3,1.15);
\draw[fill=gray!50, opacity=0.95] (0,4.3) -- (0,4) -- (4,4) -- (6,7.46) -- (6,7.66) -- (2,7.66) -- (0,4.3) -- (4,4.3) -- (6,7.66);
\draw (4,4) -- (4,4.3);
\node at (3,5.732) {$\bullet$};
\node at (6,7.36) {$\bullet$};
\node[above] at (3,5.7) {$(0,0)$};
\node[above] at (6,6.3) {$(\pi,\pi)$};
\draw[dotted] (0,4) -- (2,7.36) -- (6,7.36);
\draw[->] (5,0) -- (6,0);
\draw[->] (5,0) -- (5.5,0.86);
\node[below] at (6,0) {$\theta_1$};
\node at (5.7,1.1) {$\theta_2$};
\end{tikzpicture}
\begin{tikzpicture}[scale=0.65, every node/.style={scale=0.8}]
\draw[thick] (0.3,4) .. controls (1.6,0) .. (2.9,4);
\draw[thick] (1.1,4) .. controls (2.4,0) .. (3.7,4);
\draw[->] (0,0) -- (4.5,0) node[anchor=south] {$\theta$};
\filldraw[gray=30] (0,4) -- (4,4)-- (4,4.5) -- (0,4.5);
\filldraw[gray=30] (0,0) -- (4,0)-- (4,-0.5) -- (0,-0.5);
\draw[->] (0,0) -- (0,4.7);
\node[below] at (2,-0.5) {$0$};
\node[left] at (0,2) {$0$};
\node[left] at (0,0.6) {$-0.7$};
\node[left] at (0,0) {$-1$};
\node[below] at (0,-0.5) {$-\pi$};
\node[below] at (4,-0.5) {$\pi$};
\node[left] at (0,4) {$1$};
\draw[dashed] (0,2) -- (4,2);
\draw[dashed] (0,3) -- (4,3);
\draw[dashed] (0,0.6) -- (4,0.6);
\draw[dotted] (0.6,3) -- (0.6,-3);
\draw[dotted] (1,2) -- (1,-2);
\draw[dotted] (3.4,3) -- (3.4,-3);
\draw[dotted] (3,2) -- (3,-2);
\draw[dotted] (1.8,2) -- (1.8,-2);
\draw[dotted] (2.2,2) -- (2.2,-2);
\draw[dotted] (1.4,3) -- (1.4,-3);
\draw[dotted] (2.6,3) -- (2.6,-3);
\draw[->] (0,-2) -- (0.5,-2) node[anchor=south] {$-\;\;\;$} -- (1,-2);
\draw[<->] (1,-2) -- (1.4,-2) node[anchor=south] {$+$} -- (1.8,-2);
\draw[<->] (1.8,-2) -- (2,-2) node[anchor=south] {$-$} -- (2.2,-2);
\draw[<->] (2.2,-2) -- (2.6,-2) node[anchor=south] {$+$} -- (3,-2);
\draw[<-] (3,-2) -- (3.5,-2) node[anchor=south] {$\;\;\;-$} -- (4,-2);
\draw[->] (0,-3) -- (0.3,-3) node[anchor=south] {$-$} -- (0.6,-3);
\draw[<->] (0.6,-3) -- (1,-3) node[anchor=south] {$+$} -- (1.4,-3);
\draw[<->] (1.4,-3) -- (2,-3) node[anchor=south] {$-$} -- (2.6,-3);
\draw[<->] (2.6,-3) -- (3,-3) node[anchor=south] {$+$} -- (3.4,-3);
\draw[<-] (3.4,-3) -- (3.7,-3) node[anchor=south] {$-$} -- (4,-3);
\node at (-0.5,-2) {$U_{0}$};
\node at (-0.5,-3) {$U_{0.5}$};
\node at (-0.7,-4) {$U_{-0.7}$};
\draw (0,-4) -- (2,-4)node[anchor=south] {$-$} -- (4,-4);
\end{tikzpicture}
\end{center}
\caption{[Top row] Spectral dispersion plots for some equivariantly homotopic maps $\TT\rightarrow\mathcal{F}^{\rm sa}_*$. [Bottom left] Continuing the homotopy, the Dirac point (eigenvalue crossing) disappears into the lower essential spectrum, and the na\"{i}ve Dirac cone count drops from $1$ to $0$. The correct counting via the `Real' Fermi gerbe shows that its gerbe invariant (product of signs at $\theta=0$ and $\theta=\pi$) remains nontrivial ($-$) throughout, as befits a topological invariant, indicating that the essential spectral gap remains filled up by the family of discrete spectra. [Bottom centre] By considering surfaces of rotation, we can similarly homotope the classic Dirac cone of Fig.\ \ref{fig:phase.function3} into a frustum (truncated cone) with no Dirac point. [Bottom right] Conversely, there may be a Dirac point but no gap-filling: the gerbe invariant is manifestly trivial ($+$) in this example, by considering the single open set $U_{-0.7}=\TT$, for instance.}\label{fig:phase.function}
\end{figure}
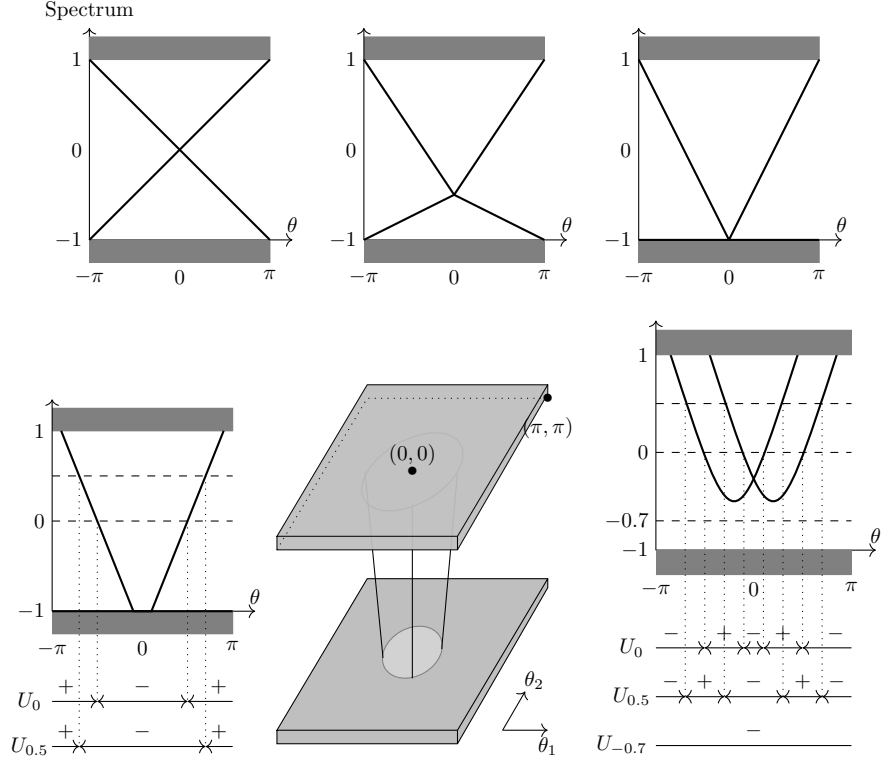

\medskip

Suppose $H_{\rm strong}$ is the Hamiltonian for a strong topological insulator in 3D, that is, its Fermi projection represents the strong generator in $\widetilde{KR}^{-4}(\TT^3)$. Then we saw from Prop.\ \ref{prop:commute.3torus} that for the half space Hamiltonian $\widetilde{H}_{\rm strong}$, its `Real' Fermi gerbe over $\TT^2$ has sign invariants given (up to a global sign) by the map $\tilde{\mathfrak{d}}_{\rm strong}$ which is $-1$ at $(+1,+1)$ and $+1$ elsewhere. Schematically, if we rotate the first diagram of Fig.\ \ref{fig:phase.function3} about the vertical axis through $\theta=0$, we obtain the classic Dirac cone edge spectrum, and we see that the Fermi gerbe has sign invariants given by $\tilde{\mathfrak{d}}_{\rm strong}$. The true spectrum of $\widetilde{H}_{\rm strong}$ will not have such a nice dispersion: the Dirac cone may be deformed, there could be three Dirac cones over the three fixed points other than $(+1,+1)$, or there could be some merging of the Dirac points with the bulk spectrum. The allowed dispersions of $\widetilde{H}_{\rm strong}$ are constrained to have equivalent `Real' Fermi gerbes, and in particular, the sign invariants (up to a global sign) must be preserved.

More generally, $H$ could have some non-trivial combination of weak invariants $\nu_1,\nu_2$ and the strong invariant $\nu_0$; or equivalently, sign maps $\mathfrak{d}_0,\mathfrak{d}_1,\mathfrak{d}_2$. 
Whatever the case may be, the `Real' Fermi gerbe for $\widetilde{H}$ will have non-trivial sign invariants indicating how the bulk spectral gap must be fully bridged by the edge states.


\appendix

\section{Equivariant cohomology}\label{appendix:Borel.Cech}

\subsection{\v{C}ech formulation}\label{appendix:Borel.Cech.untwisted}
Let $G$ be a compact Lie group and $X$ a space with a left continuous action of $G$. 
By definition, the Borel equivariant cohomology $H^n_G(X)$ is defined to be the singular integral cohomology of the Borel construction 
$$
H^n_G(X) = H^n_G(X; \Z) = H^n(EG \times_G X; \Z).
$$
Let us assume that
\begin{quote}
$X$ is a $G$-CW complex.
\end{quote}
This assumption will be good enough for a \v{C}ech cohomology description of the Borel equivariant cohomology, as outlined below:

\begin{enumerate}

\item
In general, if $X_\bullet$ is a simplicial space and $A$ is a finitely generated abelian group, then the singular cohomology $H^*(\lVert X_\bullet \rVert; A)$ of the geometric realisation $\lVert X_\bullet \rVert$ is isomorphic to the singular cohomology $H^*(X_\bullet; A)$ of the simplicial space \cite{Dup}. The Borel construction $EG \times_G X$ is homotopy equivalent to the geometric realisation $\lVert G^\bullet \times X \rVert$ of the simplicial space $G^\bullet \times X$. In more detail, the latter comprises the sequence of spaces $\{ G^p \times X \}_{p \ge 0}$ together with the face maps
$\partial_i : G^{p+1} \times X \to G^{p} \times X$, ($i = 0, \ldots, p+1$) given by
\begin{equation*}
\partial_i(g_1, \ldots, g_{p+1}, x)
=
\left\{
\begin{array}{ll}
(g_2, \ldots, g_{p+1}, x), & (i = 0) \\
(g_1, \ldots, g_{i-1}, g_ig_{i+1}, g_{i+1}, \ldots, g_{p+1}, x), & 
(i = 1, \ldots, p) \\
(g_1, \ldots, g_p, g_{p+1}x), & (i = p+1)
\end{array}
\right.
\end{equation*}
and the degeneracy maps $s_i : G^p \times X \to G^{p+1} \times X$, ($i = 0, \ldots, p$) given by
\begin{equation*}
s_i(g_1, \ldots, g_p, x) = 
(g_1, \ldots, g_i, 1, g_{i+1}, \ldots, g_p, x).
\end{equation*}
These facts lead to isomorphisms:
\begin{equation*}
H^n_G(X) \cong H^n(\lVert G^\bullet \times X \rVert; \Z)
\cong H^n(G^\bullet \times X; \Z),
\end{equation*}
where $H^n(G^\bullet \times X; \Z)$ is the singular integral cohomology of the simplicial space $G^\bullet \times X$. This is the cohomology associated to the double complex $(C^q(G^p \times X; \Z), \delta, \partial)$, where for each $p$, $(C^q(G^p \times X; \Z), \delta)$ is the singular cochain complex of the topological space $G^p \times X$, while \mbox{$\partial : C^q(G^p \times X; \Z) \to C^q(G^{p+1} \times X; \Z)$} is given by $\partial = \sum_{i=0}^{p+1} (-1)^i \partial_i^*$, with $\partial_i : G^{p+1} \times X \to G^p \times X$ the face maps in the simplicial space $G^\bullet \times X$.

\item\label{item:cover}
There is an isomorphism
$$
H^n(G^\bullet \times X; \Z)
\cong \check{H}^n(G^\bullet \times X; \Z).
$$
The right hand side is the \v{C}ech cohomology of the simplicial space $G^\bullet \times X$ with coefficients in the constant sheaf $\Z$, defined as the colimit
$$
\check{H}^n(G^\bullet \times X; \Z)
= \varinjlim_{\U^\bullet} \check{H}^n(\U^\bullet; \Z),
$$
where $\check{H}^n(\U^\bullet; \Z)$ is the \v{C}ech cohomology associated to an open cover $\U^\bullet$ of the simplicial space $G^\bullet \times X$. Such an open cover consists of a sequence of open covers $\U^0, \U^1, \ldots$, where $\U^p = \{ U^p_i \}_{i \in I^p }$ is an open cover of $G^p \times X$, the sequence of sets $\{ I^p \}$ forms a simplicial set, and a compatibility with the face maps is supposed. 

For example, suppose $G$ is finite, and let $\{ U_i \}_{i \in I}$ be an invariant open cover of $X$. Then there is an associated open cover $\U^\bullet$ of $G^\bullet \times X$, in which $\U^p$ of $G^p \times X$ consists of open sets 
$$
U_{(g_1, \ldots, g_p, i)} = \{ (g_1, \ldots, g_p) \} \times U_i
$$
indexed by $(g_1, \ldots, g_p, i) \in G^p \times I$. (Details can be found in \cite{GDeligne}.) 

For each $p = 0, 1, \ldots$, we have the usual \v{C}ech complex $(\check{C}^q(\U^p; \Z), \delta)$. From the compatibility, we can define a homomorphism $\partial : \check{C}^q(\U^p; \Z) \to \check{C}^q(\U^{p+1}; \Z)$. This leads to a double complex $(\check{C}^q(\U^p; \Z), \delta, \partial)$, and its associated cohomology is $\check{H}^*(\U^\bullet; \Z)$.

Mimicking the idea in \cite{B-T}, we may construct a homomorphism
$$
H^n(G^\bullet \times X; \Z)
\to \check{H}^n(G^\bullet \times X; \Z).
$$
Each of the singular cohomology and the \v{C}ech cohomology is defined through a double complex. Hence the above homomorphism becomes an isomorphism when 
$$
H^n(G^p \times X; \Z) \to \check{H}^n(G^p \times X; \Z)
$$
is an isomorphism for all $n, p$, as is the case in our setup.

\item
Henceforth, suppose $G$ is a finite group. Then, we may show that any open cover $\U^\bullet$ of $G^\bullet \times X$ admits a refinement associated to an \emph{invariant} open cover:  Since $G^p \times X$ is paracompact, we may find a refinement $\V^\bullet$ such that the open cover $\V^p = \{ V^p_j \}_{j \in J^p}$ of $G^p \times X$ is locally finite. Being a CW complex, $X$ is (completely) regular. So, for each $x \in X$, we can find an open set $W_x$ such that $x \in W_x \subset V^0_j$ for all $j \in J^0$ such that $x \in V^0_j$. Considering the action of the finite group $G$, we may take $W_x$ to be $G$-invariant, and we would eventually get an invariant open cover $\W$ of $X$ whose associated open cover of $G^\bullet \times X$ refines $\V^\bullet$.

As a result, in the case that $G$ is finite, it is enough to consider \v{C}ech cohomology groups $\check{H}^*(\U^\bullet; \Z)$ of the open covers $\U^\bullet$ associated to invariant open covers $\U$ of $X$.

\item
Suppose further that $G = \Z_2$. Let $\U = \{ U_i \}_{i \in I}$ be an invariant open cover of $X$. We have another double complex $(\check{C}^{p, q}(\U), \delta, \partial)$ comprising \v{C}ech $q$-cochains, with $\delta$ the usual \v{C}ech coboundary, and $\partial:\check{C}^{p,q}(\U)\rightarrow\check{C}^{p+1,q}(\U)$ given by $\partial(\omega)=\omega-(-1)^p\iota^*\omega$. It can be shown that $(\check{C}^{p, q}(\U), \delta, \partial)$ and $(\check{C}^q(\U^p; \Z), \delta, \partial)$ are quasi-isomorphic: Taking a refinement if necessary, we can
assume that $\U^\bullet$ is an open cover of the simplicial space $\Z_2^\bullet \times
X$ of the form described in item \ref{item:cover} above. Then the complex $\partial : \check{C}^q(\U^p; \Z) \to
\check{C}^q(\U^{p+1}; \Z)$ is identified with the cochain complex of
the group $\Z_2$ with coefficients in a $\Z_2$-module. It is
well-known that the cochain complex of a group is quasi-isomorphic to
the \textit{normalized} cochain complex. In the present case, the
normalized cochain complex agrees with $\partial : \check{C}^{p,
q}(\U) \to \check{C}^{p+q, q}(\U)$. Thus, by a spectral sequence
argument, the total complexes for the two double complexes are
quasi-isomorphic.

\item
To summarize, if $X$ is a $\ZZ_2$-CW complex, then
\begin{equation}
H^n_{\Z_2}(X) \cong
\varinjlim_{\U} H^n(\check{C}^{*, *}(\U)),\label{eqn:Borel.Cech}
\end{equation}
where $\U$ runs over invariant open covers of $X$.

\end{enumerate}

If $X$ is a $G$-CW complex with $G$ a compact Lie group, then $G^p \times X$ is a CW complex, thus locally contractible. As a result, we get the exponential exact sequence of sheaves,
\begin{equation}
0 \longrightarrow \Z \longrightarrow \underline{\R} \overset{{\rm exp}\,2\pi i(\cdot)}{\longrightarrow} \underline{\mathbb{T}} \longrightarrow 0,\label{eqn:sheaf.sequence}
\end{equation}
where $\Z$ is the constant sheaf, and $\underline{A}$ means the sheaf of germs of $A$-valued continuous functions.
From this, we get the associated exact sequence
$$
\to
\check{H}^n(G^\bullet \times X; \Z) \to
\check{H}^n(G^\bullet \times X; \underline{\R}) \to
\check{H}^n(G^\bullet \times X; \underline{\mathbb{T}}) \to
\check{H}^{n+1}(G^\bullet \times X; \Z) \to.
$$
Because $G^p \times X$ is paracompact for $p \ge 0$, a spectral sequence argument identifies $\check{H}^n(G^\bullet \times X; \underline{\R})$ with the group cohomology with coefficients in a $G$-module. Then a standard ``averaging argument'' leads to $\check{H}^n(G^\bullet \times X; \underline{\R}) = 0$ for $n \neq 0$, and
\begin{equation}
\check{H}^n(G^\bullet \times X; \underline{\mathbb{T}}) \cong
\check{H}^{n+1}(G^\bullet \times X; \Z)
\cong H^{n+1}_G(X)\label{eqn:Borel.Cech.simplicial}
\end{equation}
for $n \ge 1$. The details of this argument can be found in \cite{GDeligne} (Lemma 4.4). This gives another way to a \v{C}ech formulation of the Borel equivariant cohomology.

\subsection{Local coefficients}\label{appendix:Borel.Cech.twisted}
On a connected CW complex $X$, a choice of a homomorphism $\ell \in \mathrm{Hom}(\pi_1(X), \Z_2) \cong H^1(X; \Z_2)$ makes the group $\Z$ into a module over the fundamental group $\pi_1(X)$, and allows us to ``twist'' the integral cohomology $H^*(X)$ to produce the cohomology $H^{\ell + *}(X)$ with local coefficients \cite{Hatcher}. As an application of this construction, a choice of a homomorphism $\phi : G \to \Z_2$ leads to a $\phi$-twisted version $H^{\phi + *}_G(X)$ of the  Borel equivariant integral cohomology $H^*_G(X)$ of a $G$-CW complex $X$, since $\phi \in \mathrm{Hom}(\pi_1(BG)), \Z_2) \cong H^1(BG; \Z_2)$ leads to an element in $H^1(EG \times_G X; \Z_2)$ by pull-back. In particular, when $G = \Z_2$ and $\phi = \mathrm{id}$ is the identity map, the $\phi$-twisted equivariant cohomology is denoted $H^*_\pm(X) = H^*_{\Z_2}(X; \Z(1)) = H^{\phi + *}_{\Z_2}(X)$.

The exact sequence of sheaves, Eq.\ \eqref{eqn:sheaf.sequence}, can be made $\ZZ_2$-equivariant, with the action on $\ZZ$ being $m\mapsto -m$ (whence the equivariant sheaf is denoted $\ZZ(1)$). To the equivariant sheaf $\Z(1)$ we can associate a sheaf $\tilde{\Z}$ on the simplicial space $\Z_2^\bullet \times X$ as a twisted version of the constant sheaf $\Z$, and the same argument leading to Eq. \eqref{eqn:Borel.Cech.simplicial} gives 
$$
H^n_\pm(X) \cong
\check{H}^n(\Z_2^\bullet \times X; \tilde{\Z}) \cong
\check{H}^n(\Z_2^\bullet \times X; \underline{\tilde{\mathbb{T}}}),
$$
where $\underline{\tilde{\mathbb{T}}}$ is likewise a twisted version of the sheaf $\underline{\mathbb{T}}$ on $\Z_2^\bullet \times X$. See \cite{Gomi} Appendix A, which is based on \cite{Kahn,Gro}, for details.

As in Eq.\ \eqref{eqn:Borel.Cech} for the untwisted case, we may identify $H^n_\pm(X)$ with the (colimit of the) cohomology of the double complex $(\check{C}^{p, q}(\U), \delta, \partial)$ comprising \v{C}ech $q$-cochains, but with \mbox{$\partial:\check{C}^{p,q}(\U)\rightarrow\check{C}^{p+1,q}(\U)$} given by $\partial(\omega)=\omega+(-1)^p\iota^*\omega$.

\subsection{First equivariant cohomology}

Let $X$ be a $\Z_2$-CW complex. In the \v{C}ech formulation of $H^n_{\Z_2}(X)$, Eq.\ \eqref{eqn:Borel.Cech}, a $1$-cochain 
$$
(\kappa_{ij}, \lambda_i) \in
\check{C}^{0, 1}(\U) \oplus \check{C}^{1, 0}(\U)
=
\check{C}^1(\U; \Z) \oplus \check{C}^0(\U; \Z)
$$
consists of locally constant functions $\kappa_{ij} : U_i \cap U_j \to \Z$ and $\lambda_i : U_i \to \Z$. The cocycle condition for $(\kappa_{ij}, \lambda_i)$ reads
\begin{align*}
(\delta \kappa)_{ijk} &= 0, &
(\delta \lambda)_{ij} &= \partial \kappa_{ij}, &
0 &= \partial \lambda_i.
\end{align*}
These conditions are equivalent to:
\begin{align}
\kappa_{jk}(x) - \kappa_{ik}(x) + \kappa_{ij}(x) &= 0, & 
& (x \in U_i \cap U_j \cap U_k)\nonumber \\
\lambda_j(x) - \lambda_i(x) &= \kappa_{ij}(x) - \kappa_{ij}(\iota(x)), &
& (x \in U_i \cap U_j) \nonumber \\
0 &= \lambda_i(x) + \lambda_i(\iota(x)). &
& (x \in U_i)\label{eqn:1cocycle.conditions}
\end{align}

\begin{lem}\label{lem:vanishing.equivariant.H1}
Let $X$ be a $\ZZ_2$-CW complex such that $H^1_{\Z_2}(X) = 0$. Suppose that we have:
\begin{enumerate}
\item
an invariant open cover $\U = \{ U_\mu \}_{\mu \in A}$ of $X$,

\item
locally constant functions $r_{\mu \nu} : U_\mu \cap U_\nu \to \Z$ for $\mu, \nu \in A$ such that 
\begin{itemize}
\item
$r_{\mu \xi}(x) = r_{\mu \nu}(x) + r_{\nu \xi}(x)$ for all $x \in U_{\mu} \cap U_{\nu} \cap U_{\xi}$ and $\mu, \nu, \xi \in A$;

\item
$r_{\mu \nu}(x) = r_{\mu \nu}(\iota(x))$ for all $x \in U_{\mu} \cap U_{\nu}$ and $\mu, \nu \in A$.

\end{itemize}
\end{enumerate}
Then there is an invariant open cover $\V = \{ V_i \}_{i \in I}$ which refines $\U$ and, we can find locally constant functions $\tau_i : V_i \to \mathbb{T}$ satisfying
\begin{align*}
\exp \pi i r_{ij}(x) &= \tau_j(x) \tau_i(x)^{-1}, &
& (x \in V_i \cap V_j) \\
\overline{\tau_i(x)} &= \tau_i (\iota(x)), &
& (x \in V_i)
\end{align*}
where $r_{ij} : V_i \cap V_j \to \Z$ are induced from $r_{\mu \nu}$ through the refinement.
\end{lem}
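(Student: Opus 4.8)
The plan is to recognise the given data $(r_{\mu\nu})$ as a \v{C}ech $1$-cocycle for the \emph{untwisted} Borel equivariant cohomology $H^1_{\ZZ_2}(X)$, and then to use the hypothesis $H^1_{\ZZ_2}(X)=0$ to trivialise it integrally and equivariantly; the trivialising cochain, after multiplication by $\pi i$ and exponentiation, will furnish the desired $\tau_i$. Concretely, I would work with the double complex $(\check{C}^{p,q}(\V),\delta,\partial)$ computing $H^*_{\ZZ_2}(X)$ as in Remark \ref{rem:equivariant_cohomology} and Appendix \ref{appendix:Borel.Cech.untwisted}, where $\delta$ is the usual \v{C}ech differential and $\partial c = c-(-1)^p\iota^*c$ in additive notation with coefficients in $\ZZ$. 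The first hypothesis on $r$, that $r_{\mu\xi}=r_{\mu\nu}+r_{\nu\xi}$ on triple intersections, says precisely that $(r_{\mu\nu})\in\check{C}^{0,1}(\U)$ is $\delta$-closed; the second hypothesis, $\iota$-invariance $r_{\mu\nu}\circ\iota=r_{\mu\nu}$, says that at filtration degree $p=0$ we have $\partial(r_{\mu\nu})=r_{\mu\nu}-\iota^*r_{\mu\nu}=0$. Hence $(r_{\mu\nu},0)\in\check{C}^{0,1}(\U)\oplus\check{C}^{1,0}(\U)$ is a cocycle in the total complex, so it represents a class in $H^1_{\ZZ_2}(X)$, which vanishes by assumption.

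By the colimit description of $H^1_{\ZZ_2}(X)$ over invariant open covers (Appendix \ref{appendix:Borel.Cech.untwisted}, Eq.\ \eqref{eqn:Borel.Cech}), the vanishing of this class means that, after passing to a sufficiently fine \emph{invariant} refinement $\V=\{V_i\}_{i\in I}$ of $\U$, the pulled-back cocycle $(r_{ij},0)$ becomes a total coboundary: there are locally constant functions $s_i\in\check{C}^{0,0}(\V)=\check{C}^0(\V;\ZZ)$ with $\delta(s_i)=(r_{ij})$ and $\partial(s_i)=0$, that is
\[
s_j(x)-s_i(x)=r_{ij}(x)\ \ (x\in V_i\cap V_j),\qquad s_i(\iota(x))=s_i(x)\ \ (x\in V_i).
\]
Set $\tau_i:=\exp(\pi i\, s_i):V_i\to\TT$, which is locally constant because $s_i$ is. Then $\tau_j\tau_i^{-1}=\exp(\pi i(s_j-s_i))=\exp(\pi i\, r_{ij})$, giving the first identity. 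For the second, since $s_i$ is integer-valued and $\iota$-invariant,
\[
\overline{\tau_i(x)}=\exp(-\pi i\, s_i(x))=\exp(\pi i\, s_i(x))=\exp(\pi i\, s_i(\iota(x)))=\tau_i(\iota(x)),
\]
where the middle equality uses that $\exp(\pi i\, n)\in\{\pm1\}$ is real for $n\in\ZZ$. This establishes the lemma.

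The argument is essentially formal, so there is no deep obstacle; the points needing care are (i) that $(r_{\mu\nu},0)$ is a cocycle for the \emph{untwisted} Borel differential $\partial c=c-(-1)^p\iota^*c$ rather than for the twisted differential of $H^*_\pm$, which is exactly where the $\iota$-invariance hypothesis on $r_{\mu\nu}$ enters, and (ii) that the refinement $\V$ can be chosen to be invariant and, at the same time, fine enough for the trivialising cochain $(s_i)$ to exist, which is legitimate precisely because $H^1_{\ZZ_2}(X)$ may be computed as a colimit over invariant covers (Appendix \ref{appendix:Borel.Cech.untwisted}). Once $(s_i)$ is in hand, the reality condition on the $\tau_i$ is automatic, since each $\exp(\pi i\, s_i)$ is $\{\pm1\}$-valued.
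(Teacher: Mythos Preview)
Your proof is correct and follows essentially the same approach as the paper: recognise $(r_{\mu\nu},0)$ as a $1$-cocycle in the double complex computing $H^1_{\ZZ_2}(X)$, pass to an invariant refinement on which it becomes a coboundary $(\delta m,\partial m)=(r,0)$ with $m_i:V_i\to\ZZ$, and set $\tau_i=\exp(\pi i\, m_i)$. Your closing remarks on the points needing care (untwisted versus twisted $\partial$, and the legitimacy of the invariant refinement) are apt and match the paper's reasoning.
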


\begin{proof}
The locally constant functions $r_{\mu \nu}$ define a $1$-cocycle 
$$
(\kappa_{\mu \nu}, \lambda_\mu) = (r_{\mu \nu}, 0) 
\in \check{C}^{0, 1}(\U) \oplus \check{C}^{1, 0}(\U). 
$$
By the assumption $0 = H^1_{\Z_2}(X) \cong \varinjlim H^1(\check{C}^{*, *}(\U))$, there exists an invariant open cover $\V = \{ V_i \}_{i \in I}$ which refines $\U$ and $H^1(\check{C}^{*, *}(\V)) = 0$. Let $(r_{ij}, 0) \in \check{C}^{0, 1}(\V) \oplus \check{C}^{1, 0}(\V)$ be the induced $1$-cocycle. Because of the vanishing $H^1(\check{C}^{*, *}(\V)) = 0$, the $1$-cocyle is a coboundary, so that there exists 
$$
(m_i) \in \check{C}^{0, 0}(\V) = \check{C}^0(\V; \Z)
$$
such that 
\begin{align*}
(\delta m)_{ij} &= r_{ij}, &
\partial m_i &= 0.
\end{align*}
Thus, we have locally constant functions $m_i : V_i \to \Z$ for $i \in I$ such that
\begin{align*}
m_j(x) - m_i(x) &= r_{ij}(x), &
&(x \in V_i \cap V_j) \\
m_i(x) - m_i(\iota(x)) &= 0. &
&(x \in V_i)
\end{align*}
Now, $\tau_i : V_i \to \mathbb{T}$ is given by $\tau_i(x) = \exp \pi i m_i(x)$. \end{proof}

\begin{rem}\label{rem:vanishing.equivariant.H1}
If we assume that $X$ is connected, then the choice of $(m_i)$ can be constrained: Let $(m'_i)$ be another choice which cobounds the $1$-cocycle $(r_{ij}, 0)$. The difference cocycle $d_i = m'_i - m_i$ gives rise to global constant $d \in \Z$. Hence the difference $\tau'(x) / \tau(x) = \exp \pi i d = \pm 1$ is a global sign.
\end{rem}

\subsubsection{Other formulations of first equivariant cohomology}

Let $\TT_{\rm triv}$ be the unit circle in the complex plane equipped with the trivial involution, and $\TT_{\rm conj}$ be the one equipped with the complex conjugation involution.

\begin{lem}\label{lem:H1.equivariant}
Let $X$ be a $\ZZ_2$-CW complex. Then there are isomorphisms of groups
\begin{align*}
H^1_{\Z_2}(X) &\cong [X, \mathbb{T}_{\mathrm{triv}}]_{\Z_2},\\
H^1_{\pm}(X) &\cong [X, \mathbb{T}_{\mathrm{conj}}]_{\Z_2}.
\end{align*}
\end{lem}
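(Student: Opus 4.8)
The plan is to use the \v{C}ech description of the equivariant cohomology groups established in Appendix \ref{appendix:Borel.Cech.untwisted} and \ref{appendix:Borel.Cech.twisted}, and to compare the resulting cocycle data directly with the combinatorial description of equivariant homotopy classes of maps into $\TT$. I would treat the twisted case $H^1_\pm(X)\cong[X,\TT_{\rm conj}]_{\ZZ_2}$ first, since that is the one actually invoked in Lemma \ref{lem:U.to.SU}; the untwisted statement is entirely parallel, with a sign change in the differential $\partial$. Both follow the same two-step template: (i) realise an equivariant map $X\to\TT$ by a \v{C}ech $0$-cochain via the exponential sequence, and (ii) identify when two such cochains differ by a coboundary with when the corresponding maps are equivariantly homotopic.

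For step (i): given a $\ZZ_2$-equivariant map $f:X\to\TT_{\rm conj}$, i.e.\ $f(\iota(x))=\overline{f(x)}$, choose an invariant open cover $\U=\{U_i\}$ fine enough that each $f|_{U_i}$ lifts along $\exp(2\pi i(\cdot)):\underline{\RR}\to\underline{\TT}$ to a continuous $g_i:U_i\to\RR$. Then $\kappa_{ij}:=g_i-g_j$ is a locally constant $\ZZ$-valued \v{C}ech $1$-cochain, and the equivariance $f\circ\iota=\overline f$ means $g_i+\iota^*g_i\in\ZZ$, giving a locally constant $\lambda_i:U_i\to\ZZ$ with $\lambda_i=g_i+\iota^*g_i$ (this uses the \emph{twisted} differential $\partial\omega=\omega+(-1)^p\iota^*\omega$ of Appendix \ref{appendix:Borel.Cech.twisted}). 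One checks $(\kappa_{ij},\lambda_i)\in\check C^{0,1}(\U)\oplus\check C^{1,0}(\U)$ satisfies the cocycle conditions: $\delta\kappa=0$ and $\delta\lambda=\partial\kappa$ are immediate from the definitions of $g_i$, and $\partial\lambda_i=\lambda_i-\iota^*\lambda_i=0$ since $\lambda_i=g_i+\iota^*g_i$ is itself $\iota$-invariant. Conversely, from such a $1$-cocycle one reconstructs an equivariant $f$ by patching the local functions $\exp(2\pi i g_i)$ where $g_i$ is any real lift with $g_i-g_j=\kappa_{ij}$ and $g_i+\iota^*g_i=\lambda_i$; these are compatible on overlaps. (For $H^1_{\ZZ_2}$ and $\TT_{\rm triv}$ one repeats this verbatim, replacing the condition $f\circ\iota=\overline f$ by $f\circ\iota=f$ and using $\partial'\omega=\omega-(-1)^p\iota^*\omega$ from Remark \ref{rem:equivariant_cohomology}; then $\lambda_i=g_i-\iota^*g_i$.)

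For step (ii): an equivariant homotopy $f_t$ from $f_0$ to $f_1$ gives, after lifting on a common invariant cover, a path of real lifts $g_i(\cdot,t)$, hence a common $\kappa_{ij}$ up to a $\delta$ of locally constant functions, exhibiting $(\kappa^0,\lambda^0)$ and $(\kappa^1,\lambda^1)$ as cohomologous; conversely a coboundary $(m_i)\in\check C^{0,0}(\U)$ with $\delta m=\kappa^1-\kappa^0$, $\partial m=\lambda^1-\lambda^0$ lets one linearly interpolate the real lifts and exponentiate, producing an equivariant homotopy. Compatibility with the group structures is clear, since pointwise multiplication of maps into $\TT$ corresponds to addition of real lifts and hence of \v{C}ech cochains.

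The main obstacle is purely bookkeeping: making sure the cocycle and coboundary conditions extracted from the exponential sequence match, \emph{on the nose}, the conditions Eq.\ \eqref{eqn:1cocycle.conditions} written down in Appendix \ref{appendix:Borel.Cech.untwisted}, in particular tracking the $(-1)^p$ signs in $\partial$ versus $\partial'$ and confirming that it is exactly the $\partial$ with the ``$+$'' sign (local system $\ZZ(1)$) that corresponds to the anti-invariance $f\circ\iota=\overline f$ of a map into $\TT_{\rm conj}$, and the ``$-$'' sign to the invariance into $\TT_{\rm triv}$. Once the dictionary is pinned down, both isomorphisms are immediate; alternatively, one can bypass the explicit cocycle matching entirely by noting that $\TT_{\rm conj}$ is an Eilenberg--MacLane object for the equivariant sheaf $\ZZ(1)$ in degree $1$ (and $\TT_{\rm triv}$ for $\ZZ$), so that $[X,\TT_{\rm conj}]_{\ZZ_2}$ is by definition $\check H^1$ with coefficients in $\underline{\tilde\TT}$, which is $H^1_\pm(X)$ by Appendix \ref{appendix:Borel.Cech.twisted}.
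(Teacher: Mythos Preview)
Your approach is correct in spirit and rests on the same underlying mechanism as the paper's proof---the exponential sequence $0\to\ZZ\to\underline{\RR}\to\underline{\TT}\to 0$---but the implementation differs. The paper works abstractly: it uses the long exact sequence in equivariant \v{C}ech cohomology, invokes the vanishing $\check{H}^1(G^\bullet\times X;\underline{\RR})=0$ to get surjectivity of the connecting map $\delta:C(X,\TT)^G\to H^1_G(X)$, then uses the homotopy axiom (applied to $X\times[0,1]$) to show $\delta$ descends to homotopy classes, and finally checks injectivity by lifting a map with $\delta(f)=0$ to a global invariant real function. Your explicit cocycle dictionary is exactly what the paper records in the Remark \emph{after} the proof, but the paper deliberately avoids running the argument at that level.

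One point in your outline deserves more care: in the converse direction of step (i), you assert that given a cocycle $(\kappa_{ij},\lambda_i)$ one can find real $g_i$ with $g_i-g_j=\kappa_{ij}$ \emph{and} $g_i+\iota^*g_i=\lambda_i$ simultaneously. The first condition alone is handled by ordinary softness of $\underline{\RR}$, but arranging the second at the same time is precisely the surjectivity statement, and it needs the vanishing of the \emph{equivariant} $\check{H}^1$ with coefficients in $\underline{\RR}$ (the ``averaging argument'' the paper cites). Your write-up treats this as automatic; the paper's abstract route makes this step transparent. Your closing remark about $\TT_{\rm conj}$ being an equivariant $K(\ZZ(1),1)$ is a legitimate shortcut, though neither you nor the paper spells out why that representability holds in the $\ZZ_2$-CW category.
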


\begin{proof}
A proof of $H^1_{\pm}(X) \cong [X, \mathbb{T}_{\mathrm{conj}}]_{\Z_2}$ is given in Prop.\ A.2 of \cite{Gomi}. The proof of the first statement is similar, and detailed below. Write $G=\ZZ_2$ for brevity. We have $H^1_G(X) = H^1_G(X; \Z)\cong\check{H}^1(G^\bullet \times X; \Z)$, and the long exact sequence
$$
\to
\check{H}^n(G^\bullet \times X; \Z) \to
\check{H}^n(G^\bullet \times X; \underline{\R}) \to
\check{H}^n(G^\bullet \times X; \underline{\mathbb{T}}) \overset{\delta}{\to}
\check{H}^{n+1}(G^\bullet \times X; \Z) \to
$$
induced by the short exact sequence of sheaves, Eq.\ \eqref{eqn:sheaf.sequence}.

For $n=0$, the zeroth \v{C}ech cohomology is the group of global sections of the coefficient sheaf. Since $G^p \times X$ is paracompact and $G$ is finite, $\check{H}^1(G^\bullet \times X; \underline{\R}) = 0$, thus we have
$$
C(X, \Z)^G \to
C(X, \R)^G \to
C(X, \mathbb{T})^G \overset{\delta}{\to}
H^1_G(X) \to
0,
$$
where $C(X, A)$ stands for the group of continuous functions $f : X \to A$, and $C(X, A)^G$ is the subgroup consisting of invariant functions: $f(gx) = f(x)$ for all $x \in X$ and $g \in G$. Note that the group of integers $A = \Z$ is given the discrete topology, while $A = \R, \mathbb{T}$ are given the standard topologies.

The homomorphism $\delta : C(X, \mathbb{T})^G \to H^1_G(X)$ descends to give a homomorphism $\delta : [X, \mathbb{T}_{\mathrm{triv}}]_G \to H^1_G(X)$. To see this, let $\pi : X \times [0, 1] \to X$ be the projection, and $i_t : X \to X \times [0, 1]$ the map $i_t(x) = (x, t)$ for each $t \in [0, 1]$. By the homotopy axiom, $i_t^* : H^1_G(X \times [0, 1]) \to H^1_G(X)$ is an isomorphism for any $t \in [0, 1]$ and its inverse is $\pi^* : H^1_G(X) \to H^1_G(X \times [0, 1])$. From the commutative diagram
$$
\begin{CD}
C(X \times [0, 1], \mathbb{T})^G @>{i_t^*}>> C(X, \mathbb{T})^G \\
@V{\delta}VV @VV{\delta}V \\
H^1_G(X \times [0, 1]) @>{i_t^*}>> H^1_G(X),
\end{CD}
$$
it follows that $\delta(i_0^*\tilde{f}) = i_0^* \delta(\tilde{f}) = i_1^*\delta(\tilde{f}) = \delta(i_1^*\tilde{f})$ for all $\tilde{f} \in C(X, \mathbb{T})^G$. This means that homotopic maps in $C(X, \mathbb{T})^G$ have the same image under $\delta$, and we get a well-defined $\delta : [X, \mathbb{T}_{\mathrm{triv}}]_G \to H^1_G(X)$.

Now, let $[f] \in [X, \mathbb{T}_{\mathrm{triv}}]_G$ be the element represented by $f \in C(X, \mathbb{T})^G$. Suppose that $\delta([f]) = \delta(f) = 0$. By the exact sequence, there is a continuous map $h : X \to \R$ such that $f(x) = \exp 2\pi i h(x)$ and $h(gx) = h(x)$ for all $x \in X$ and $g \in G$. If we define $\tilde{f} : X \times [0, 1] \to \mathbb{T}$ by $\tilde{f}(x, t) = \exp 2 \pi i t h(x)$, then $\tilde{f}$ is an equivariant homotopy between $f$ and the constant map $1$. Therefore the epimorphism 
\begin{equation*}
\delta : [X, \mathbb{T}_{\mathrm{triv}}]_G \to H^1_G(X)
\end{equation*}
has trivial kernel, thus it is an isomorphism.
\end{proof}

\begin{rem}
We can describe the homomorphism $\delta : [X, \mathbb{T}_{\mathrm{triv}}]_{\Z_2} \to H^1_{\Z_2}(X)$ explicitly. Let $\U = \{ U_i \}_{i \in I}$ be an invariant open cover. Taking a refinement if necessary, we can assume that $\U$ is such that $H^1(U_i) = 0$. Given $\varphi \in C(X, \mathbb{T})^{\Z_2}$, there are continuous maps $\tilde{\varphi}_i : U_i \to \R$ such that $\varphi(x) = \exp 2\pi \sqrt{-1}\tilde{\varphi}_i(x)$ for $x \in U_i$. Now, by setting 
\begin{align*}
\kappa_{ij}(x) &= \tilde{\varphi}_j(x) - \tilde{\varphi}_i(x), &
\lambda_i(x) &= \tilde{\varphi}_i(x) - \tilde{\varphi}_i(\iota(x)),
\end{align*}
we get locally constant functions $\kappa_{ij} : U_i \cap U_j \to \Z$ and $\lambda_i : U_i \to \Z$ satisfying the 1-cocycle conditions, Eq.\ \eqref{eqn:1cocycle.conditions}, thus giving a \v{C}ech $1$-cohomology class in $H^1_{\Z_2}(X)$. A similar construction applies to $[X, \mathbb{T}_{\mathrm{conj}}]_{\Z_2} \to H^1_\pm(X)$.
\end{rem}

\section{Equivariant cohomology of the fixed point set}\label{appendix:equiv.cohom.fixed}

\begin{lem} \label{lem:equivariant_cohomology_of_fixed_point_set}
Let $X$ be a space with involution $\iota : X \to X$. If the involution is trivial, $X^\iota = X$, then there is a natural isomorphism of groups
\begin{align*}
H^n_\pm(X) 
&\cong \bigoplus_{k \ge 0} H^{n - 2k - 1}(X; \Z_2) \\
&= H^{n-1}(X; \Z_2) \oplus H^{n-3}(X; \Z_2) \oplus H^{n-5}(X; \Z_2) \oplus 
\cdots.
\end{align*}
\end{lem}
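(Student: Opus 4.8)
The plan is to compute $H^n_\pm(X)$ when $\iota$ acts trivially by exploiting the \v{C}ech description of $H^n_\pm(X)$ from Appendix \ref{appendix:Borel.Cech.twisted}, or equivalently by using the sheaf-theoretic/spectral-sequence machinery over the simplicial space $\ZZ_2^\bullet\times X$. When the involution is trivial, the double complex $(\check C^{p,q}(\U),\delta,\partial)$ computing $H^{n}_\pm(X)$ has $\partial c=c^{(-1)^p}\cdot\iota^*c=c^{(-1)^p}\cdot c$, so $\partial$ on $\check C^{p,q}$ is multiplication by $c\mapsto c^{1+(-1)^p}$, i.e.\ $\partial$ is $c\mapsto c^2$ for $p$ even and $c\mapsto 1$ (the trivial map) for $p$ odd. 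First I would run the spectral sequence of this double complex in the $p$-direction: for fixed $q$, the complex $\underline\TT\xrightarrow{c\mapsto c^2}\underline\TT\xrightarrow{1}\underline\TT\xrightarrow{c\mapsto c^2}\cdots$ has cohomology at each stage given by the $2$-torsion and the cokernel of squaring in $\TT$. Since $\TT$ is divisible, squaring is surjective with kernel $\ZZ_2=\{\pm1\}$; hence the sheaf complex has cohomology $\underline{\ZZ_2}$ in odd $p$-degrees ($p=1,3,5,\ldots$) and $0$ in even positive $p$-degrees, with the $p=0$ term contributing $\underline\TT/(\underline\TT)^2=0$ as well (again by divisibility). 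Thus the $E_1$ (or $E_2$) page of the spectral sequence is concentrated along $p=1,3,5,\ldots$ with coefficient sheaf $\underline{\ZZ_2}$, the constant sheaf $\ZZ_2$ on $X$.

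Next I would identify the resulting contributions with ordinary cohomology and handle convergence and collapse. Using the identification $H^n_\pm(X)\cong \check H^{n-1}(\ZZ_2^\bullet\times X;\underline{\widetilde\TT})$ (the shift by one coming from the exponential sequence, exactly as in Eq.\ \eqref{eqn:Borel.Cech.simplicial}), the spectral sequence of the double complex converges to $H^{n}_\pm(X)$ with $E_2^{p,q}=H^q(X;\underline{\ZZ_2})=H^q(X;\ZZ_2)$ for $p$ odd and $0$ otherwise, after accounting for the degree shift. Concretely, the total degree $n$ receives contributions from $(p,q)$ with $p+q$ matching $n$ (up to the shift), and since $p$ runs over $1,3,5,\ldots$, the $q$-degrees contributing to $H^n_\pm(X)$ are $n-1,n-3,n-5,\ldots$, giving the claimed sum $\bigoplus_{k\ge0}H^{n-2k-1}(X;\ZZ_2)$. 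I would then argue the spectral sequence degenerates: the differentials $d_r$ for $r\ge2$ must vanish because the $E_2$-page is supported on a set of columns $p\in\{1,3,5,\ldots\}$ whose pairwise horizontal distances are all even ($\ge2$), while $d_r$ shifts $(p,q)\mapsto(p+r,q-r+1)$; a short parity/degree bookkeeping check (differentials of odd total-degree-shift can only connect the supported columns in ways that one checks are zero, e.g.\ because $d_2$ lands in a column of the wrong parity or because the relevant groups are already the stable ones) forces collapse. The extension problems are all over $\ZZ_2$-vector spaces once we know each graded piece is an $\mathbb F_2$-vector space, but actually the cleanest route is to observe the associated graded already splits naturally, which yields the direct-sum decomposition rather than a filtration.

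An alternative, perhaps cleaner, execution: realize $H^n_\pm(X)$ for trivial involution directly via $H^n_\pm(X)=H^n_{\ZZ_2}(X;\ZZ(1))=H^n(B\ZZ_2\times X;\ZZ(1))=H^n(\RR P^\infty\times X;\ZZ(1))$ where $\ZZ(1)$ is the local system pulled back from the nontrivial one on $\RR P^\infty$. Then apply the Leray--Hirsch / universal coefficients computation for twisted cohomology of $\RR P^\infty$: $H^*(\RR P^\infty;\ZZ(1))$ is $\ZZ_2$ in odd degrees and $0$ in even degrees (a standard computation, e.g.\ from the Gysin sequence of $S^\infty\to\RR P^\infty$ or the cellular chain complex $\ZZ\xrightarrow{2}\ZZ\xrightarrow{0}\ZZ\xrightarrow{2}\cdots$ of the twisted coefficients), and then the Künneth theorem over the base $\RR P^\infty$ with these $\ZZ_2$ coefficients gives $H^n_\pm(X)\cong\bigoplus_{i+j=n}H^i(\RR P^\infty;\ZZ(1))\otimes H^j(X;\ZZ_2)=\bigoplus_{k\ge0}H^{n-2k-1}(X;\ZZ_2)$, with no extension issues since everything is an $\mathbb F_2$-module and the relevant $\mathrm{Tor}$ terms also only involve $\ZZ_2$-modules (so they can be absorbed, or better: since $H^*(\RR P^\infty;\ZZ(1))$ has no even-degree part, the odd part is all $2$-torsion and the Künneth spectral sequence over $\ZZ_2$ collapses). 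Naturality of the isomorphism is automatic from naturality of the Künneth maps and of the coefficient identification. The main obstacle I anticipate is bookkeeping the degree shift and the collapse/extension argument carefully so that the splitting is genuinely natural and not just a non-canonical isomorphism of graded groups; the second (Künneth over $\RR P^\infty$) approach mostly sidesteps this, so that is the one I would write up, falling back to the spectral sequence of the double complex only to justify the twisted-cohomology-of-$\RR P^\infty$ input if a self-contained \v{C}ech-level argument is preferred.
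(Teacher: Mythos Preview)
Your first approach contains a concrete error and an unjustified step. With $\iota$ trivial and $\underline{\TT}$-coefficients, the horizontal complex
\[
\underline{\TT}\xrightarrow{\,(\cdot)^2\,}\underline{\TT}\xrightarrow{\,1\,}\underline{\TT}\xrightarrow{\,(\cdot)^2\,}\cdots
\]
has cohomology $\underline{\ZZ_2}$ at \emph{even} $p$ (kernel of squaring) and $0$ at odd $p$ (since $\TT$ is divisible), not the other way round; your ``$p=0$ term contributing $\underline{\TT}/(\underline{\TT})^2$'' is the cokernel at $p=1$, not the kernel at $p=0$. With the correct parity and the degree shift $H^n_\pm\cong\check H^{n-1}(\cdots;\underline{\widetilde\TT})$ one does land on $q=n-1,n-3,\ldots$, but your written reasoning (odd $p$) would give $q=n-2,n-4,\ldots$, which is wrong. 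More seriously, your degeneration argument fails: $d_2:E_2^{p,q}\to E_2^{p+2,q-1}$ \emph{preserves} the parity of $p$, so it can and does connect supported columns; nothing you wrote rules out $d_2,d_4,\ldots$ being nonzero. One can force collapse by a multiplicative argument (periodicity under cup with the generator of $H^2_{\ZZ_2}(\pt;\ZZ)$), but this still leaves an extension problem that ``everything is an $\mathbb F_2$-vector space'' does not by itself resolve, since you have not yet shown $H^n_\pm(X)$ is $2$-torsion.

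The paper bypasses all of this with a one-line observation you are close to but do not make: work with the $\ZZ$-coefficient double complex $(C^{p,q},\delta,\partial)$, $C^{p,q}=C^q(X;\ZZ)$, $\partial c=(1+(-1)^p)c$. Then $\partial=2$ on even columns and $\partial=0$ on odd columns, so the total complex \emph{splits as a direct sum}
\[
\mathrm{Tot}\;\cong\;\bigoplus_{k\ge 0}\mathrm{Cone}\bigl(2:C^*(X;\ZZ)\to C^*(X;\ZZ)\bigr)[-2k-1],
\]
the $k$th summand being columns $2k,2k{+}1$. Since the cone of multiplication by $2$ on $C^*(X;\ZZ)$ computes $H^*(X;\ZZ_2)$, this gives the claimed decomposition directly, with naturality and no spectral sequence, collapse, or extension issues. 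Your second approach via $H^n(\RR P^\infty\times X;\ZZ(1))$ is essentially the same computation in disguise (the cellular filtration of $\RR P^\infty$ reproduces exactly this double complex), but the ``K\"unneth with local coefficients'' you invoke is not a standard statement and would itself require the Serre spectral sequence plus the collapse and extension arguments you were trying to avoid.
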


\begin{proof}
The $\Z_2$-equivariant cohomology $H^n_\pm(X)$ with local coefficients can be identified with the $n$th cohomology of the total complex associated to the double complex $(C^{p, q}, \delta, \partial)$, where $C^{p, q} = C^q(X; \Z)$ is the (\v{C}ech) cochain complex of $X$ computing the integral cohomology of $X$, $\delta : C^{p, q} \to C^{p, q+1}$ its coboundary operator, and $\partial : C^{p, q} \to C^{p+1, q}$ the map defined by $\partial c = c + (-1)^p \iota^*c$. 

Under the assumption $X^\iota = X$, we have $\partial c = c + (-1)^p c$. This leads us to decompose the total complex $C^*$ of $C^{p, q}$ as
$$
C^* =
\bigoplus_{k \ge 0}
\mathrm{Cone}( 2 : C^*(X; \Z) \to C^*(X; \Z))[-2k-1],
$$
where $\mathrm{Cone}( f : D^* \to E^* )$ stands for the cone complex associated to a cochain map $f : D^* \to E^*$, and $[m]$ means the degree shift of a cochain complex by $m$. The cohomology of $\mathrm{Cone}(2 : C^*(X; \Z) \to C^*(X; \Z))$ is naturally isomorphic to the mod $2$ cohomology $H^*(X; \Z_2)$, and the proof is completed. 
\end{proof}

\begin{rem}
A similar proof also shows that
\begin{align*}
H^n_{\Z_2}(X) 
&\cong H^n(X; \Z) \oplus \bigoplus_{k \ge 1} H^{n - 2k}(X; \Z_2) \\
&= H^n(X; \Z) \oplus H^{n-2}(X; \Z_2) \oplus H^{n-4}(X; \Z_2) \oplus 
\cdots.
\end{align*}
\end{rem}

\subsection*{Acknowledgments}
KG is supprted by Japan JSPS KAKENHI Grant Numbers 20K03606 and JP17H06461. GCT is supported by Australian Research Council Discovery Projects Grant DP200100729, and thanks J. Kellendonk for helpful correspondence.

\end{document}